\documentclass[11pt]{article}

\usepackage{appendix}
\usepackage{amssymb,amsmath,amsfonts,dsfont}
\usepackage{graphicx}
\usepackage{subfig}
\usepackage[figuresright]{rotating}
\usepackage[applemac]{inputenc}
\usepackage{graphicx}
\usepackage{dcolumn}
\usepackage{bm}
\usepackage{amscd,amsthm}
\usepackage{ifthen}
\usepackage{booktabs}
\usepackage{bm}

\usepackage[skip=0pt]{caption}

\usepackage{hyperref}
\hypersetup{
	bookmarks=true,         
	unicode=false,          
	pdftoolbar=true,        
	pdfmenubar=true,        
	pdffitwindow=false,     
	pdfstartview={FitH},    
	pdfauthor={Tobit Klug},     
	pdfsubject={Subject},   
	pdfcreator={Tobit Klug},   
	pdfproducer={Producer}, 
	pdfnewwindow=true,      
	colorlinks=true,       
	linkcolor=red,          
	citecolor=green,        
	filecolor=magenta,      
	urlcolor=cyan           
}
\usepackage{breakurl}

\usepackage[
backend=bibtex,
url=false,isbn=false,doi=false,eprint=false,
date=year,
hyperref=auto,
style=alphabetic,
natbib=true,
sorting=nty,
firstinits=true,
maxnames=2, maxbibnames=10,
]{biblatex}

\bibliography{bibliography.bib} 
\AtEveryBibitem{%
  \clearname{translator}%
  \clearlist{publisher}%
  \clearfield{pagetotal}%
  \clearname{editor}
  \clearfield{pages}
  \clearfield{series}
  \clearlist{address}
  \clearfield{address}
  \clearname{address}
  \clearname{volume}
  \clearfield{volume}
  \clearname{number}
  \clearfield{number}  
} 

\usepackage{tikz}
\usepackage{mathtools}
\usepackage{pgfplots}
\usepgfplotslibrary{groupplots} 
\usetikzlibrary{pgfplots.groupplots}
\usetikzlibrary{matrix,arrows,decorations.pathmorphing}
\usepackage{pgfplots,pgfplotstable}
\usepgfplotslibrary{fillbetween}
\usetikzlibrary{shadows,arrows}
\usetikzlibrary{positioning}

\usepackage{changepage} 

\usetikzlibrary {arrows.meta} 
\usetikzlibrary{fit} 

\usepgfplotslibrary{colorbrewer}
\pgfplotsset{compat=newest}

\usetikzlibrary{calc}
\usetikzlibrary{positioning}

\usepackage{graphics}
\usepackage{comment}
\usepackage{readarray} 
\usepackage{caption} 
\usepackage{changepage}

\usepackage{hyperref}       
\usepackage{url}            
\usepackage{amsfonts}       
\usepackage{nicefrac}       
\usepackage{microtype}      

\newdimen\nodeDist
\usepackage{url}
\usepackage{breakurl}
\usepackage{outlines} 
\usepackage{readarray} 
\usepackage{changepage} 

\newtheorem{lemma}{Lemma}

\newtheorem{theorem}{Theorem}

\newtheorem{proposition}{Proposition}

\usepackage[vlined,ruled,linesnumbered]{algorithm2e}

\setlength{\oddsidemargin}{0pt}
\setlength{\evensidemargin}{0pt}
\setlength{\textwidth}{6.5in}
\setlength{\topmargin}{0in}
\setlength{\textheight}{8.5in}


\definecolor{darkblue}{RGB}{11, 24, 120}
\definecolor{darkgray}{RGB}{125, 121, 125}
\definecolor{nicepurple}{RGB}{169, 62, 199}
\definecolor{nicegreen}{RGB}{34, 156, 44}
\definecolor{lightgreen}{RGB}{122, 245, 133}
\definecolor{c0}{rgb}{0.2298057,0.298717966,0.753683153}
\definecolor{c1}{rgb}{0.3634607953411765,0.4847836818509804,0.9010188868941177}
\definecolor{c2}{rgb}{0.5108243242509803,0.6493966148235294,0.9850787763764707}
\definecolor{c3}{rgb}{0.6672529243333334,0.7791764569999999,0.992959213}
\definecolor{c4}{rgb}{0.8049647588235295,0.8516661605568627,0.9261650744313725}
\definecolor{c5}{rgb}{0.9193759889058823,0.8312727235294118,0.7828736304470588}
\definecolor{c6}{rgb}{0.968203399,0.7208441,0.6122929913333334}
\definecolor{c7}{rgb}{0.9440545734235294,0.5531534787490197,0.4355484903137255}
\definecolor{c8}{rgb}{0.8523781350078431,0.34649194649411763,0.2803464686980392}
\definecolor{c9}{rgb}{0.705673158,0.01555616,0.150232812}

\definecolor{c0new}{rgb}{0.2298057,0.298717966,0.753683153}
\definecolor{c1new}{rgb}{0.3383765114431373,0.45281860883137254,0.8793170768784313}
\definecolor{c2new}{rgb}{0.4570464785254902,0.5940055499294118,0.963029229690196}
\definecolor{c3new}{rgb}{0.5814861481882353,0.7134505955294117,0.9983143529411764}
\definecolor{c4new}{rgb}{0.7087196897176471,0.8057213889294117,0.9811168090470588}
\definecolor{c5new}{rgb}{0.8180564934117647,0.8555896775450981,0.9146376165490196}
\definecolor{c6new}{rgb}{0.9094595977529412,0.8393864797647058,0.8003313524235294}
\definecolor{c7new}{rgb}{0.9616447383764706,0.7580291825411765,0.6617823791647058}
\definecolor{c8new}{rgb}{0.963806056435294,0.6341884145294118,0.5137208491529413}
\definecolor{c9new}{rgb}{0.9182816725843137,0.48417347218039214,0.37779392507058823}
\definecolor{c10new}{rgb}{0.8301865219490197,0.30473276355294115,0.25489142806666665}
\definecolor{c11new}{rgb}{0.705673158,0.01555616,0.150232812}

\definecolor{c0new300}{rgb}{0.2298057,0.298717966,0.753683153}
\definecolor{c1new300}{rgb}{0.34832334141176474,0.4657111465098039,0.8883461629411764}
\definecolor{c2new300}{rgb}{0.48385432959999997,0.6220498496,0.9748082026}
\definecolor{c3new300}{rgb}{0.6193179451882354,0.7441207347647059,0.9989309188196078}
\definecolor{c4new300}{rgb}{0.753610618,0.830232851,0.960871157}
\definecolor{c5new300}{rgb}{0.8674276350862745,0.864376599772549,0.8626024620196079}
\definecolor{c6new300}{rgb}{0.9473454036,0.7946955048,0.7169905058}
\definecolor{c7new300}{rgb}{0.9684997476666667,0.673977379772549,0.5566492560470588}
\definecolor{c8new300}{rgb}{0.9318312966,0.5190855232,0.4064796086}
\definecolor{c9new300}{rgb}{0.8393649370784314,0.32185622094117644,0.26492398098039216}
\definecolor{c10new300}{rgb}{0.705673158,0.01555616,0.150232812}

\definecolor{c0newbs}{rgb}{0.2298057,0.298717966,0.753683153}
\definecolor{c1newbs}{rgb}{0.4358148063058824,0.5707073031529412,0.951717381282353}
\definecolor{c2newbs}{rgb}{0.6672529243333334,0.7791764569999999,0.992959213}
\definecolor{c3newbs}{rgb}{0.8674276350862745,0.864376599772549,0.8626024620196079}
\definecolor{c4newbs}{rgb}{0.968203399,0.7208441,0.6122929913333334}
\definecolor{c5newbs}{rgb}{0.9057834780117647,0.4551856921647059,0.35533588384705883}
\definecolor{c6newbs}{rgb}{0.705673158,0.01555616,0.150232812}

\newcommand\SSconst{M}
\newcommand\SCconst{m}
\newcommand\N{N}
\newcommand\f{f}
\newcommand\stepsize{\alpha}

\newcommand\ssloss{\mc L_{\mathrm{SS}}}

\newcommand\lossfunc{\ell}
\newcommand\risk{R}
\newcommand\vnoise{\vz}

\newcommand\lossfuncss{\ell_{\mathrm{SS}} }


\usepackage{rh_defs_21}

\definecolor{turq}{rgb}{0.0,1.0,1.0}
\definecolor{matblue}{rgb}{0.5568627450980392,0.44313725490196076,1.0}
\definecolor{purp}{rgb}{1.0,0.0,1.0}

\def\plotwidth{9.5cm}
\def\plotheight{5.6cm}
\def\plotwidthHist{5.8cm}

\def\supcolor{turq}
\def\selfsupcolorOne{matblue}
\def\selfsupcolorTwo{purp}
\def\noisiernoisecolor{lightgreen}
\def\neighborneighborcolor{nicegreen}

\def\markersize{2.0pt}
\def\markerthree{triangle*}
\def\linewidths{0.8pt}

\begin{document}

\begin{center}
	
	{\bf{\LARGE{
				Analyzing the Sample Complexity of Self-Supervised Image Reconstruction Methods
	}}}
	
	\vspace*{.2in}
	
	{\large{
			\begin{tabular}{cccc}
				Tobit~Klug, Dogukan~Atik, and Reinhard~Heckel
			\end{tabular}
	}}
	
	\vspace*{.05in}
	
	\begin{tabular}{c}
	School of Computation, Information and Technology\\ Technical University of Munich 
	\end{tabular}

	\vspace*{.1in}

	\today
	
	\vspace*{.1in}
	
\end{center}

\begin{abstract}
Supervised training of deep neural networks on pairs of clean image and noisy measurement achieves state-of-the-art performance for many image reconstruction tasks, but such training pairs are difficult to collect. Self-supervised methods enable training based on noisy measurements only, without clean images. 
In this work, we investigate the cost of self-supervised training in terms of sample complexity for a class of self-supervised methods that enable the computation of unbiased estimates of gradients of the supervised loss, including noise2noise methods. 
We analytically show that a model trained with such self-supervised training is as good as the same model trained in a supervised fashion, but self-supervised training requires more examples than supervised training. 
We then study self-supervised denoising and accelerated MRI empirically and characterize the cost of self-supervised training in terms of the number of additional samples required, and find that the performance gap between self-supervised and supervised training vanishes as a function of the training examples, at a problem-dependent rate, as predicted by our theory.  
\end{abstract}

\section{Introduction}
\label{sec:intro}

Deep neural networks trained in a supervised fashion to map a noisy measurement to a clean image achieve state-of-the-art performance for image reconstruction tasks including image denoising~\cite{zhangDenoiserDeepCNN2017,gurrola-ramosResidualDenseUNet2021}, image super-resolution~\cite{dongImageSuperResolutionUsing2016,liangSwinIRImageRestoration2021} and accelerated magnetic resonance imaging (MRI)~\cite{hammernikLearningVariationalNetwork2018,fabianHUMUSNet2022}. 

However, collecting clean training images is sometimes not possible, and is often expensive and time-consuming. For example, to collect clean target images for denoising is difficult since a sensor in a camera only collects noisy images~\cite{plotzBenchmarkingDenoisingAlgorithms2017}. 
To collect target images for MRI is difficult as it requires acquiring fully sampled data with long scan times, which introduces difficulties like increased motion artifacts.

This has motivated research on self-supervised methods that enable the training of neural networks from noisy measurements of images only.
Self-supervised approaches are generally based on constructing a self-supervised loss. 
For example, noise2noise~\cite{lehtinenNoise2NoiseLearningImage2018} constructs a self-supervised loss based on a noisy image and a second noisy observation of the same image. Noisier2noise~\cite{moranNoisier2NoiseLearningDenoise2020}, noise2self~\cite{batsonNoise2SelfBlindDenoising2019}, and neighbor2neighbor~\cite{huangNeighbor2NeighborSelfSupervisedDenoising2021} construct losses based on a single noisy image for denoising. 
One might expect that a model trained with a self-supervised loss computed from noisy images performs worse than the same model trained in a supervised manner. And indeed, for image reconstruction problems a model trained with many of the self-supervised losses (in particular noisier2noise, noise2self, and neighbor2neighbor) performs worse than a model trained with a supervised loss, even when abundant training data is available. 

However, with infinite training data, noise2noise~\cite{lehtinenNoise2NoiseLearningImage2018} self-supervised training promises to achieve the performance of supervised training, since the noise2noise loss enables the computation of unbiased estimates of the gradients of the supervised loss.

In this paper, we study a class of self-supervised methods including noise2noise based on constructing unbiased estimates of the gradients of the supervised loss. 
We characterize the cost of self-supervised training in terms of the number of training examples required to achieve the same performance as supervised training. 
Our contributions are:

\begin{itemize}
    \item 
    \textbf{Finite sample theory.} We start by viewing the noise2noise loss as enabling computation of unbiased estimates of the gradients of the supervised loss. Based on this view, we characterize the sample complexity of noise2noise like self-supervised training. We find that the risk of a method trained in a noise2noise-like fashion as a function of training examples approaches the optimal risk at the same rate as supervised training (i.e., at rate $1/N$, where $N$ is the number of training examples). However, to reach the same performance as with supervised training more training examples are required. The amount of extra training examples required is a function of the self-supervised loss and of the reconstruction problem.

    \item 
    \textbf{Empirical sample complexity of self-supervised denoising.}
    We empirically characterize the performance of noise2noise self-supervised training for Gaussian and camera-noise image denoising as a function of the training examples, and find that, as predicted by theory, once the training set size becomes sufficiently large noise2noise training yields a denoiser essentially on par with supervised training. 
    For Gaussian denoising, we also characterize the performance of noisier2noise and neighbor2neighbor like training as a function of the number of training examples, and find, again as expected, models trained with a noisier2noise and neighbor2neighbor self-supervised loss perform worse than models trained in a supervised manner, even when abundant training data is available. 

    \item 
    \textbf{Empirical sample complexity of self-supervised compressive sensing.}
    Finally, we characterize the performance of noise2noise-like self-supervised training similar to the approaches from~\citet{yamanSelfsupervisedLearningPhysicsguided2020} and \citet{millardTheoreticalFrameworkSelfsupervised2023} for compressive sensing reconstruction 
    as a function of the training examples. 
    Again the performance gap between models trained with a self-supervised and supervised loss goes to zero as a function of the training examples at a problem dependent rate. 
    Perhaps surprisingly, the performance gap is vanishing already for small training set sizes because the gradients constructed from the self-supervised loss have a similar variance than the gradients constructed from the supervised loss.
\end{itemize}

Together, our results show that networks trained with a self-supervised loss based on computing unbiased estimates of the gradients of the supervised loss perform as well as model trained in a supervised fashion at the cost of additional training examples required. 

\paragraph{Related work.}
We consider self-supervised learning methods since they enable training neural networks for problems where clean examples are not available or are scarce. We discuss the most related works throughout the paper. 
Another training technique that is useful when clean data is scarce are data augmentation techniques~\cite{desaiVORTEXPhysicsDrivenData2022,fabianDataAugmentationDeep2021}. 
Another class of methods that is interesting for the regime, where no clean data is available are zero-shot methods that are not based on any data, such as deep-image-prior based methods~\cite{ulyanovDeepImagePrior2018,heckelCompressiveSensingUntrained2020,heckelDeepDecoder2019,darestaniAcceleratedMRIUnTrained2021}. 
There is also a variety of zero shot methods that rely partly on self-supervised loss functions, such as~\citet{yamanZeroShotSelfSupervisedLearning2022}'s method for compressive sensing, noise2fast~\cite{lequyerFastBlindZeroshot2022}, and zero-shot noise2noise~\cite{mansourZeroShotNoise2NoiseEfficient2023}. 

\section{Background on self-supervised learning based on estimates of gradients of the supervised loss}
\label{sec:background}

Let $f_\vtheta \colon \reals^m \to \reals^n$ be a neural network with parameters $\vtheta$ for estimating an image $\vx \in \reals^n$ based on a measurement $\vy \in \reals^m$. Our goal is to find a neural network $f_\vtheta$ that has a small risk
\begin{align}
\label{eq:risk}
\risk(\vtheta) = \EX[(\vx,\vy)]{  \lossfunc( f_\vtheta(\vy) , \vx ) }.
\end{align}
Here, expectation is over the signal-measurement pairs $(\vx,\vy)$ (for example a clean image $\vx$ and a noisy measurement $\vy = \vx + \vnoise$), and $\lossfunc$ is a supervised loss, which we take as the mean-squared error.  

Since the data distribution is unknown, we can't compute and minimize the risk. In supervised training, we approximate the risk with an empirical risk computed based on $\N$ pairs of training examples. 
This requires pairs of ground-truth signal $\vx$ and associated measurements $\vy$. 

We consider self-supervised training with a self-supervised loss that, unlike the supervised loss, does not require ground-truth images. Instead, it depends on another, randomized measurement of the ground-truth image, denoted by $\vy'$. 
We are given $\N$ pairs of a randomized measurement and original measurement $(\vy'_1,\vy_1),\ldots,(\vy'_\N,\vy_\N)$, and train a network to reconstruct a clean image from the original measurement by minimizing 
\begin{align}
\label{eq:SSselfsupervisedloss}
\ssloss(\vtheta)
= 
\frac{1}{\N} \sum_{i=1}^\N \lossfuncss( f_\vtheta(\vy_i) , \vy_i'  ).
\end{align}
We consider loss functions $\lossfuncss$, like the noise2noise loss discussed below, with the property that in expectation over the training data, a gradient of the self-supervised loss is also a gradient of the risk $\risk(\vtheta)$, and thus with sufficiently many training examples, we expect a network trained with such a self-supervised loss to achieve the same performance as when trained in a supervised manner. 

However, the `noise' induced by relying on randomized measurements of the original image instead of relying on the original image increases the variance of the gradients computed from the self-supervised loss. Therefore, self-supervised training requires more training examples to yield a network on par with supervised training. 
Next, we discuss two self-supervised losses, one for denoising and one for compressive sensing in the context of MRI.


\subsection{Noise2noise loss for denoising}\label{sec:n2n_loss_denoising}
For denoising, our goal is to train a neural network $f_\vtheta$ to  estimate an image $\vx$ from a noisy observation $\vy = \vx + \vnoise$, where $\vnoise$ is additive Gaussian, non-Gaussian, or even structured noise. 
Noise2noise assumes 
access to pairs of noisy observations $(\vy_1,\vy_1'), \ldots,(\vy_\N,\vy_\N')$, where $\vy_i = \vx_i + \vnoise_i$ and $\vy_i' = \vx_i + \ve_i$. 
Here, $\ve_i$ is zero-mean noise, independent of but not necessarily of the same distribution as the noise $\vnoise_i$. 
\citet{lehtinenNoise2NoiseLearningImage2018} introduced the 
 (noise2noise) self-supervised loss
\begin{align}
\label{eq:lossdenoisingN2N}
 \lossfuncss( f_\vtheta(\vy) , \vy'  )
 = 
 \norm[2]{ f_\vtheta(\vy) - \vy' }^2,
\end{align}
which aims at finding a network that predicts one noisy observation of an image based on another one, therefore the name noise2noise. 

In expectation, a minimizer of the self-supervised loss is also a minimizer of the associated risk~\eqref{eq:risk}, as formalized by the proposition below.  
Thus, training with the self-supervised loss~\eqref{eq:lossdenoisingN2N} with infinitely many training examples is as good as supervised training with infinitely many training examples.  The proof is in Appendix~\ref{sec:proof_propositions}.

\begin{proposition}
\label{prop:N2Nl2}
Suppose that a signal $\vx$ and a corresponding measurement $\vy$ are drawn from a joint distribution, and let $\vy' = \vx + \ve$ be another randomized measurement of the signal. Assume that the noise $\ve$ 
is uncorrelated with the residual, i.e.,
$
\EX[(\vx,\vy,\ve)]{ \transp{(f_\vtheta(\vy) - \vx)} \ve } = 0, \quad \text{ for all } \vtheta.
$
Then, the minimizer $\vtheta$ of the self-supervised risk $\EX[(\vy,\vy')]{ \norm[2]{ f_\vtheta(\vy) - \vy' }^2 }$ is also a minimizer of the supervised risk $\EX[(\vx,\vy)]{ \norm[2]{ f_\vtheta(\vy) - \vx}^2}$.
\end{proposition}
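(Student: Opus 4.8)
The plan is to show that the self-supervised risk and the supervised risk differ only by an additive constant that does not depend on $\vtheta$; since adding a $\vtheta$-independent constant does not change where a function attains its minimum, the two risks then share exactly the same set of minimizers, which is the claim.

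First I would substitute the definition $\vy' = \vx + \ve$ into the self-supervised loss and expand the squared norm about the residual $f_\vtheta(\vy) - \vx$, writing $f_\vtheta(\vy) - \vy' = (f_\vtheta(\vy) - \vx) - \ve$ so that
\[
\norm[2]{ f_\vtheta(\vy) - \vy' }^2
=
\norm[2]{ f_\vtheta(\vy) - \vx }^2
- 2\, \transp{(f_\vtheta(\vy) - \vx)} \ve
+ \norm[2]{ \ve }^2 .
\]
Taking the expectation of both sides over the joint distribution of $(\vx, \vy, \ve)$ --- which, after the substitution $\vy' = \vx + \ve$, is precisely the expectation over $(\vy, \vy')$ appearing in the self-supervised risk --- and using linearity of expectation splits the right-hand side into three terms.

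The middle term is exactly the quantity assumed to vanish: by the uncorrelatedness hypothesis, $\EX[(\vx,\vy,\ve)]{ \transp{(f_\vtheta(\vy) - \vx)} \ve } = 0$ for every $\vtheta$, so this cross term is identically zero as a function of $\vtheta$. The last term $\EX[\ve]{ \norm[2]{ \ve }^2 }$ is a constant independent of $\vtheta$. Hence
\[
\EX[(\vy,\vy')]{ \norm[2]{ f_\vtheta(\vy) - \vy' }^2 }
=
\EX[(\vx,\vy)]{ \norm[2]{ f_\vtheta(\vy) - \vx }^2 }
+ \EX[\ve]{ \norm[2]{ \ve }^2 },
\]
and minimizing the left-hand side over $\vtheta$ is therefore equivalent to minimizing the supervised risk on the right.

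I do not expect a genuine obstacle here; the content is entirely in the uncorrelatedness assumption, which is tailored precisely to annihilate the cross term. The only points requiring a little care are bookkeeping ones: justifying that the expectation over $(\vy,\vy')$ may be rewritten as an expectation over the full joint distribution of $(\vx,\vy,\ve)$, and noting that it is the vanishing of the cross term \emph{for every} $\vtheta$ (not merely at one point) that lets us drop it from the objective without affecting the location of the minimizer.
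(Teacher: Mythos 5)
Your proposal is correct and matches the paper's own proof in Appendix~\ref{sec:proof_propositions} essentially line for line: both expand $\norm[2]{f_\vtheta(\vy)-\vy'}^2$ about the residual $f_\vtheta(\vy)-\vx$, invoke the uncorrelatedness hypothesis to annihilate the cross term for every $\vtheta$, and observe that $\EX{\norm[2]{\ve}^2}$ is a $\vtheta$-independent additive constant. If anything, your version is slightly more careful than the paper's (you keep the factor $2$ on the cross term, which the paper's displayed expansion drops, harmlessly, since the term vanishes anyway).
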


Noise2noise self-supervised training 
has been applied to many domains including biomedical imaging~\cite{buchholzCryoCAREContentAwareImage2019,beplerTopazDenoiseGeneralDeep2020,huangRealtimeNoiseReduction2021}, channel estimation~\cite{zhangChannelEstimationHighSpeed2023} or acoustic sensing~\cite{lapinsDASN2NMachineLearning2023}. 


\subsection{Compressive sensing}
\label{sec:n2n_loss_mri}

For accelerated MRI, our goal is to train a network $f_\vtheta$ to reconstruct an image $\vx$ from an undersampled measurement $\vy = \mM \mF \vx$ in the frequency domain. Here, $\mM \in \reals^{n\times n}$ is an undersampling mask (i.e., a diagonal matrix with zeros and ones on its diagonal), and $\mF \in \complexset^{n\times n}$ is the Fourier transform. 

We do not have access to a ground-truth image $\vx$. Instead, we are given an undersampled measurement $\vy = \mM \mF \vx$ as well as a second randomized measurement of the same image, obtained as $\vy' = \mM' \mF \vx$, where $\mM' \in \reals^{n\times n}$ is a randomized mask which is zero or one on its diagonal, and is one with non-zero probability, so that $\mW = \EX{\mM'}^{-1/2}$ exists. From this measurement, we can compute a self-supervised loss defined as 
\begin{align}
\label{eq:sslossMRI}
\lossfuncss(f_\vtheta(\vy), \vy')
=
\norm[2]{\mW(\mM'\mF f_\vtheta(\vy) - \vy' ) }^2,
\end{align}
where $\mW = \EX{\mM'}^{-1/2}$ is a diagonal weighting mask. 
The loss is evaluated only on the frequencies that are given through the randomized mask $\mM'$ 
and are weighted by how often a frequency occurs in the randomized mask $\mM'$ (thus, the multiplication with the weighting mask $\mW$). 

Analogous as for the noise2noise loss for denoising, in expectation, a minimizer of the self-supervised loss is also a minimizer of the associated risk, as formalized by the following proposition. The proof is in Appendix~\ref{sec:proof_propositions}.

\begin{proposition}
\label{prop:unsupervisedMRI}
Suppose that a signal $\vx$ and a corresponding measurement $\vy$ are drawn from some joint distribution (e.g., as $\vy = \mM \mF \vx$, where $\mM$ is a mask), and let $\vy' = \mM'\mF \vx$ be an independent measurement taken with a randomized mask $\mM'$ with 0's or 1's on it's diagonal and non-zero probability of a one on the diagonal. 
Then a minimizer $\vtheta$ of the self-supervised risk 
$\EX[(\vy, \vy')]{\norm[2]{\mW(\mM'\mF f_\vtheta(\vy) - \vy' ) }^2}$ with $\mW = \EX{\mM'}^{-1/2}$ is also a minimizer of the supervised risk $\EX[(\vx,\vy)]{ \norm[2]{f_\vtheta(\vy) - \vx }^2}$. 
\end{proposition}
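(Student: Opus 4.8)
The plan is to prove the stronger fact that the self-supervised risk and the supervised risk are \emph{equal} as functions of $\vtheta$; the statement about minimizers then follows immediately. First I would substitute the definition $\vy' = \mM' \mF \vx$ into the self-supervised loss. By linearity the argument of the norm collapses,
\begin{align}
\mM' \mF f_\vtheta(\vy) - \vy' = \mM' \mF \left( f_\vtheta(\vy) - \vx \right),
\end{align}
so the integrand equals $\norm[2]{\mW \mM' \mF(f_\vtheta(\vy) - \vx)}^2$.

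Next I would exploit that the randomized mask $\mM'$ is drawn independently of the pair $(\vx, \vy)$, and split the expectation with the tower rule, integrating over $\mM'$ with $(\vx,\vy)$ held fixed. This reweighting is the crux of the argument, and it is cleanest to carry it out coordinate-wise. Writing $u_i$ for the $i$-th entry of the frequency-domain error $\mF(f_\vtheta(\vy) - \vx)$, and letting $m_i \in \{0,1\}$ and $w_i = (\EX{m_i})^{-1/2}$ denote the $i$-th diagonal entries of $\mM'$ and $\mW$, the idempotence $m_i^2 = m_i$ gives
\begin{align}
\EX[\mM']{\norm[2]{\mW \mM' \mF(f_\vtheta(\vy) - \vx)}^2} = \sum_i w_i^2\, \EX{m_i}\, |u_i|^2 = \sum_i |u_i|^2.
\end{align}
The middle equality is exactly where the choice $\mW = \EX{\mM'}^{-1/2}$ cancels the expected mask; here is where the hypothesis that every frequency is sampled with non-zero probability enters, ensuring each $w_i$ is finite.

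Finally I would invoke unitarity of the (normalized) Fourier transform to turn the frequency-domain energy back into image-domain error, $\sum_i |u_i|^2 = \norm[2]{\mF(f_\vtheta(\vy) - \vx)}^2 = \norm[2]{f_\vtheta(\vy) - \vx}^2$, and take the outer expectation over $(\vx,\vy)$ to conclude $\ssloss(\vtheta) = \risk(\vtheta)$ for every $\vtheta$. The step I expect to demand the most care is the reweighting computation: one must keep track that $\mW$ and $\mM'$ are diagonal and hence commute, and that, since $\mF$ is complex, the squared norm is a sum of squared moduli rather than squares; reducing everything to the scalar sum above sidesteps this bookkeeping. It is worth noting that, unlike in the denoising proposition, no uncorrelatedness assumption is needed and no additive constant appears, because here both measurements share the same clean signal $\vx$ with randomness only in the sampling mask, which is why the two risks coincide exactly rather than up to a $\vtheta$-independent offset.
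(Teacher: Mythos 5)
Your proposal is correct and follows essentially the same route as the paper's proof: substitute $\vy' = \mM'\mF\vx$, factor out $\mM'\mF$, reduce to the coordinate-wise computation $\sum_j w_j^2\,\EX{m_j}\,|u_j|^2 = \sum_j |u_j|^2$ using $m_j^2 = m_j$ and $\mW = \EX{\mM'}^{-1/2}$, and conclude via unitarity of $\mF$. If anything, you make explicit two steps the paper leaves implicit — the idempotence of the Bernoulli mask entries and the need for squared moduli since $\mF$ is complex — which is a welcome clarification rather than a deviation.
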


A variety of works consider training a neural network with a self-supervised loss similar to~\eqref{eq:sslossMRI} 
~\cite{lehtinenNoise2NoiseLearningImage2018,yamanSelfsupervisedLearningPhysicsguided2020,wangNeuralNetworkBasedReconstruction2020,huangUnsupervisedDeepLearning2020,huSelfsupervisedLearningMRI2021,zhouDSFormerDualDomainSelfSupervised2023,millardTheoreticalFrameworkSelfsupervised2023}. 
However, none of those works studied the sample complexity. 
Further, \citet{lehtinenNoise2NoiseLearningImage2018} train with a noise2noise loss, but sample new masks $\mM$ and $\mM'$ in every training epoch, which requires access to fully sampled measurements. 
Similar to our setup, \citet{yamanSelfsupervisedLearningPhysicsguided2020} and \citet{millardTheoreticalFrameworkSelfsupervised2023} fix the set of undersampled measurements per image available for creating the masks $\mM$ and $\mM'$. 
However, \citet{yamanSelfsupervisedLearningPhysicsguided2020} and \citet{millardTheoreticalFrameworkSelfsupervised2023} construct a noiser2noise-like self-supervised loss, i.e., the input mask $\mM$ implements a higher undersampling factor on the training inputs than the undersampling factor used at inference. We assume the mask $\mM$ to have the same distribution at training and inference, which is important for Proposition~\ref{prop:unsupervisedMRI} to hold. 

\section{Finite sample theory for self-supervised denoising}
\label{sec:theory_main}

We start by studying the performance of a denoiser trained with a self-supervised noise2noise loss analytically in the finite sample regime. 
We measure performance in terms of the risk 
$R(\vtheta) = \EX{ \norm[2]{f_\vtheta(\vy) - \vx}^2}$ and learn an estimator $f_\vtheta$ 
by minimizing the self-supervised objective function~\eqref{eq:SSselfsupervisedloss}, where we take the noise2noise loss~\eqref{eq:lossdenoisingN2N} as the self-supervised loss. 

\paragraph{Finite-sample risk bound for linear denoising.} 
We consider a simple linear denoising problem, where the joint distribution of the image and corresponding measurement is as follows. The signal $\vx$ is drawn from a $d$-dimensional linear subspace according to $\vx = \mU \vc$, where $\mU \in \reals^{n\times d}$ is an orthonormal basis for the subspace, and where $\vc \sim \mc N(0,1/d \mI)$. Thus, for large $d$, the vector is drawn approximately uniformly from the intersection of the subspace with the unit sphere. 
We draw a measurement as $\vy = \vx + \vz$, where $\vz \sim \mc N(0,\sigma_z^2/n \mI)$ is Gaussian noise, and then draw a second measurement as $\vy'= \vx + \ve$, where $\ve\sim \mc N(0, \sigma_e^2/n \mI)$ is the target noise. 
With this scaling, the expected SNR of the denoising problem is $1/\sigma_z^2$. 

We consider a linear estimator of the form $f_\mW(\vy) = \mW \vy$. Note that the optimal linear estimator (i.e., the estimator that minimizes the risk) is 
$
\mW^\ast = \frac{1}{1 + \sigma_z^2\frac{d}{n}} \mU \transp{\mU} 
$. 
The optimal estimator projects the measurement onto the subspace and shrinks the projected measurement by a noise-dependent factor.

We provide a bound on the expected risk of the estimator that is learned by minimizing the self-supervised loss function~\eqref{eq:SSselfsupervisedloss} by running the stochastic gradient method for one epoch. Given a dataset $\setD = \{(\vy_1,\vy_1'), \ldots, (\vy_\N,\vy_\N') \}$ consisting of pairs of two noisy measurements $\vy_i = \vx_i + \vz_i, \vy_i' = \vx_i + \ve_i$ drawn i.i.d. from the distribution specified above, we start the stochastic gradient method at $\mW_0 = 0$ and update
\begin{align*}
    \mW_{k+1} = \mW_k - \stepsize_k \nabla_\mW \norm[2]{ \mW \vy_k - \vy_k' }^2, 
\end{align*}
for $k=1,\ldots,\N$. Here, $\stepsize_k$ is the stepsize. 

\begin{theorem}
\label{thm:linear}
Consider the estimate $\mW_\N$ obtained by running the SGM for $\N$ iterations on the training set $\setD$ with a decaying stepsize $\stepsize_k = \frac{1}{c+k}$, where $c$ is a constant. 
Then, the expected generalization error, where expectation is over the random training set $\setD$, obeys
\begin{align}
\label{eq:theorybound}
\EX{
R(\mW_\N))} 
\leq
R(\mW^\ast)
+
\frac{1/d + \sigma_z^2/n}{ (\sigma_z^2/n)^2 }
\frac{1}{\N-2} \left( 2 +   \left( 12 \sigma_z^2 \frac{d}{n} + \sigma_e^2 (1+\sigma_z^2) \right)^2 \right).
\end{align}
\end{theorem}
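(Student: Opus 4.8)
The plan is to treat the prescribed update as one pass of SGD on a strongly convex quadratic, invoke a standard stochastic-approximation recursion, and obtain the problem-specific constants from an exact Gaussian moment computation. First I would pass to the population objective. Writing $\vy=\vx+\vz$ and $\vy'=\vx+\ve$ with $\vx,\vz,\ve$ independent and zero mean, a direct expansion gives $\EX{\norm[2]{\mW\vy-\vy'}^2}=R(\mW)+\sigma_e^2$, so the self-supervised population loss and the risk share the minimizer $\mW^\ast=\frac{1}{1+\sigma_z^2 d/n}\mU\transp{\mU}$ (the quantitative form of Proposition~\ref{prop:N2Nl2}). Both are quadratics in $\mW$ with curvature matrix $\mathbf{C}=\EX{\vy\transp{\vy}}=\frac1d\mU\transp{\mU}+\frac{\sigma_z^2}{n}\mI$, whose eigenvalues lie in $[m,M]$ with $m=\sigma_z^2/n$ and $M=1/d+\sigma_z^2/n$; indeed $R(\mW)-R(\mW^\ast)=\operatorname{tr}\big((\mW-\mW^\ast)\mathbf{C}\transp{(\mW-\mW^\ast)}\big)$ and $\nabla_\mW R(\mW)=2(\mW-\mW^\ast)\mathbf{C}$. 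The stochastic gradient used by the method is $\vg_k=2(\mW_k\vy_k-\vy_k')\transp{\vy_k}$, and since $\EX{\vy_k\transp{\vy_k}}=\mathbf{C}$ and $\EX{\vy_k'\transp{\vy_k}}=\frac1d\mU\transp{\mU}$ one checks $\EX{\vg_k\mid\mW_k}=\nabla R(\mW_k)$, i.e.\ the method is unbiased SGD on $R$.

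Next I would set up the distance recursion for $D_k=\norm[F]{\mW_k-\mW^\ast}^2$. Expanding $\mW_{k+1}=\mW_k-\stepsize_k\vg_k$, taking the conditional expectation, and using unbiasedness together with the lower eigenvalue bound (which yields $\langle\mW_k-\mW^\ast,\nabla R(\mW_k)\rangle\ge 2mD_k$) gives
\begin{equation*}
\EX{D_{k+1}\mid\mW_k}\le(1-4m\stepsize_k)D_k+\stepsize_k^2\,\EX{\norm[F]{\vg_k}^2\mid\mW_k}.
\end{equation*}

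The main work — and the step I expect to be the real obstacle — is bounding the second moment $\EX{\norm[F]{\vg_k}^2\mid\mW_k}=4\,\EX{\norm[2]{\mW_k\vy_k-\vy_k'}^2\norm[2]{\vy_k}^2\mid\mW_k}$. This is a product of two quadratic forms in the jointly Gaussian vector $(\vx,\vz,\ve)$, so it is \emph{not} a uniformly bounded gradient; instead I would split the residual as $\mW_k\vy_k-\vy_k'=(\mW_k-\mW^\ast)\vy_k+(\mW^\ast\vy_k-\vy_k')$ and derive an affine bound $\EX{\norm[F]{\vg_k}^2\mid\mW_k}\le c_1 D_k+c_2$. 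Here $c_1$ is controlled by the top eigenvalue $M$ and the fourth moments of $\vy_k$, while $c_2$ is the gradient variance at the optimum, $c_2=4\,\EX{\norm[2]{\mW^\ast\vy-\vy'}^2\norm[2]{\vy}^2}$. Evaluating $c_2$ exactly via Isserlis' theorem (Wick's rule) for the Gaussian fourth moments — the product-of-traces term producing the squared factor and the covariance term the additive one — is where the explicit expression $2+\big(12\sigma_z^2\frac dn+\sigma_e^2(1+\sigma_z^2)\big)^2$ arises; carrying the constants cleanly through this calculation is the crux.

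Finally I would close the recursion and translate back to the risk. Substituting the affine second-moment bound gives the scalar recursion $\EX{D_{k+1}}\le(1-4m\stepsize_k+c_1\stepsize_k^2)\EX{D_k}+c_2\stepsize_k^2$; with $\stepsize_k=1/(c+k)$ and the constant $c$ chosen so the contraction term dominates the $\stepsize_k^2$ term from some index onward, a standard Robbins--Monro induction yields $\EX{D_\N}\le \frac{c_2}{m^2}\frac{1}{\N-2}$, where the $1/m^2=1/(\sigma_z^2/n)^2$ factor is the usual strong-convexity penalty and the shift $\N-2$ is an artifact of the chosen $c$. Applying the smoothness identity $R(\mW)-R(\mW^\ast)\le M\,\norm[F]{\mW-\mW^\ast}^2$ in expectation then gives $\EX{R(\mW_\N)}-R(\mW^\ast)\le M\,\EX{D_\N}=\frac{1/d+\sigma_z^2/n}{(\sigma_z^2/n)^2}\frac{c_2}{\N-2}$, which is exactly~\eqref{eq:theorybound}.
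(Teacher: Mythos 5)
Your overall architecture matches the paper's proof: one-pass unbiased SGD on the strongly convex quadratic risk, an affine second-moment bound $\EX{\norm[2]{G(\mW)}^2}\le \SSconst^2\norm[F]{\mW-\mW^\ast}^2+B^2$ obtained by splitting off the $\ve$-contribution by independence and then decomposing $\mW=(\mW-\mW^\ast)+\mW^\ast$ inside the Gaussian fourth-moment terms, a Robbins--Monro recursion (the paper's Lemma~\ref{sgd:convergencebound}), and the smoothness step $R(\mW)-R(\mW^\ast)\le(1/d+\sigma_z^2/n)\norm[F]{\mW-\mW^\ast}^2$. However, your closing step as stated would not produce the claimed bound, for two concrete reasons. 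First, you drop the initialization term and consequently mis-attribute the additive ``$2$'' in~\eqref{eq:theorybound}. The recursion with $\mW_0=0$ yields $\EX{\norm[F]{\mW_\N-\mW^\ast}^2}\le\frac{1}{\N-2}\bigl(2\frac{\SSconst^2}{\SCconst^2}e_0+\frac{B^2}{\SCconst^2}\bigr)$ with $e_0=\norm[F]{\mW^\ast}^2$: the $2$ in the theorem's constant is exactly this initialization term, while $\bigl(12\sigma_z^2\frac{d}{n}+\sigma_e^2(1+\sigma_z^2)\bigr)^2$ is the variance constant $B^2$ carried through $1/\SCconst^2$. Your claimed output $\EX{D_\N}\le\frac{c_2}{\SCconst^2}\frac{1}{\N-2}$ contains no $e_0$ term, so no Isserlis/Wick evaluation of the gradient variance at the optimum can ``produce'' the $2$; moreover, an exact Wick computation is unnecessary here --- the paper gets $B$ from crude Frobenius-norm bounds on the fourth moments ($2R_1+2R_2$ with $2R_2\le 12\frac{d}{n}\sigma_z^2$), not an exact evaluation.

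Second, the stepsize must be scaled in amplitude by $1/\SCconst$, not merely shifted by $c$. With the literal $\stepsize_k=1/(c+k)$ your per-step contraction is $1-4\SCconst\stepsize_k$, and since $\SCconst=\sigma_z^2/n\ll 1$ (e.g.\ $\SCconst=10^{-4}$ in the paper's simulations), the product $\prod_j\bigl(1-4\SCconst/(c+j)\bigr)$ decays only like $k^{-4\SCconst}$; both the initialization term and the accumulated variance sum then decay at rate $\N^{-\Theta(\sigma_z^2/n)}$ rather than $1/\N$, and no choice of the shift $c$ repairs this --- the $1/\N$ rate requires the contraction coefficient $4\SCconst\cdot(\text{step amplitude})$ to exceed $1$. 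The paper's Lemma~\ref{sgd:convergencebound} builds this in by taking $\stepsize_k=\frac{2}{\SCconst}\frac{1}{2\SSconst^2/\SCconst^2+k}$, and it is precisely this $2/\SCconst$ amplitude that generates the $1/\SCconst^2=1/(\sigma_z^2/n)^2$ prefactor you correctly identify as the strong-convexity penalty. (The theorem statement's ``$\stepsize_k=1/(c+k)$'' is loose shorthand for this scaled step; a proof must use the scaled version, so your plan needs to replace the shift-tuning argument by the amplitude scaling and retain the $e_0$ term through the induction.)
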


The proof (detailed in Appendix~\ref{sec:proofs}), is based on an standard convergence analysis of the stochastic gradient method~\cite{nemirovskiRobustStochasticApproximation2009,wrightOptimizationDataAnalysis2022,rakhlinMakingGradientDescent2012}. 
In a nutshell, from the noise2noise self-supervised loss we compute stochastic gradients of the risk as $\nabla \lossfuncss( f_\vtheta(\vy_i) , \vy'_i  )$, where $(\vy_i,\vy_i')$ is a training example. This stochastic gradient is an unbiased estimate of the gradient of the risk, $\nabla R(\vtheta)$. The  variance of the stochastic gradient determines the rate on the RHS of equation~\eqref{eq:theorybound}. 

Theorem~\ref{thm:linear} establishes that the risk is upper bounded by a problem-dependent noise floor, which is the risk of the optimal estimator, $R(\mW^\ast)$, plus a term associated with minimizing a self-supervised loss constructed from finitely many training examples. 
This term decays as $c/\N$, which is exactly the rate we see in the simulations below for this setup. Moreover, the term associated with minimizing the self-supervised loss becomes larger in the noise variance $\sigma_e^2$, and thus reflects that the self-supervised loss is a worse approximation of the supervised loss, as $\sigma_e^2$ increases.  

\paragraph{Numerical results for linear denoising.}
Figure~\ref{fig:simulated_subspace_den} shows the risk of the linear denoiser $\mW$ trained on $N$ examples from the linear subspace model.  
The linear estimator is learned by applying gradient descent to the self-supervised loss function~\eqref{eq:SSselfsupervisedloss} regularized with early stopping. 
As predicted by Theorem~\ref{thm:linear}, the performance of the linear estimator $\mW$ converges to the performance of the optimal estimator $\mW^\ast$ at the rate $1/N$ (right plot in Figure~\ref{fig:simulated_subspace_den}). 
Also as predicted by the Theorem, larger levels of target noise $\sigma_e$ require more training examples, reflected by the term that multiplies with $1/N$ increasing in $\sigma_e^2$. 

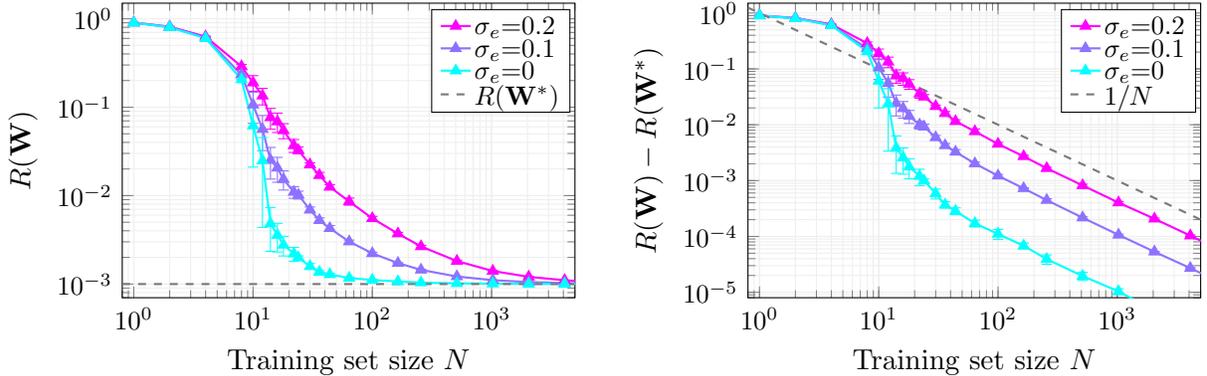
\begin{figure*}[t]
    
    \pgfplotsset{
        compat=newest,
        /pgfplots/legend image code/.code={
            \draw[mark repeat=2,mark phase=2] 
            plot coordinates {
                (0.0cm,0cm)
                (0.2cm,0cm)        
                (0.4cm,0cm)         
            };%
        }
    }
    
        \centering
		
    \begin{tikzpicture}
    \begin{groupplot}[
        group style={
            group size=2 by 1,
            horizontal sep=2.3cm, 
            vertical sep=0.3cm,
            },
        x tick label style={font=\small}, 
        y tick label style={font=\small},
        xmode=log,
        ymode=log,
        width=7.6cm,
        height=5.5cm,
        grid = both,
        xmin=0.8, xmax=5000, ymin=0.0007, ymax=1.5,
      every tick label/.append style={font=\small},
    major grid style = {lightgray!25},
    minor grid style = {lightgray!25},
    ]

    \nextgroupplot[legend style={font=\small, at={(0.99,0.98)}, anchor=north east,label style = {font=\small}, draw=black,fill=white,legend cell align=left, nodes={inner sep=1pt,text depth=0pt}},ylabel = {$R(\mW)$},xlabel = { Training set size $N$},]
					
    \addplot [color=\selfsupcolorTwo, line width=\linewidths, mark=\markerthree, mark size=\markersize,error bars/.cd,y dir=both,y explicit] table [x=train_size, y =GD_M, y error=GD_S, col sep=comma] {./txtfiles/Risk_ESGD_d10_n100_sige1_sigeprime2.txt};
    \addplot [color=\selfsupcolorOne, line width=\linewidths, mark=\markerthree, mark size=\markersize,error bars/.cd,y dir=both,y explicit] table [x=train_size, y =GD_M, y error=GD_S, col sep=comma] {./txtfiles/Risk_ESGD_d10_n100_sige1_sigeprime1.txt};
    \addplot [color=\supcolor, line width=\linewidths, mark=\markerthree, mark size=\markersize,error bars/.cd,y dir=both,y explicit] table [x=train_size, y =GD_M, y error=GD_S, col sep=comma] {./txtfiles/Risk_ESGD_d10_n100_sige1_sigeprime0.txt};
    
    \addplot [color=darkgray, dashed, line width=\linewidths, domain=0.8:20000]{0.01*10/100/(1+0.01*10/100)}; 

    \addlegendentry{$\sigma_e$=$0.2$}; 
    \addlegendentry{$\sigma_e$=$0.1$}; 
    \addlegendentry{$\sigma_e$=$0$}; 
    \addlegendentry{$R(\mW^\ast)$};
    
    \nextgroupplot[
    legend style={font=\small, at={(0.99,0.98)}, anchor=north east,label style = {font=\small}, draw=black,fill=white,legend cell align=left, nodes={inner sep=1pt,text depth=0pt}},ylabel = {$R(\mW)-R(\mW^\ast)$},ymin=0.000008,xlabel = { Training set size $N$},]		
					
    \addplot [color=\selfsupcolorTwo, line width=\linewidths, mark=\markerthree, mark size=\markersize,error bars/.cd,y dir=both,y explicit] table [x=train_size, y =GD_M, y error=GD_S, col sep=comma] {./txtfiles/RiskMinusOpt_ESGD_d10_n100_sige1_sigeprime2.txt};
    \addplot [color=\selfsupcolorOne, line width=\linewidths, mark=\markerthree, mark size=\markersize,error bars/.cd,y dir=both,y explicit] table [x=train_size, y =GD_M, y error=GD_S, col sep=comma] {./txtfiles/RiskMinusOpt_ESGD_d10_n100_sige1_sigeprime1.txt};
    \addplot [color=\supcolor, line width=\linewidths, mark=\markerthree, mark size=\markersize,error bars/.cd,y dir=both,y explicit] table [x=train_size, y =GD_M, y error=GD_S, col sep=comma] {./txtfiles/RiskMinusOpt_ESGD_d10_n100_sige1_sigeprime0.txt};

    \addplot [color=darkgray, dashed, line width=\linewidths, domain=0.8:20000]{x^-1.0}; 
                                        
    \addlegendentry{$\sigma_e$=$0.2$}; 
    \addlegendentry{$\sigma_e$=$0.1$}; 
    \addlegendentry{$\sigma_e$=$0$}; 
    \addlegendentry{$1/N$};
					
    \end{groupplot}
    \end{tikzpicture}
		
    \caption{\textbf{Subspace denoising.} Simulated risk of a linear estimator learned with early stopped gradient descent for different levels of target noise $\sigma_e$, signal dimension $d=10$, ambient dimension $n=100$, and input noise level $\sigma_z=0.1$. \textbf{Left:} The larger the target noise level, the more training examples are required for self-supervised training to approach the optimal risk $R(\mW^\ast)$. \textbf{Right:} Plotting the risk minus the optimal risk reveals the convergence rate to be $1/N$. Error bars show the standard deviation over 5 independent runs.}
    \label{fig:simulated_subspace_den}
\end{figure*}

\section{Empirical results for self-supervised denoising}
\label{sec:emp_denoising}

In this section, we study the performance of a network trained for denoising
with the self-supervised objective function~\eqref{eq:SSselfsupervisedloss} and noise2noise loss~\eqref{eq:lossdenoisingN2N} 
as a function of the number of training examples, and compare to the performance of the network trained in a supervised fashion. We consider Gaussian denoising and denosing real-world-camera noise. 

Throughout this section, we focus on training a  U-net~\cite{ronnebergerUNet2015} with 7.4M network parameters. A U-net is a good choice, since it is widely used and gives near SOTA models for image denoising~\cite{brooksUnprocessingImagesLearned2019,gurrola-ramosResidualDenseUNet2021,zhangPlugandPlayImageRestoration2021,zhangPracticalBlindDenoising2022}. 
However, our qualitative results are independent of the network architecture. In the appendix, we present results for  SwinIR~\cite{liangSwinIRImageRestoration2021}, an attention-based SOTA architecture, confirming this.  

\subsection{Gaussian denoising}

\paragraph{Setup.} We consider denoising images $\vx$ from noisy measurements $\vy=\vx+\vz$ with noise level $\sigma_z=25$ (for pixel values in $[0,255]$). 
For the noisy targets $\vy' = \vx + \ve$, we study noise levels $\sigma_e \in \{0,25,50\}$, where $\sigma_e=0$ corresponds to supervised training. 
The target noise level determines how fast as a function of the training set size self-supervised approaches the performance of supervised training. 
In all our experiments, the measurement noise $\vz_i$ and target noise $\ve_i$ are only sampled once per clean image $\vx_i$ during training, which corresponds to the practical setup in which we are given two noisy measurements per image. 
We use cropped patches of size $128\times 128$ from ImageNet~\cite{russakovskyImageNetLargeScale2015} to create training sets 
with 100 to 300k images. 
We pick the best run out of different runs with independently sampled network initialization. 
See Appendix~\ref{sec:n2n_den_unet_fixed} for training details.

\begin{figure*}[t]
    \pgfplotsset{
        compat=newest,
        /pgfplots/legend image code/.code={
            \draw[mark repeat=2,mark phase=2] 
            plot coordinates {
                (0.0cm,0cm)
                (0.2cm,0cm)        
                (0.4cm,0cm)         
            };%
        }
    }
		\centering
		\begin{tikzpicture}
			\begin{groupplot}[
				group style={
					group size=2 by 1, 
					horizontal sep=1.5cm,
				},
				height=\plotheight,
				every tick label/.append style={font=\small},
				]

				\nextgroupplot[legend style={font=\scriptsize, at={(0.99,0.02)}, anchor=south east,label style = {font=\small}, draw=black,fill=white,legend cell align=left, nodes={inner sep=2pt,text depth=0pt}},ylabel ={PSNR (dB)},xlabel = {Training set size $N$},
				xmin=70, xmax=400000, ymin=28.1, ymax=32.3, grid = both, major grid style = {lightgray!25}, minor grid style = {lightgray!25},
				width=\plotwidth,xmode=log]
				
				\addplot [color=\supcolor, line width=\linewidths, mark=\markerthree, mark size=\markersize,error bars/.cd,y dir=both,y explicit] table [x=N, y =psnr, col sep=comma] {./txtfiles/Den_N2N_FixNoise_Sup_c128_woDots.txt};
				
				\addplot [only marks, color=\supcolor, mark=\markerthree, mark size=\markersize, forget plot] table [x=N, y =psnr, col sep=comma] {./txtfiles/Den_N2N_FixNoise_Sup_c128_allDots.txt};

				\addplot [color=\selfsupcolorOne, line width=\linewidths, mark=\markerthree, mark size=\markersize,error bars/.cd,y dir=both,y explicit] table [x=N, y =psnr, col sep=comma] {./txtfiles/Den_N2N_FixNoise_SelfSup25_c128_woDots.txt};
				
				\addplot [only marks, color=\selfsupcolorOne, mark=\markerthree, mark size=\markersize, forget plot] table [x=N, y =psnr, col sep=comma] {./txtfiles/Den_N2N_FixNoise_SelfSup25_c128_allDots.txt};

				\addplot [color=\selfsupcolorTwo, line width=\linewidths, mark=\markerthree, mark size=\markersize,error bars/.cd,y dir=both,y explicit] table [x=N, y =psnr, col sep=comma] {./txtfiles/Den_N2N_FixNoise_SelfSup50_c128_woDots.txt};
				
				\addplot [only marks, color=\selfsupcolorTwo, mark=\markerthree, mark size=\markersize, forget plot] table [x=N, y =psnr, col sep=comma] {./txtfiles/Den_N2N_FixNoise_SelfSup50_c128_allDots.txt};

                \addplot [color=\neighborneighborcolor, line width=\linewidths, mark=\markerthree, mark size=\markersize,error bars/.cd,y dir=both,y explicit] table [x=N, y =psnr, col sep=comma] {./txtfiles/Den_Nei2Nei_c128_woDots.txt};
                
                \addplot [only marks, color=\neighborneighborcolor, mark=\markerthree, mark size=\markersize, forget plot] table [x=N, y =psnr, col sep=comma] {./txtfiles/Den_Nei2Nei_c128_allDots.txt};

                \addplot [color=\noisiernoisecolor, line width=\linewidths, mark=\markerthree, mark size=\markersize,error bars/.cd,y dir=both,y explicit] table [x=N, y =psnr, col sep=comma] {./txtfiles/Den_NN2N_c128_woDots.txt};
                
                \addplot [only marks, color=\noisiernoisecolor, mark=\markerthree, mark size=\markersize, forget plot] table [x=N, y =psnr, col sep=comma] {./txtfiles/Den_NN2N_c128_allDots.txt};

                \addplot [color=darkgray, dashed, line width=\linewidths, domain=60:1000000, forget plot]{31.521735};
                \node[anchor=south west] at (axis cs: 80,31.52) {\textcolor{darkgray}{31.52 dB}};

				\addlegendentry{Supervised $\sigma_e$=$0$}; 
				\addlegendentry{Noise2noise $\sigma_e$=$25$}; 
				\addlegendentry{Noise2noise $\sigma_e$=$50$}; 
				\addlegendentry{Neighbor2neighbor}; 
                \addlegendentry{Noisier2noise}; 

                \nextgroupplot[
                legend style={font=\scriptsize, at={(0.985,0.98)}, anchor=north east,label style = {font=\small}, draw=black,fill=white,legend cell align=left, nodes={inner sep=2pt,text depth=0pt}},
                tick align=outside,
                tick pos=left,
                xmin=-0.3125, xmax=4.0,
                xtick style={color=black},
                ymin=0, ymax=759.15,
                ytick style={color=white}, ylabel ={Frequency},
                xlabel = {NMSE},
                 yticklabels={,,},
                 width=\plotwidthHist,
                 ylabel near ticks, ylabel shift={-10pt}
                ]
				\input{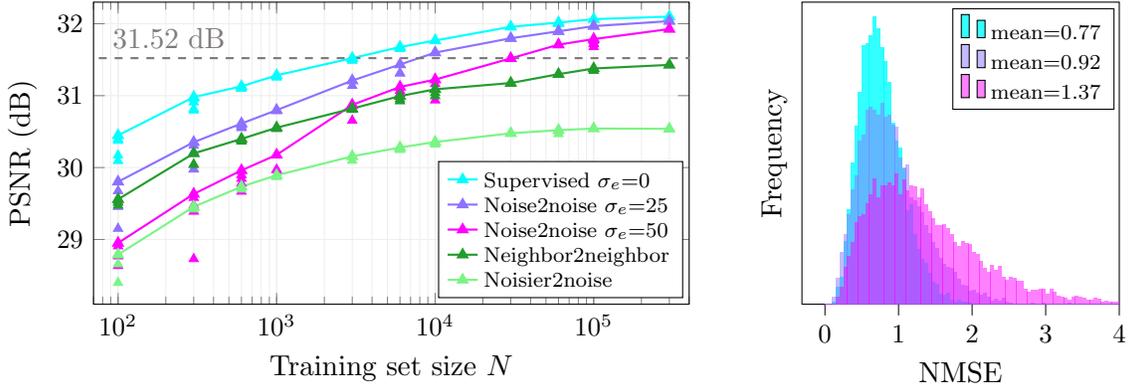}
			\end{groupplot}
		\end{tikzpicture}
		
    \caption{\textbf{Gaussian image denoising.} 
    \textbf{Left:} 
    Noise2noise training approaches the performance of supervised training as the number of training images increases at a rate dependent on the target noise level $\sigma_e$. Networks trained with noisier2noise and neighbor2neighbor also improve as a function of the training examples, but  are far from approaching the performance of networks trained in a supervised fashion. 
    \textbf{Right:} 
    Histogram of the variance of the stochastic gradients. The noisier the gradients, the slower the convergence as a function of the training examples, $N$.  
    } 
    \label{fig:selfsup_sup_den}
\end{figure*}

\paragraph{Results and discussion.} Figure \ref{fig:selfsup_sup_den} shows the performance in PSNR of U-nets trained in a supervised and self-supervised manner for different training set sizes $N$.  We find, as expected from the theory in Section~\ref{sec:theory_main}, that the gap between noise2noise like self-supervised training and supervised training vanishes as a function of the training examples, at a rate that increases as $\sigma_e$ decreases. 

At 100 training images the performance gap between supervised and self-supervised training with $\sigma_e=25$ ($\sigma_e=50$) is 0.645dB (1.497dB), and at 100k training images the gap is reduced to 0.097dB (0.277 dB). 
For this specific setup, we can read of the number of additional images required to meet a certain denoising performance relative to what can be achieved with supervised training. For example, to achieve the performance of supervised training on 3k images, self-supervised training with noise on the training targets $\sigma_e=25$ ($\sigma_e=50$) requires 10k (30k) images (gray dashed line in Figure~\ref{fig:selfsup_sup_den}).

The theory in Section~\ref{sec:theory_main} is based on a convergence analysis of the stochastic gradient method, and the rate at which performance improves as a function of the number of training examples, $N$, is determined by how well the stochastic gradients of the risk, $\nabla \lossfuncss( f_\vtheta(\vy_i) , \vy'_i  )$ (where $(\vy_i,\vy_i')$ is a training example), approximate the gradient of the risk, i.e., $\nabla R(\vtheta)$. In Figure~\ref{fig:selfsup_sup_den} (right panel) we show the histograms of the normalized variance of stochastic gradients, i.e., normalized estimates of the MSE $\norm[2]{\nabla \lossfuncss( f_\vtheta(\vy_i) , \vy'_i  ) - \nabla R(\vtheta)}^2$ after one epoch of training; see Appendix~\ref{sec:details_stoc_grad} for details. It can be seen that the larger the variance of the stochastic gradients, the slower the improvement as a function of the training examples, $N$. 

Self-supervised training with noisier2noise and neighbor2neighbor in general does not yield networks as good as trained in a supervised fashion, even if abundant training data is available. 
Noisier2noise~\cite{moranNoisier2NoiseLearningDenoise2020} and neighbor2neighbor~\cite{huangNeighbor2NeighborSelfSupervisedDenoising2021} only require a single noisy measurement per clean image for training. However, noisier2noise adds additional noise to the training inputs, which introduces a mismatch between the problem solved during training and inference.  Neighbor2neighbor relies on assumptions on the similarity between neighboring pixels in the clean image that only hold approximately in practice. 
In Figure~\ref{fig:selfsup_sup_den} we see that noisier2noise and neighbor2neighbor training, unlike noise2noise, yield methods that perform worse than methods trained in a supervised fashion even if abundant training data is available. 
For training details see Appendices~\ref{sec:noisier2noise} and~\ref{sec:neighbor2neighbor}. 

For the results in Figure~\ref{fig:selfsup_sup_den} we consider a U-net of fixed size. 
The performance of a network trained in a supervised and self-supervised manner depends on the network architecture and size. However, since we investigate training schemes, we expect our qualitative findings for relative performance differences to continue to hold for other architectures and network sizes. 
In Appendix~\ref{sec:n2n_den_unet_vary} we provide results for U-nets of varying sizes and in Appendix~\ref{sec:den_swinIR} we provide results for  SwinIR~\cite{liangSwinIRImageRestoration2021}, a recent state-of-the-art network for image denoising. The results are as expected analogous to the ones for the U-net presented here.

\begin{figure*}[]

    \definecolor{darkgray176}{RGB}{176,176,176}
    \definecolor{green}{RGB}{0,128,0}
    \definecolor{lightgray204}{RGB}{204,204,204}

		\centering
		\begin{tikzpicture}
			\begin{groupplot}[
				group style={
					group size=2 by 1,
					horizontal sep=1.5cm,
				},
				x tick label style={font=\normalsize}, 
				y tick label style={font=\normalsize},
				height=\plotheight,
				every tick label/.append style={font=\small},
				]
                
                \nextgroupplot[legend style={font=\scriptsize, at={(0.99,0.02)}, anchor=south east,label style = {font=\small}, draw=black,fill=white,legend cell align=left, nodes={inner sep=2pt,text depth=0pt}},ylabel ={PSNR (dB)},xlabel = {Training set size $N$},ylabel style={yshift=-3pt},grid = both, major grid style = {lightgray!25}, minor grid style = {lightgray!25},
                xmin=40, xmax=130000, ymin=41, ymax=51.2,xmode=log, width=\plotwidth]

				\addplot [color=\supcolor, line width=\linewidths, mark=\markerthree, mark size=\markersize,error bars/.cd,y dir=both,y explicit] table [x=N, y =PSNR, col sep=comma] {./txtfiles/SIDD/N2Nsidd_sup_withBayer_valTestSubset_woDots.txt};

                \addplot [only marks, color=\supcolor, mark=\markerthree, mark size=\markersize, forget plot] table [x=N, y =PSNR, col sep=comma] {./txtfiles/SIDD/N2Nsidd_sup_withBayer_valTestSubset_allDots.txt};

				\addplot [color=\selfsupcolorTwo, line width=\linewidths, mark=\markerthree, mark size=\markersize,error bars/.cd,y dir=both,y explicit] table [x=N, y =PSNR, col sep=comma] {./txtfiles/SIDD/N2Nsidd_selfsup_withBayer_valTestSubset_woDots.txt};

                \addplot [only marks, color=\selfsupcolorTwo, mark=\markerthree, mark size=\markersize, forget plot] table [x=N, y =PSNR, col sep=comma] {./txtfiles/SIDD/N2Nsidd_selfsup_withBayer_valTestSubset_allDots.txt};

    
                \addlegendentry{Supervised}; 
                \addlegendentry{Noise2noise}; 
                
                \nextgroupplot[
				legend style={font=\scriptsize, at={(0.99,0.98)}, anchor=north east,label style = {font=\small}, draw=black,fill=white,legend cell align=left,  nodes={inner sep=2pt,text depth=0pt}},
                legend cell align={left},
                log basis x={10},
                tick align=outside,
                tick pos=left,
                xlabel={NMSE},
                xmin=0.027, xmax=100, 
                xmode=log,
                xtick style={color=black},
                ylabel={Frequency},
                ymin=0, ymax=350, 
                ytick style={color=white},
                yticklabels={,,},
                 width=\plotwidthHist,
                 ylabel near ticks, ylabel shift={-10pt}
                ]
				\input{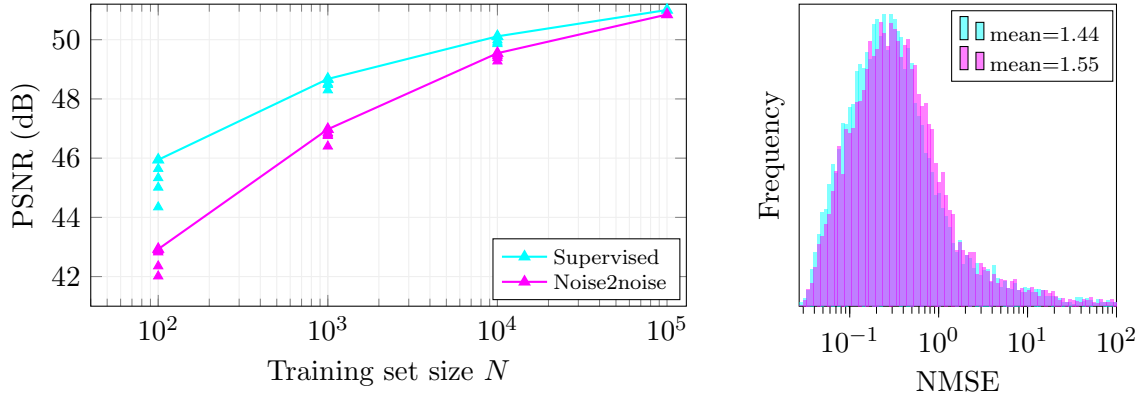}
                
			\end{groupplot}
		\end{tikzpicture}
		
    \caption{\textbf{Real-world camera noise denoising on SIDD}. 
    \textbf{Left:} Similar to our results for Gaussian denoising a model trained with noisy targets in a noise2noise manner approaches the performance of a model trained with ground truth targets in a supervised manner as the number of training patches gets large.
    \textbf{Right:} Histogram of the variance of the stochastic gradients. The stochastic gradients of the self-supervised loss are slightly noisier than the ones of the supervised loss.
    }
    \label{fig:den_SIDD}
\end{figure*}

\subsection{Real-world camera noise }



We now study self-supervised denoising for real-world camera noise from the Smartphone Image Denoising Dataset (SIDD)~\cite{abdelhamedHighQualityDenoisingDataset2018}. The noise in the SID dataset is structured and not Gaussian.

\paragraph{Setup.} We train U-nets on increasing amounts of patches of size $128\times 128$ from the training set, 
validate on the first 10 scenes, and report the test performance on the remaining 30 scenes from the validation set. The dataset consists of 150 noisy images per scene of which 2 are used as input and target for self-supervised training. Ground truth images for supervised training are estimated from all 150 noisy images. Training and testing is done in the raw-RGB space; see Appendix~\ref{sec:den_SIDD} for details. 

\paragraph{Results and discussion.} In Figure~\ref{fig:den_SIDD} we see that performance improves steeper as a function of the training set size compared to Gaussian denoising in Figure~\ref{fig:selfsup_sup_den}, which we attribute to the high variety of noise levels and lighting conditions present in the SID dataset. However, analogous to the Gaussian case the model trained in a self-supervised manner closely approaches the performance of the model trained in a supervised manner as the number of training examples grows large.


\section{Empirical results for self-supervised compressive sensing}
\label{sec:emp_MRI}

We now study the performance as a function of the training set size of a network trained for compressive sensing reconstruction by minimizing the self-supervised objective~\eqref{eq:SSselfsupervisedloss} with the noise2noise-like loss~\eqref{eq:sslossMRI}, relative to the performance achieved by the same network with supervised training.

We perform two sets of experiments: First we experiment on natural images obtained from ImageNet, since this allows us to explore the compressive sensing problem over a wider range of training set sizes than existing medical image datasets allow. 
Then, we investigate accelerated MRI reconstruction on real world multi-coil measurements from the fastMRI dataset~\cite{zbontarFastMRIOpenDataset2018}. 
We consider a U-net in the main body, but again our qualitative results are independent of the network, and we demonstrate this with experiments on the VarNet, a state-of-the-art architecture, in the appendix.

{

\colorlet{myblue}{black!80!black}
\colorlet{mydarkblue}{black!40!black}
\tikzstyle{edge} = [->, thick]
\tikzset{>=latex}
\tikzstyle{mydashed}=[very thin,dashed,draw=black!50,opacity=0.4]
\tikzstyle{node}=[thick,circle,draw=myblue,minimum size=8,inner sep=0.5,outer sep=0.6]
\tikzstyle{node hidden}=[node,black!20!black,draw=myblue!30!black,fill=myblue!20]
\tikzstyle{connect}=[thick,mydarkblue]
\tikzstyle{connect arrow}=[-{Latex[length=4,width=3.5]},thick,mydarkblue,shorten <=0.5,shorten >=1]
\def\nstyle{int(\lay<\Nnodlen?min(2,\lay):3)}

\begin{figure}
    \centering
    \begin{tikzpicture}
    \def\imgWidthMRI{1.7cm}
    
    \def\nodedistx{0.8cm}
    \def\nodedisty{1.0cm}
    
    \def\linestrength{0.5mm}
    \def\shortenarrow{0mm}
	\def\arrowtiplen{3mm}
    
    \node[] (ytilde) {\includegraphics[width=\imgWidthMRI]{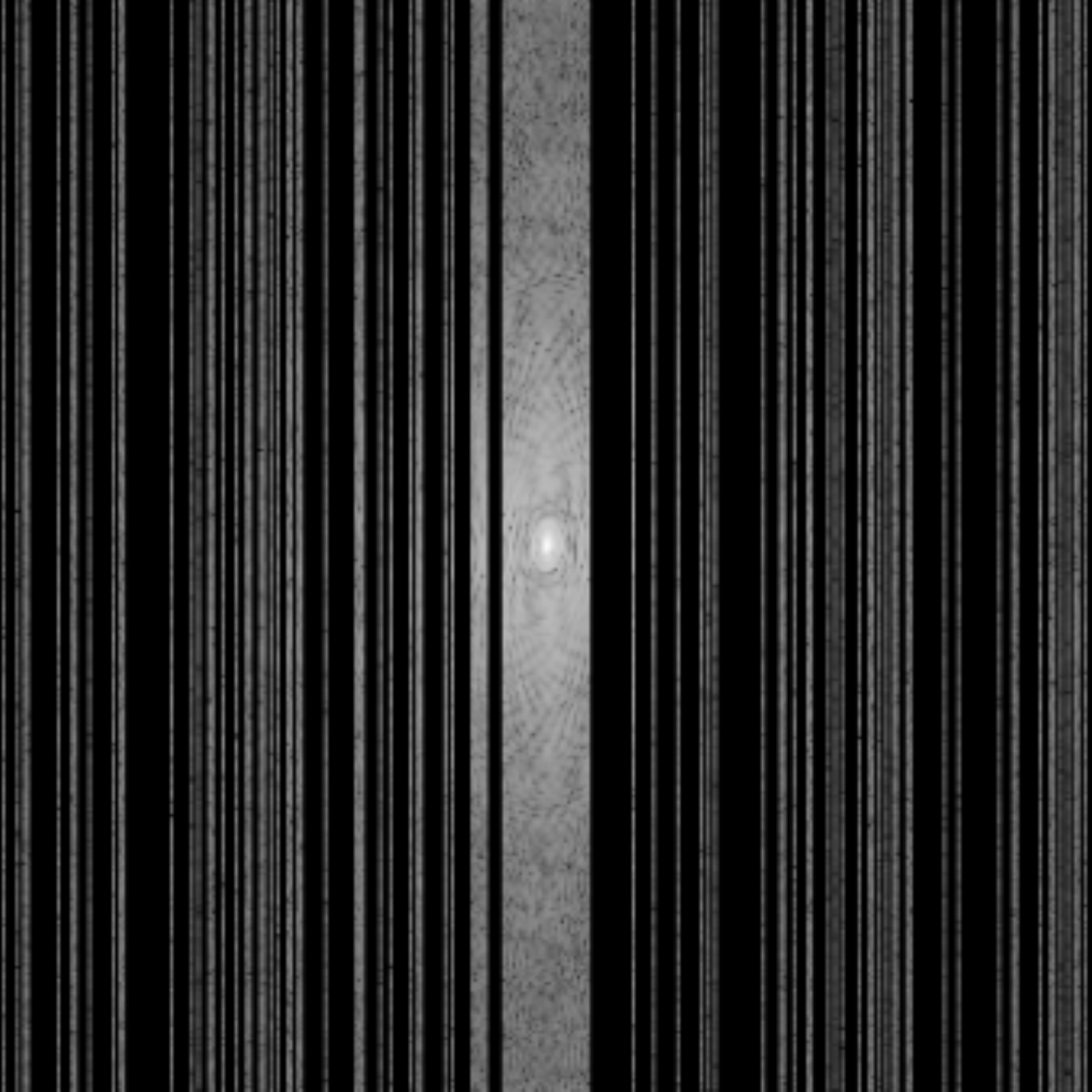}};
    \node[shift=(ytilde.north), anchor=south, yshift=-0.2cm] (ytilde_label) {$\tilde\vy$};
    
    \draw[line width=\linestrength] ($ (ytilde.east) + (-0.1cm,0) $) -- ($ (ytilde.east) + (0.2cm,0) $);
    
    \node[shift=(ytilde.east), anchor=west, yshift=-\nodedisty, xshift=0.8cm] (y) {\includegraphics[width=\imgWidthMRI]{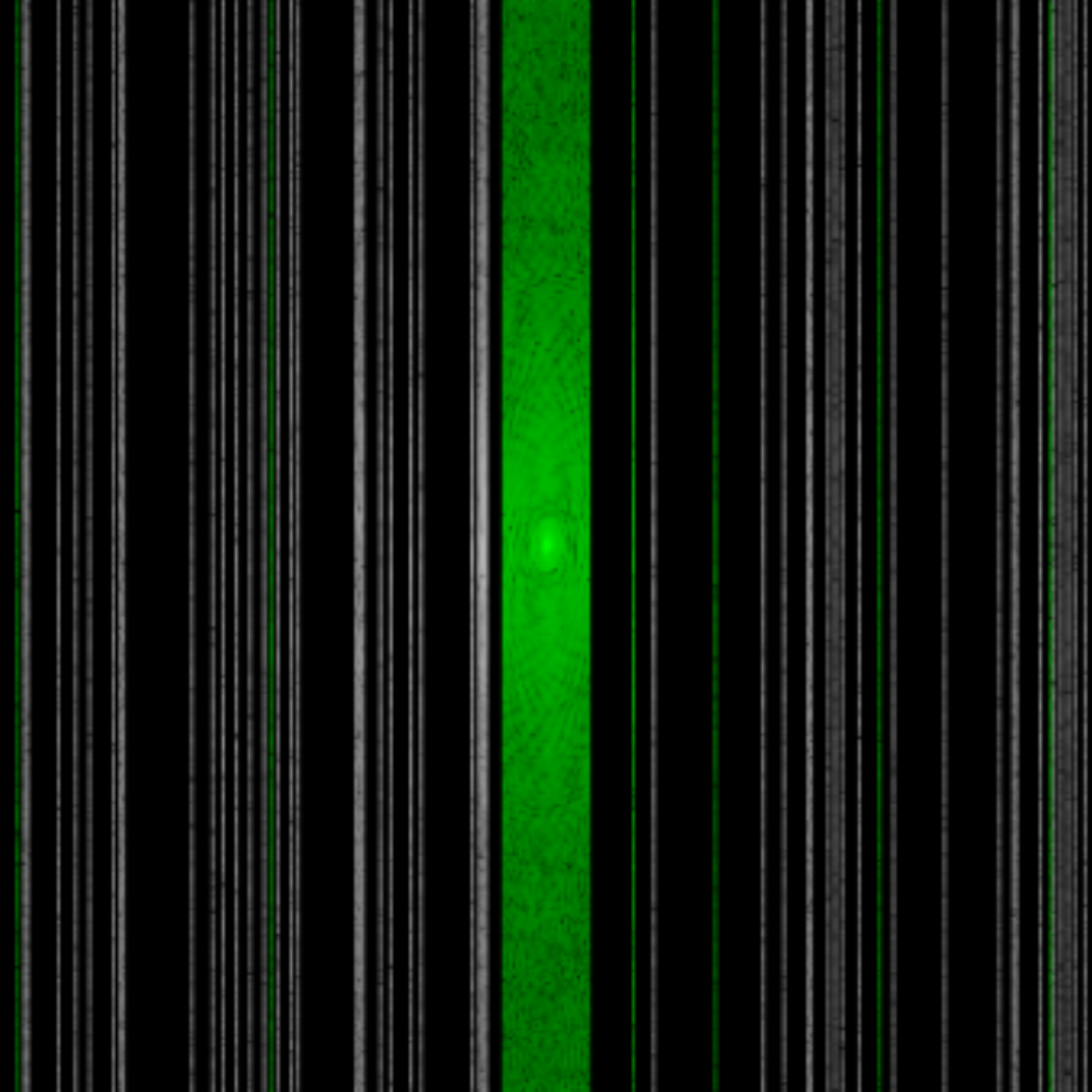}};
    \node[shift=(y.south), anchor=north, yshift=0.2cm] (y_label) {$\vy$};
    
    \draw[line width=\linestrength, -{Latex[length=\arrowtiplen]},shorten >=\shortenarrow,shorten <=\shortenarrow] ($ (ytilde.east) + (0.2cm,0) $) |- node[below=1mm, align=center] {$\mM$} ($ (y.west) + (0.1cm,0) $);
    
    \node[shift=(ytilde.east), anchor=west, yshift=\nodedisty, xshift=0.8cm] (yprime) {\includegraphics[width=\imgWidthMRI]{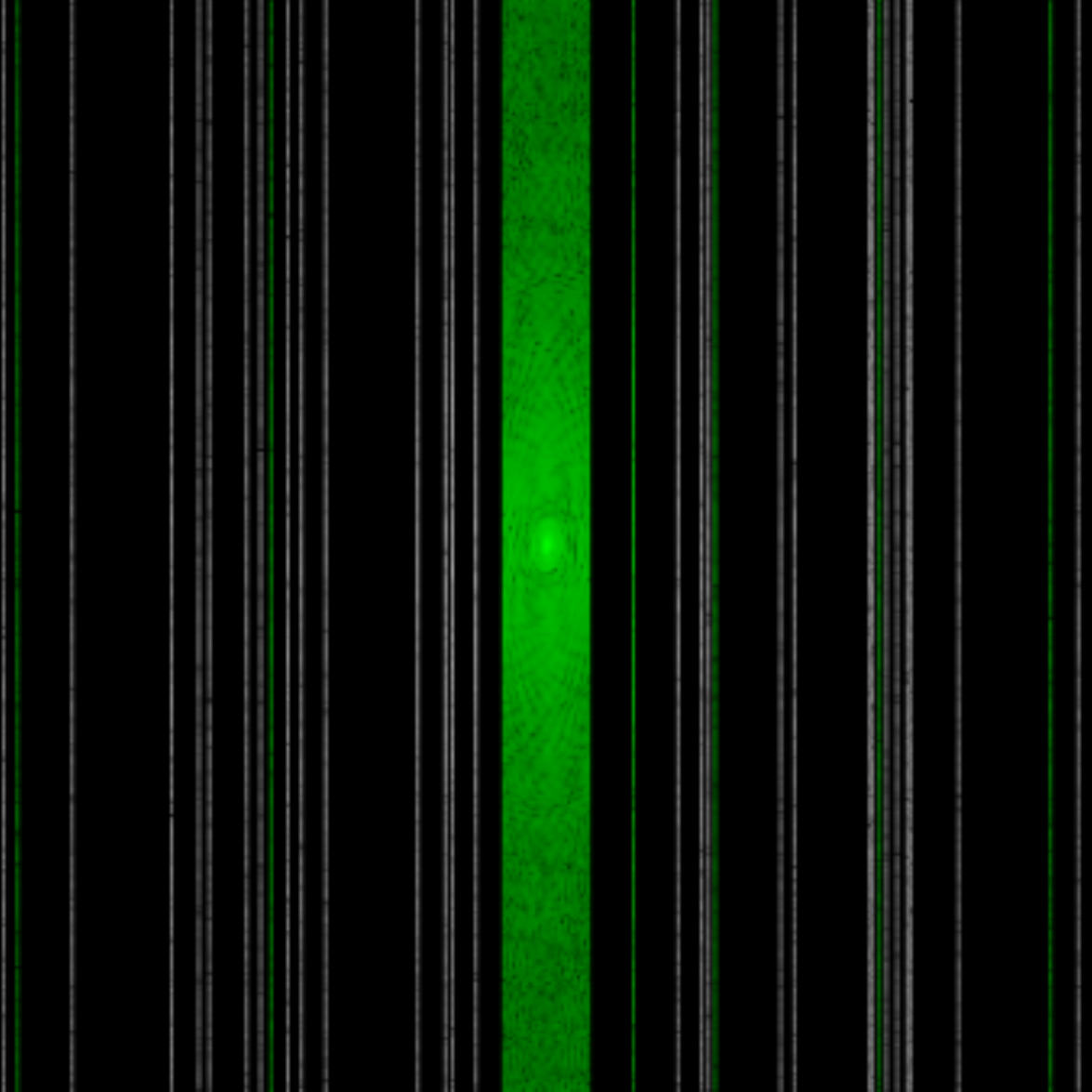}};
    \node[shift=(yprime.north), anchor=south, yshift=-0.2cm] (yprime_label) {$\vy'$};

    \draw[line width=\linestrength, -{Latex[length=\arrowtiplen]},shorten >=\shortenarrow,shorten <=\shortenarrow] ($ (ytilde.east) + (0.2cm,0) $) |- node[above=1mm, align=center] {$\mM'$} ($ (yprime.west) + (0.1cm,0) $);
    
    \node[shift=(y.east), anchor=west, xshift=\nodedistx] (x_inp) {\includegraphics[width=\imgWidthMRI]{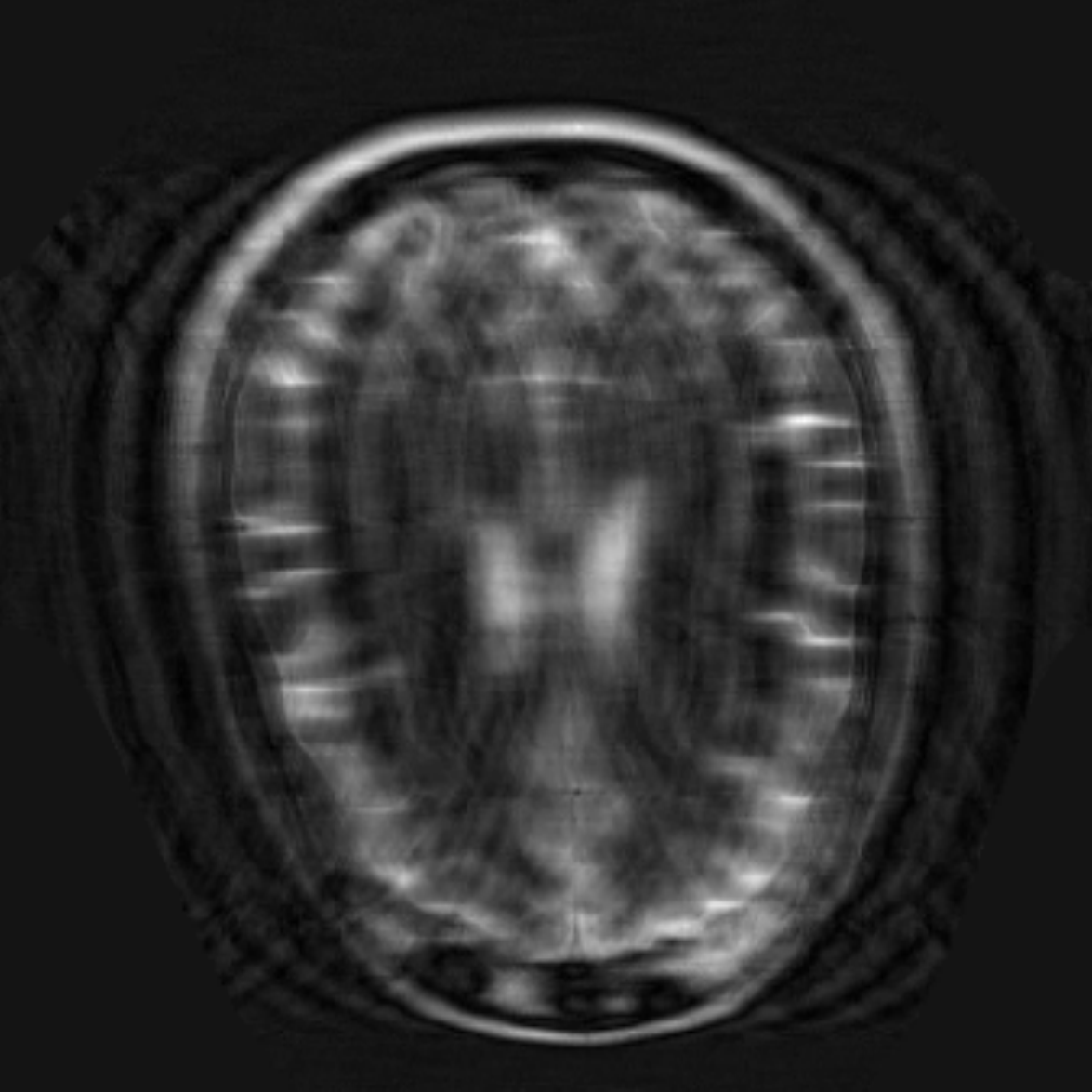}};
    
    \draw[line width=\linestrength, -{Latex[length=\arrowtiplen]},shorten >=\shortenarrow,shorten <=\shortenarrow] ($ (y.east) + (-0.1cm,0) $) -- node[above=1mm, align=center] {$\inv{\mF}$} ($ (x_inp.west) + (0.1cm,0) $);
    
    
    
    \readlist\Nnod{3,2,3}
    \foreachitem \N \in \Nnod{
    \foreach \i [evaluate={\x=0.8*\Ncnt+5.85; \y=0.8*(\N/2-\i+0.5)-1.1; \prev=int(\Ncnt-1);}] in {1,...,\N}{
     \node[node hidden] (N\Ncnt-\i) at (\x,\y) {};
     \ifnum\Ncnt>1
      \foreach \j in {1,...,\Nnod[\prev]}{
       \draw[connect arrow] (N\prev-\j) -- (N\Ncnt-\i);
      }
     \fi
    }
    }
    \node at (7.45, -0.15) {\large$f_\vtheta$};
    
    \node[shift=(x_inp.east), anchor=west, xshift=2cm] (x_out) {\includegraphics[width=\imgWidthMRI]{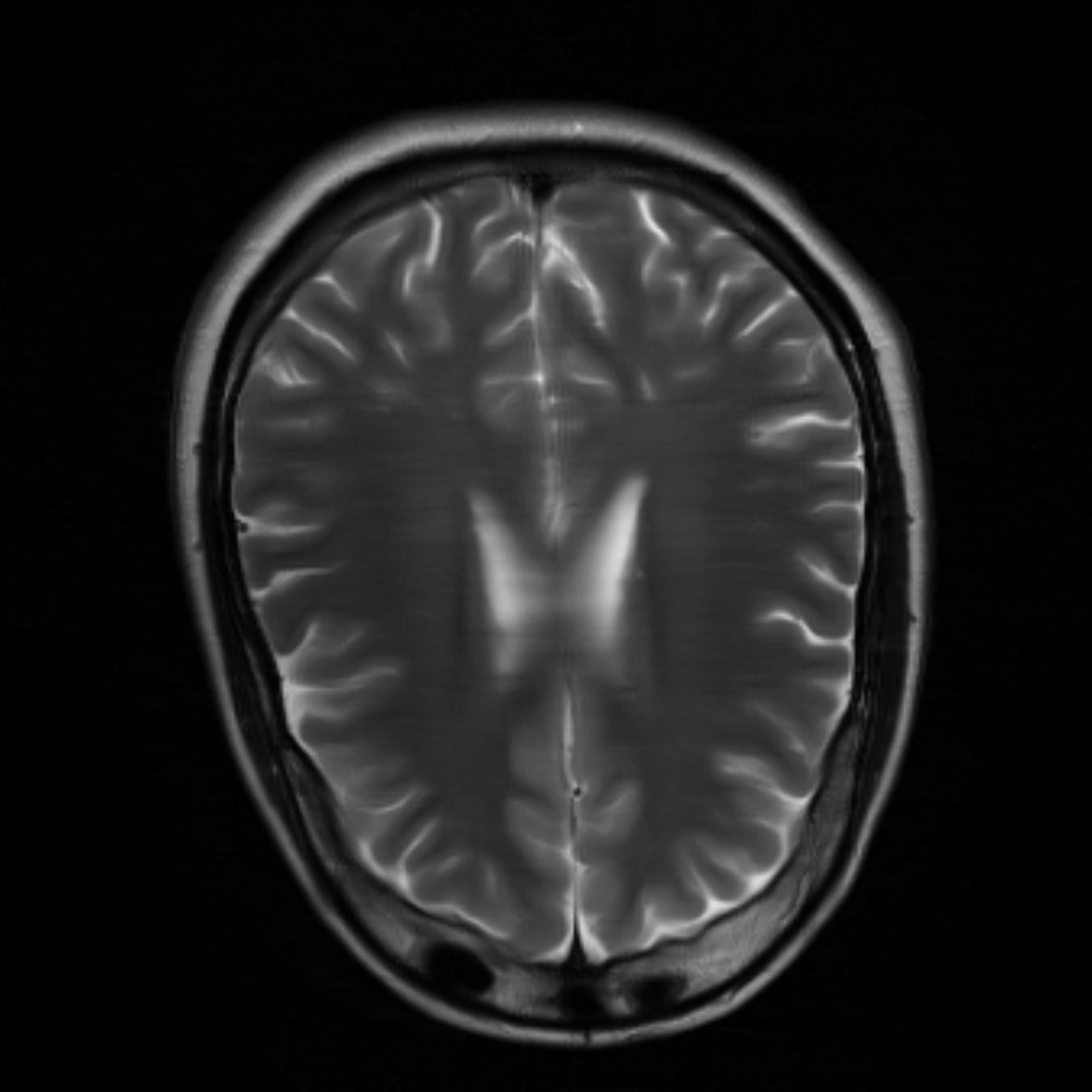}};
    
    
    \node[shift=(x_out.east), anchor=west, xshift=\nodedistx] (y_out) {\includegraphics[width=\imgWidthMRI]{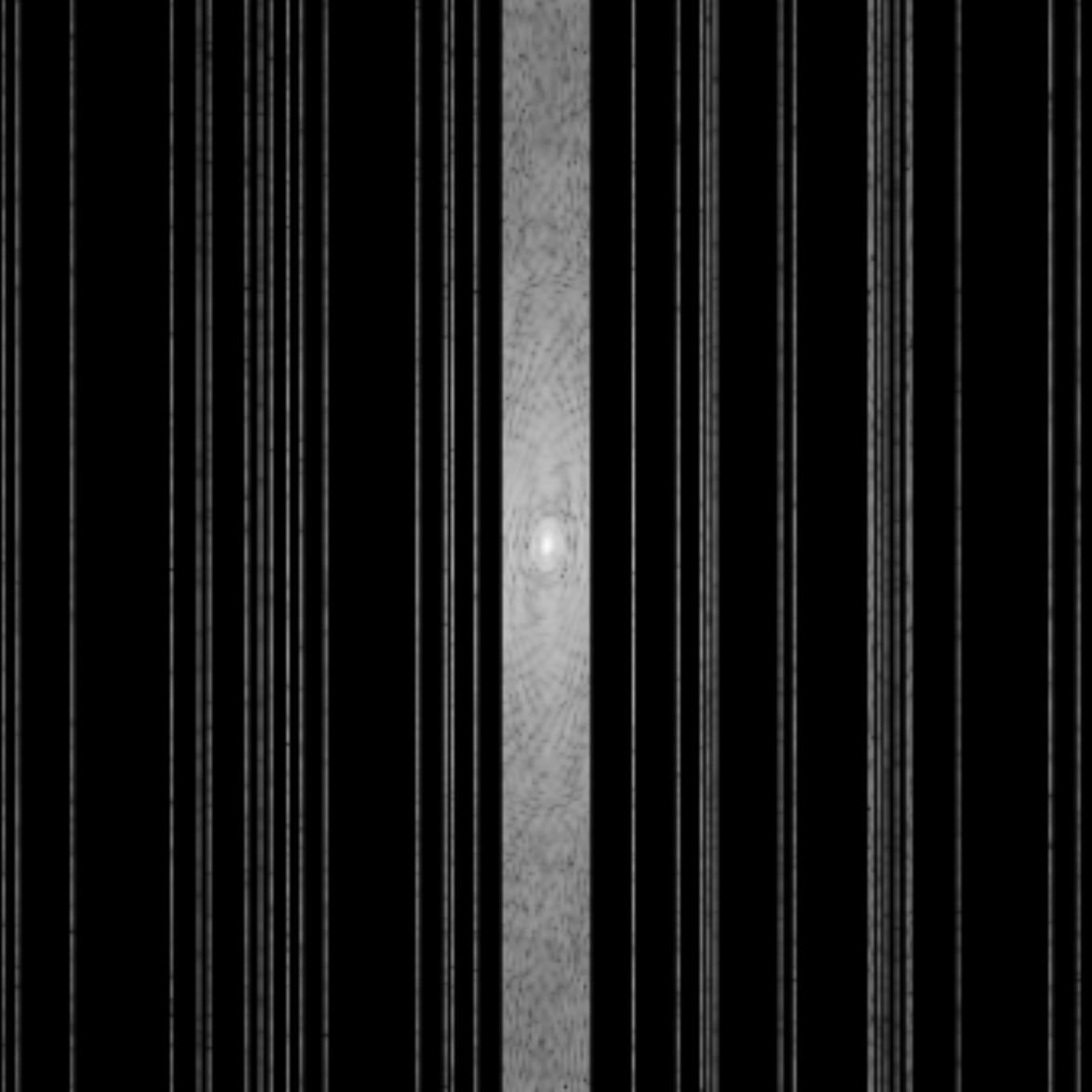}};
    
    \draw[line width=\linestrength, -{Latex[length=\arrowtiplen]},shorten >=\shortenarrow,shorten <=\shortenarrow] ($ (x_out.east) + (-0.1cm,0) $) -- node[above=1mm, align=center] {$\mM'\mF $} ($ (y_out.west) + (0.1cm,0) $);
    
    
    \path let \p1 = (y_out), \p2=(yprime), in node[draw, line width=\linestrength, minimum height=0.9cm,xshift=-1cm] (loss) at (\x1,\y2) {$\norm[2]{\mW(\mM'\mF f_\vtheta(\vy) - \vy' ) }^2$};
    \node[shift=(loss.north), anchor=south, yshift=-0.1cm] (loss_label) {Self-supervised loss};
    
    \draw[line width=\linestrength, -{Latex[length=\arrowtiplen]},shorten >=\shortenarrow,shorten <=\shortenarrow] ($ (y_out.north) + (0,-0.1cm) $) -- (y_out |- loss.south);
    
    \draw[line width=\linestrength, -{Latex[length=\arrowtiplen]},shorten >=\shortenarrow,shorten <=\shortenarrow] ($ (yprime.east) + (-0.1cm,0) $) -- (loss.west);
    
    
  
    
    \end{tikzpicture}
    \caption{\textbf{Self-supervised training scheme for compressive sensing MRI.} 
    The overlap between constructed measurements $\vy,\vy'$ is indicated in green and stems from both masks sampling center frequencies with probability 1 and an expected overlap of non-center frequencies of size $p'q$. 
    The scheme depicts single-coil measurements for conciseness. 
    }
    \label{fig:self_sup_scheme}
\end{figure}
}

\paragraph{Sampling scheme.} 
The setup from Section~\ref{sec:n2n_loss_mri} requires a pair of source measurement $\vy$ (that can be arbitrarily obtained) and a target measurement that is sampled independently of the source measurement as $\vy'=\mM'\mF\vx$, where $\mM'$ is a random mask with a non-zero probability of selecting each frequency. 
While in clinical MRI protocols, we can acquire two independently undersampled measurements per image, we typically acquire only one. 
We therefore slightly adapt our sampling scheme as follows to generate two measurements $\vy$ and $\vy'$ from one undersampled measurement $\tilde \vy$ that is undersampled with factor $\mu$. See Figure~\ref{fig:self_sup_scheme} for an illustration of the scheme.

Our goal is to train a network to perform reconstruction based on a measurement $\vy$ undersampled with factor $p < \mu$.  
We consider Cartesian undersampling, where the frequency domain (also called k-space) is sampled column-wise. 
Since the low-frequency components of an image contain the largest portion of the signal energy, we follow the common practice to sample a fraction $\nu=0.08$ of the center columns in the k-space with probability 1, and assign the center frequencies to both measurements $\vy$ and $\vy'$. 
We then add a random fraction $\frac{p-\nu}{\mu-\nu}$ of the non-center columns in $\tilde\vy$ to $\vy$, which is now a measurement undersampled with factor $p$ by design. 
The target $\vy'$ is constructed by adding the remaining frequencies that are in $\tilde \vy$ but not in $\vy$, and an additional fraction $p'q$ of the non-center columns in $\vy$, where $p'=\frac{p-\nu}{1-\nu}$. 
The rationale behind this approach is that if the masks $\mM$ and $\mM'$ sample fractions $p'$ and $q$ independently from all non-center frequencies, then the expected overlap of frequencies selected by both masks is $p'q$ with $q = \frac{\mu - p}{1-p}$. 

We compute the diagonal weight mask from equation~\eqref{eq:sslossMRI} as $\mW= \EX{\mM'}^{-1/2}$, which yields 1's at entries corresponding to center frequencies and $1/\sqrt{q}$ otherwise. 
In our experiments we set $\nu=0.08$, $p=0.25$ and $\mu \in \{0.28,0.33\}$ resulting in $q\in\{0.05,0.11\}$.

\subsection{Compressive sensing for natural images}
\label{sec:warmup_mri}

We start with compressive sensing experiments on natural images, because the largest existing MRI dataset for image reconstruction research (fastMRI) only has 55k images, and we want to vary the training set size far beyond that. In the next section we conduct experiments on fastMRI.

\paragraph{Setup.} We use cropped images of size $100\times 100$ from ImageNet~\cite{russakovskyImageNetLargeScale2015} as ground truth images $\vx$ and obtain undersampled complex-valued measurements in the frequency domain as $\vy=\mM\mF\vx$ and $\vy'=\mM'\mF\vx$. 
We draw random training sets with 50 to 1M patches from a fixed pool of 1M images. 

For all training set sizes, we train a U-net with about 1M network parameters and pick the best run out of different runs with independently sampled training set and network initialization.
The network takes as input a coarse reconstruction obtained as $\inv{\mF}\vy$, where $\vy$ is the zero-filled measurement (i.e., $f_\vtheta(\vy) = \text{U-net}_\vtheta(\inv{\mF}\vy)$. The measurements are complex-valued, and we take one channel of the U-net as the imaginary part and one as the real part. The reconstructed image is obtained from the complex-valued network output by taking the absolute value. 
See Appendix~\ref{sec:details_cs_nat} for training details.

\begin{figure*}

    \definecolor{darkgray176}{RGB}{176,176,176}
    \definecolor{green}{RGB}{0,128,0}
    \definecolor{lightgray204}{RGB}{204,204,204}

		\centering
		\begin{tikzpicture}
			\begin{groupplot}[
				group style={
					group size=2 by 1,
					horizontal sep=1.5cm,
				},
				x tick label style={font=\normalsize}, 
				y tick label style={font=\normalsize},
				height=\plotheight,
				every tick label/.append style={font=\small},
				]
                
                \nextgroupplot[legend style={font=\scriptsize, at={(0.99,0.02)}, anchor=south east,label style = {font=\small}, draw=black,fill=white,legend cell align=left, nodes={inner sep=2pt,text depth=0pt}},ylabel ={SSIM},xlabel = {Training set size $N$},ylabel style={yshift=-3pt},
                xmin=40, xmax=1300000, ymin=0.63, ymax=0.83,ymode=log, ytick={0.65,0.70,0.75,0.80}, yticklabels={$0.65$,$0.70$,$0.75$,$0.80$},xmode=log,
				major grid style = {lightgray!25},
				minor grid style = {lightgray!25},grid = both,width=\plotwidth
				]
				
				\addplot [color=\supcolor, line width=\linewidths, mark=\markerthree, mark size=\markersize,error bars/.cd,y dir=both,y explicit] table [x=N, y =SSIM, col sep=comma] {./txtfiles/N2NimageNet_ssim_sup_FixCenter_fixInput_woDots.txt};

                \addplot [only marks, color=\supcolor, mark=\markerthree, mark size=\markersize, forget plot] table [x=N, y =SSIM, col sep=comma] {./txtfiles/N2NimageNet_ssim_sup_FixCenter_fixInput_allDots.txt};
				
				\addplot [color=\selfsupcolorOne, line width=\linewidths, mark=\markerthree, mark size=\markersize,error bars/.cd,y dir=both,y explicit] table [x=N, y =SSIM, col sep=comma] {./txtfiles/N2NimageNet_ssim_selfsup_FixCenter_fixInput_woDots.txt};

                \addplot [only marks, color=\selfsupcolorOne, mark=\markerthree, mark size=\markersize, forget plot] table [x=N, y =SSIM, col sep=comma] {./txtfiles/N2NimageNet_ssim_selfsup_FixCenter_fixInput_allDots.txt};
                
                \addplot [color=\selfsupcolorTwo, line width=\linewidths, mark=\markerthree, mark size=\markersize,error bars/.cd,y dir=both,y explicit] table [x=N, y =SSIM, col sep=comma] {./txtfiles/N2NimageNet_ssim_selfsup35_FixCenter_fixInput_woDots.txt};

                \addplot [only marks, color=\selfsupcolorTwo, mark=\markerthree, mark size=\markersize, forget plot] table [x=N, y =SSIM, col sep=comma] {./txtfiles/N2NimageNet_ssim_selfsup35_FixCenter_fixInput_allDots.txt};
                
                \addplot [color=darkgray, dashed, line width=\linewidths, domain=10:1000000]{0.6654108512};
				
				\addlegendentry{Supervised $\mu$=1}; 
				\addlegendentry{Self-sup. $\mu$=0.33}; 
				\addlegendentry{Self-sup. $\mu$=0.28};
				\addlegendentry{Zero-filled baseline};
				
				\nextgroupplot[
				legend style={font=\scriptsize, at={(0.99,0.98)}, anchor=north east,label style = {font=\small}, draw=black,fill=white,legend cell align=left,  nodes={inner sep=2pt,text depth=0pt}},
                legend cell align={left},
                log basis x={10},
                tick align=outside,
                tick pos=left,
                xlabel={NMSE},
                xmin=0.137007083142936, xmax=6000, 
                xmode=log,
                xtick style={color=black},
                ylabel={Frequency},
                ymin=0, ymax=483, 
                ytick style={color=white},
                yticklabels={,,},
                 width=\plotwidthHist,
                 ylabel near ticks, ylabel shift={-10pt}
                ]
				\input{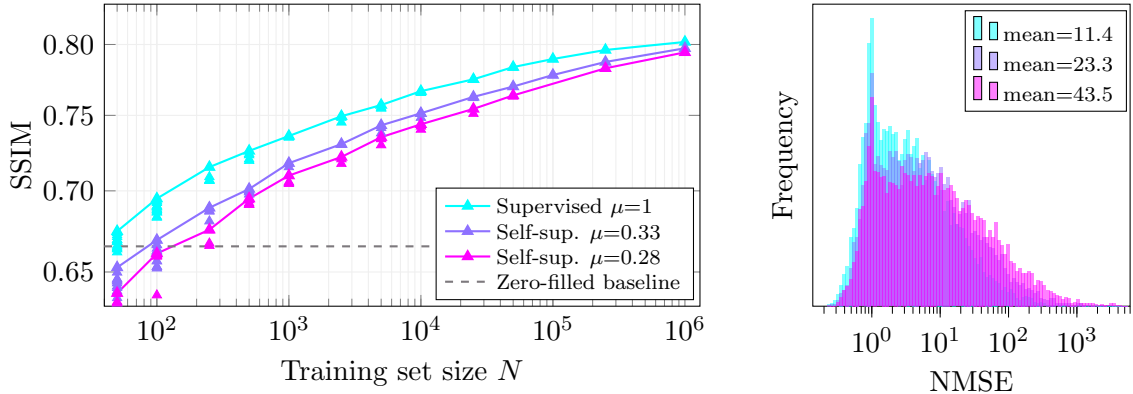}

			\end{groupplot}
		\end{tikzpicture}
		
    \caption{\textbf{Compressive sensing for natural images.} 
    \textbf{Left:} 
    The larger the fraction of frequencies in the target, $\mu$, the faster the network's performance improves as a function of the number of training examples, $N$. As the training set size becomes large, self-supervised performance approaches supervised performance. 
    \textbf{Right:} Histogram of the variance of the stochastic gradients. The noisier the gradients (i.e., the smaller $\mu$), the slower the convergence as a function of the training examples. 
    }
    \label{fig:selfsup_sup_cs_imagenet}
\end{figure*}

\paragraph{Results and discussion.} 

Figure~\ref{fig:selfsup_sup_cs_imagenet} shows the performance in SSIM as a function of the training set size for supervised and self-supervised training with fractions $\mu=0.33$ and $\mu=0.28$ of all frequencies available for self-supervised training. Note that the fraction of frequencies in the target plays the same role as the noise variance for noise2noise denoising: The larger the fraction, the better in general. 
Since we keep the undersampling factor $p=0.25$ of the training inputs $\vy$ fixed, a smaller $\mu$ implies a smaller undersampling factor $q$ (less frequencies) of the training targets $\vy'$.

Networks trained in a self-supervised manner with sufficiently many training examples perform as well as  the same network trained in a supervised manner (but with fewer examples). 
The fewer measurements are available, i.e., the smaller $\mu$, the more training examples are required to achieve the same performance as the same network trained in a supervised manner. 
This parallels our theoretical and empirical findings for denoising 
where noise2noise training approaches the performance of supervised training as a function of the training set size and at a rate dependent of the training targets. 

Figure~\ref{fig:selfsup_sup_cs_imagenet}, right panel, depicts histograms of the variance of the stochastic gradients. The noisier the gradients, the slower the performance increase as a function of the training examples.

The performance of the zero-filled baseline in Figure~\ref{fig:selfsup_sup_cs_imagenet} is computed between the network inputs $\inv{\mF}\vy$ and ground truth images $\vx$ in the test set. As we can see self-supervised training requires at least 250 training images to outperform this trivial baseline.

\begin{figure*}

    \definecolor{darkgray176}{RGB}{176,176,176}
    \definecolor{green}{RGB}{0,128,0}
    \definecolor{lightgray204}{RGB}{204,204,204}

		\centering
		\begin{tikzpicture}
			\begin{groupplot}[
				group style={
					group size=2 by 1,
					horizontal sep=1.5cm,
				},
				x tick label style={font=\normalsize}, 
				y tick label style={font=\normalsize},
				height=\plotheight,
				every tick label/.append style={font=\small},
				]
                
                \nextgroupplot[legend style={font=\scriptsize, at={(0.99,0.02)}, anchor=south east,label style = {font=\small}, draw=black,fill=white,legend cell align=left, nodes={inner sep=2pt,text depth=0pt}},ylabel ={SSIM},xlabel = {Training set size $N$},ylabel style={yshift=-3pt},grid = both, major grid style = {lightgray!25}, minor grid style = {lightgray!25},
                xmin=40, xmax=60000, ymin=0.81, ymax=0.93,ymode=log, ytick={0.81,0.84,0.87,0.90,0.93}, yticklabels={$0.81$,$0.84$,$0.87$,$0.90$,$0.93$},width=\plotwidth,xmode=log]
                




                
				\addplot [color=\supcolor, line width=\linewidths, mark=\markerthree, mark size=\markersize,error bars/.cd,y dir=both,y explicit] table [x=N, y =ssim, col sep=comma] {./txtfiles/N2Nmri_sup_ssim_RandInput_woDots.txt};

                \addplot [only marks, color=\supcolor, mark=\markerthree, mark size=\markersize, forget plot] table [x=N, y =ssim, col sep=comma] {./txtfiles/N2Nmri_sup_ssim_RandInput_allDots.txt};

				\addplot [color=\selfsupcolorTwo, line width=\linewidths, mark=\markerthree, mark size=\markersize,error bars/.cd,y dir=both,y explicit] table [x=N, y =ssim, col sep=comma] {./txtfiles/N2Nmri_selfsup_ssim_RandInput_woDots.txt};

                \addplot [only marks, color=\selfsupcolorTwo, mark=\markerthree, mark size=\markersize, forget plot] table [x=N, y =ssim, col sep=comma] {./txtfiles/N2Nmri_selfsup_ssim_RandInput_allDots.txt};

                \addplot [color=darkgray, dashed, line width=\linewidths, domain=10:1000000]{0.84396942602};
    
                \addlegendentry{Supervised $\mu$=1}; 
                \addlegendentry{Self-sup. $\mu$=0.33}; 
                \addlegendentry{Zero-filled baseline}; 

				\nextgroupplot[
				legend style={font=\scriptsize, at={(0.99,0.98)}, anchor=north east,label style = {font=\small}, draw=black,fill=white,legend cell align=left, nodes={inner sep=1pt,text depth=0pt}},
                legend cell align={left},
                log basis x={10},
                tick align=outside,
                tick pos=left,
                xlabel={NMSE},
                xmin=0.2, xmax=30,
                xmode=log,
                xtick style={color=black},
                ylabel={Frequency},
                ymin=0, ymax=523.9, 
                ytick style={color=white},
                yticklabels={,,},
                xtick={0.5,1,2,4,8,16}, xticklabels={$0.5$,$1$,$2$,$4$,$8$,$16$},
                 width=\plotwidthHist,
                 ylabel near ticks, ylabel shift={-10pt}
                ]
				\input{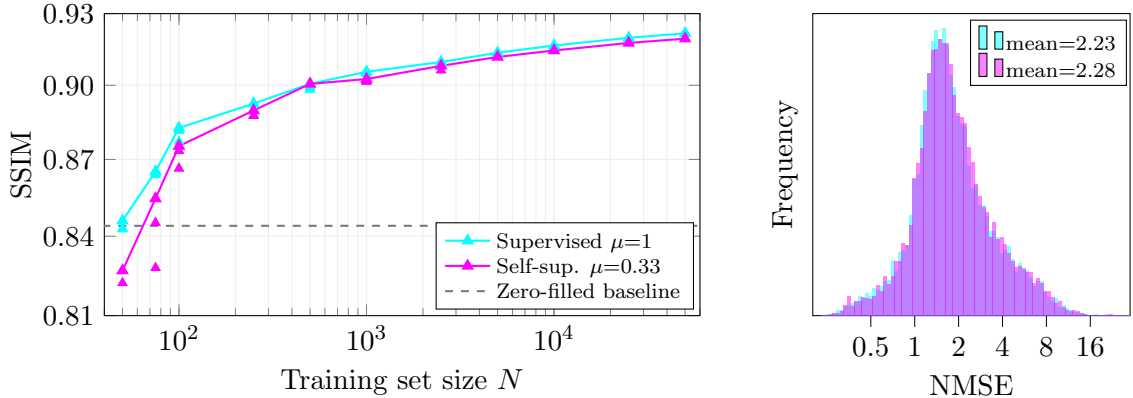}
				
			\end{groupplot}
		\end{tikzpicture}
		
    \caption{\textbf{Compressive sensing MRI.} 
    \textbf{Left:} 
    Already at small training set sizes the performance gap between supervised and self-supervised training vanishes. 
    \textbf{Right:} 
    The small performance gap on the left is explained by a small difference in the variances of the supervised and self-supervised gradients.
    }
    \label{fig:selfsup_sup_mri}
\end{figure*}

\subsection{Compressive sensing MRI}
\label{sec:emp_results_mri}

Next, we discuss the performance of noise2noise self-supervised and supervised training for multi-coil MRI reconstruction. MRI scanners use parallel imaging, where a single scan simultaneously collects measurements 
$\vy_1,\ldots, \vy_C$ of the same object with $C$ coils, where $\vy_j = \mM \mF \mS_j \vx$ with complex valued images $\vx\in\complexset^n$ and diagonal coil specific sensitivity maps $\mS_j\in\complexset^{n\times n}$, which we estimate from the center of the k-space using ESPIRiT~\cite{ueckerESPIRiTEigenvalueApproach2014}. 

\paragraph{Setup.}
We conduct our experiments on a subset of the fastMRI brain dataset~\cite{zbontarFastMRIOpenDataset2018}. We design fixed training sets with 50 to 50k images. 
We train U-nets with 31M network parameters.
To demonstrate that our results for comparing supervised and self-supervised training schemes translate qualitatively to other network architectures, we further train end-to-end variational networks (E2E-VarNet)~\cite{sriramEndtoEndVariationalNetworks2020}, a state-of-the-art architecture, of size 9M parameters.
We pick the best run out of different runs with independently sampled network initialization.
See Appendix~\ref{sec:details_cs_mri} for training details.

The network inputs are computed through SENSE-1 coil combination~\cite{pruessmannSENSE1999} of the coil measurements as $\sum_{j=1}^{C} \mS_j^\ast \inv{\mF} \vy_j$, where $\mS_j^\ast$ denotes the complex conjugate of the sensitivity map. 
Ground truth images for evaluation are computed analogously as $| \sum_{j=1}^{C} \mS_j^\ast \inv{\mF} \vy_{j,\text{full}} |$, where $\vy_{j,\text{full}}$ is the fully sampled coil k-space provided by the fastMRI dataset.
To compute the training loss in the k-space between given and predicted coil measurements, the set of predicted coil measurements is obtained from the network output as $\mF \mS_1 f_\vtheta(\vy), \ldots, \mF \mS_C f_\vtheta(\vy)$. 

We follow \citet{yamanSelfsupervisedLearningPhysicsguided2020} and make best use of the given data by 
re-sampling the input mask $\mM$ in every epoch for supervised training. 
For self-supervised training the given frequencies $\tilde \vy$ are fixed, but we can re-sample the split into measurements $\vy,\vy'$. As the network sees more combinations of input/target pairs, the performance of supervised and self-supervised training improves. 

\paragraph{Results and discussion.}
Figure~\ref{fig:selfsup_sup_mri} shows the performance of U-nets trained in a supervised and self-supervised way for multi-coil MRI as a function of the training set size. 
We report the performance in SSIM, since a high score in SSIM was shown to correlate well with a high rating by radiologists in the fastMRI challenge~\cite{muckley2020FastMRIChallenge2021}. We find that already for small training set sizes the performance gap between supervised and self-supervised training vanishes to approximately 0.002 in SSIM. 

Figure~\ref{fig:selfsup_sup_mri}, right panel, depicts the histogram of the variance of the stochastic gradients. The difference in variance of the supervised and self-supervised gradients is very small, which explains the small performance gap between supervised and self-supervised training for this setup. 
The performance of the zero-filled baseline in Figure~\ref{fig:selfsup_sup_mri} is computed between the magnitude of the network inputs $|\sum_{j=1}^{C} \mS_j^\ast \inv{\mF} \vy_j|$ and ground truth images in the test set. 

In Appendix~\ref{sec:varnet_mri} we provide additional results for the E2E-VarNet. As expected, the VarNet outperforms the U-net by a significant margin for every training set size, especially for small training set sizes. However, the relative performance gap between VarNets trained in a supervised and self-supervised way is very small, analogous as for the U-net.

\section{Conclusion}

We characterized the sample complexity of deep learning based image reconstruction models trained with a class of self-supervised losses including noise2noise based on computing unbiased estimates of the gradients of the supervised loss. Our work shows that if sufficiently many training examples are available, it is as good to work with pairs of independent measurements as with training data with ground-truth images. In particular for accelerated MRI, training networks only based on undersampled data is sufficient for peak performance.


One advantage of supervised training over self-supervised training that is not reflected in our results is that supervised training has more flexibility in choosing the loss function, and choosing a loss function such as SSIM might give better visual image quality than training with the MSE. Our results of self-supervised training approaching supervised training pertain to training with the MSE. 

We hasten to note that the class of methods that enable the computation of unbiased estimates of gradients of the supervised loss studied in this work, are in practice limited to problem setups in which obtaining two independent measurements of the same underlying signal is feasible.

\subsubsection*{Reproducibility}

The repository at 
\url{https://github.com/MLI-lab/sample_complexity_ss_recon}
contains the code to reproduce all results in the main body of this paper.

\subsubsection*{Acknowledgments}

The authors are supported by the Institute of Advanced Studies at the Technical University of Munich,
the Deutsche Forschungsgemeinschaft (DFG, German
Research Foundation) - 456465471, 464123524, the DAAD, the German Federal Ministry of Education and Research,
and the Bavarian State Ministry for Science and the Arts.
The authors also acknowledge the financial support by the Federal Ministry of Education and Research of Germany in the programme of ''Souveraen. Digital. Vernetzt.''. Joint project 6G-life, project identification number: 16KISK002. 

\printbibliography{}


\newpage
\appendix
\section{Proofs for propositions in Section~\ref{sec:background}}
\label{sec:proof_propositions}

\begin{proof}[Proof of Proposition~\ref{prop:N2Nl2}]
We have that
\begin{align*}
\EX[\vx,\vy,\vy']{ \norm[2]{  f_\vtheta(\vy) - \vy' }^2 }
&=
\EX[\vx,\vy,\ve]{ \norm[2]{  f_\vtheta(\vy) - \vx - \ve }^2 } \\
&= 
\EX[\vx,\vy]{ \norm[2]{ f_\vtheta(\vy) - \vx}^2 }
- 
\EX[\vx,\vy,\ve]{ (f_\vtheta(\vy) - \vx)^T \ve  }
+
\EX{ \norm[2]{ \ve }^2 } \\
&\mystackrel{(i)}{=}
\EX[(\vx,\vy)]{ \norm[2]{ f_\vtheta(\vy) - \vx}^2 }
+
\EX{ \norm[2]{ \ve }^2 },
\end{align*}
where equation (i) follows from condition in Proposition~\ref{prop:N2Nl2}.  
Noting that the expectation over the norm of the random variable $\ve$ above is constant as a function of $\vtheta$ proves the claim.
\end{proof}

\begin{proof}[Proof of proposition~\ref{prop:unsupervisedMRI}]
We show that
\begin{align*}
\EX[\mM']{ \norm[2]{\mW(\mM'\mF f_\vtheta(\vy) - \vy' ) }^2 }
=
\norm[2]{ f_\vtheta(\vy) - \vx }^2. 
\end{align*}
To see this, note that 
\begin{align*}
\EX[\mM']{ \norm[2]{\mW(\mM'\mF f_\vtheta(\vy) - \vy' ) }^2 }
&=
\EX[\mM']{ \norm[2]{\mW(\mM'\mF f_\vtheta(\vy) - \mM'\mF \vx ) }^2 } \\
&=
\EX[\mM']{ \norm[2]{\mW\mM'\mF ( f_\vtheta(\vy) - \vx ) }^2 } \\
&\mystackrel{(i)}{=} 
\EX[\mM']{ \sum_j w_j^2 m_j^2 [\mF ( f_\vtheta(\vy) - \vx )]_j^2 } \\
&\mystackrel{(ii)}{=} 
 \sum_j [\mF ( f_\vtheta(\vy) - \vx )]_j^2 \\
&= \norm[2]{ f_\vtheta(\vy) - \vx }^2,
\end{align*}
where equation (i) follows from $\mM'$ and $\mW$ being diagonal matrices and equation (ii) from $\mW = \EX{\mM'}^{-1/2}$ the condition in Proposition~\ref{prop:unsupervisedMRI}. 
\end{proof}

\section{Proof of Theorem~\ref{thm:linear} in Section~\ref{sec:theory_main}}
\label{sec:proofs}

Our result is based on a relatively standard analysis of the convergence of the stochastic gradient method. 
In order to minimize the loss, we perform one pass of the stochastic gradient method, i.e., starting at $\vtheta_0=0$, we compute the iterations 
\[
\vtheta_{k+1} = \vtheta_k - \stepsize_k G_k(\vtheta_k)
, \quad k = 1,\ldots, \N,
\]
where, for notational convenience, we defined the stochastic gradient
\begin{align}
\label{eq:defstochgradient}
G_k(\vtheta) 
= 
\nabla_\vtheta  \norm[2]{f_\vtheta(\vy_k) - \vy'_k}^2,
\end{align}
where the pair $(\vy_k,\vy_k')$ is chosen from the data distribution, independently over $k$. 

The stochastic gradient is
\begin{align*}
G(\vtheta)
&=
\nabla_\vtheta \norm[2]{f_\vtheta(\vy) - \vy'}^2 \\
&=
\nabla_\vtheta \norm[2]{ f_\vtheta(\vy) - (\vx + \ve) }^2 \\
&=
\nabla_\vtheta
\left( \norm[2]{ f_\vtheta(\vy) - \vx }^2  - 2 \innerprod{f_\vtheta(\vy) - \vx}{\ve} + \norm[2]{\ve}^2 \right)  \\
&=
\nabla_\vtheta \frac{1}{2} \norm[2]{ f_\vtheta(\vy) - \vx }^2  
- \transp{\mJ}_{\vtheta,\vy} \ve,
\end{align*}
where $\mJ_{\vtheta,\vy} \in \reals^{n\times p}$ is the Jacobian of the function $f_\vtheta$ with respect to the parameters $\vtheta$. 
Note that the function $f_\vtheta$ has parameters $\vtheta\in\reals^p$ and maps a vector in $\reals^n$ to another vector in $\reals^n$; we consider the Jacobian with respect to the parameters, not with respect to the input vector.
The stochastic gradient is an unbiased estimate of the gradient of the risk, i.e., 
\begin{align}
\EX{G(\mW)} = \nabla_\vtheta \EX{ \norm[2]{f_\vtheta(\vy) - \vx}^2 } = \nabla R(\vtheta).
\end{align}
The variance of the stochastic gradient is
\begin{align}
\EX{ \norm[2]{G(\vtheta)}^2 }
&=
\EX{ \norm[2]{ \nabla_\vtheta  \norm[2]{ f_\vtheta(\vy) - \vx }^2  
- \transp{\mJ}_{\vtheta,\vy} \ve  }^2 } \nonumber \\
\label{eq:boundstochasticgd}
&=
\EX{ \norm[2]{ \nabla_\vtheta \norm[2]{ f_\vtheta(\vy) - \vx }^2
}^2} +  \frac{\sigma_e^2}{n} \norm[F]{ \mJ_{\vtheta,\vy} }^2.
\end{align}

We are now ready to prove Theorem~\ref{thm:linear}. Recall that for the theorem, we consider a simple denoising problem where $\vx = \mU \vc$, $\vc \sim \mc N(0,1/d \mI)$ and the measurement is $\vy = \vx + \vz$, where $\vz \sim \mN (0,\sigma_z^2/n \mI)$. 
Also, we consider a simple linear estimator $f(\vy) = \mW \vy$ parameterized by the matrix $\mW \in \reals^{n\times n}$. 
For this setup, the risk becomes 
\begin{align*}
R(\mW) 
&= \EX{ \norm[2]{\mW \vy - \vx}^2 } \\
&= \frac{1}{d} \norm[F]{ (\mW - \mI) \mU }^2 + \frac{\sigma_z^2}{n} \norm[F]{\mW}^2. 
\end{align*}
Thus, the risk is $\SCconst$-strongly convex with $\SCconst = \frac{\sigma_z^2}{n}$, since the risk is lower-bounded by the quadratic $\frac{\sigma_z^2}{n} \norm[F]{\mW}^2$. Moreover, the gradient of the risk is 
\begin{align}
\nabla R(\mW) 
=
\frac{1}{d} (\mW - \mI) \mU \transp{\mU} + \frac{\sigma_z^2}{n} \mW,
\end{align}
and setting this to zero yields the risk-minimizing parameter
\begin{align*}
\mW^\ast = \frac{1}{1+\sigma_z^2\frac{d}{n}} \mU \transp{\mU}.
\end{align*}
Below, we show that the second moment of the risk is $(\SSconst,B)$-bounded in that for all $\mW$,
\begin{align}
\label{eq:boundedvargrad}
\EX{\norm[2]{G(\mW)}^2} 
\leq 
\SSconst^2 \norm[F]{\mW - \mW^\ast}^2 + B^2,
\end{align}
where $\mW^\ast$ is a minimizer of the risk $R(\mW)$, and where 
$\SSconst = 10/d$ and $B = 12 \sigma_z^2 \frac{d}{n} + \sigma_e^2 (1+\sigma_z^2)$. 

We proceed by applying the following standard result for the convergence of the SGM. For the sake of completeness, the result is proven below. 

\begin{lemma}
\label{sgd:convergencebound}
Let $f$ be $\SCconst$-strongly convex (i.e., $g(\vtheta) = f(\vtheta) - \frac{\SCconst}{2} \norm[2]{\vtheta}^2$ is convex) with minimizer $\vtheta^\ast$ and let $G$ be a stochastic gradient that is $(\SSconst,B)$-bounded. Assume we start the SGM iterates at $\vtheta_0 = 0$, with a stepsize $\stepsize_k = \frac{2}{m} \frac{1}{2 \frac{\SSconst^2}{\SCconst^2} + k}$. Then
\begin{align*}
\EX{\norm[2]{\vtheta_k - \vtheta^\ast}^2}
\leq
\frac{1}{k-2} \frac{1}{\SCconst^2} \left( 2 \SSconst^2 e_0
+ B^2\right).
\end{align*}
\end{lemma}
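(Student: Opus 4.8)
The plan is to run the textbook energy-decrease argument for the stochastic gradient method, tracking $e_k := \EX{\norm[2]{\vtheta_k - \vtheta^\ast}^2}$ (so that $e_0 = \norm[2]{\vtheta^\ast}^2$ since $\vtheta_0 = 0$). First I would expand the squared distance to the minimizer after one step,
\[
\norm[2]{\vtheta_{k+1}-\vtheta^\ast}^2 = \norm[2]{\vtheta_k-\vtheta^\ast}^2 - 2\stepsize_k \innerprod{G_k(\vtheta_k)}{\vtheta_k-\vtheta^\ast} + \stepsize_k^2 \norm[2]{G_k(\vtheta_k)}^2,
\]
and take the conditional expectation given $\vtheta_k$. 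Unbiasedness of the stochastic gradient turns the cross term into $\innerprod{\nabla f(\vtheta_k)}{\vtheta_k-\vtheta^\ast}$, and $\SCconst$-strong convexity together with $\nabla f(\vtheta^\ast)=0$ bounds this below by $\SCconst\norm[2]{\vtheta_k-\vtheta^\ast}^2$; the $(\SSconst,B)$-boundedness hypothesis controls the last term by $\SSconst^2\norm[2]{\vtheta_k-\vtheta^\ast}^2 + B^2$. Taking total expectation then gives the scalar recursion
\[
e_{k+1} \le \left(1 - 2\SCconst\stepsize_k + \SSconst^2\stepsize_k^2\right) e_k + B^2 \stepsize_k^2 .
\]

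Next I would substitute the prescribed stepsize $\stepsize_k = \frac{2}{\SCconst}\frac{1}{2\SSconst^2/\SCconst^2 + k}$. Writing $t = 2\SSconst^2/\SCconst^2 + k$, this makes $2\SCconst\stepsize_k = 4/t$ and $\SSconst^2\stepsize_k^2 = (2\cdot 2\SSconst^2/\SCconst^2)/t^2 \le 2/t$ (using $2\SSconst^2/\SCconst^2 \le t$), so the contraction factor is at most $1 - 2/t$ while the additive noise is $B^2\stepsize_k^2 = (4B^2/\SCconst^2)/t^2$. The whole point of the offset $2\SSconst^2/\SCconst^2$ in the denominator is exactly this: it is chosen so that the multiplicative variance term $\SSconst^2\stepsize_k^2$ is reabsorbed into the contraction, leaving an effective rate strong enough (larger than $1/t$) to produce a $1/k$ decay. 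I would then solve the reduced recursion $e_{k+1}\le (1-2/t)\,e_k + (4B^2/\SCconst^2)/t^2$ by induction; a clean route is the telescoping substitution $p_k := (t-1)(t-2)\,e_k$, for which multiplying the recursion by $t(t-1)$ and using $(1-2/t)\,t(t-1)=(t-1)(t-2)$ yields $p_{k+1}\le p_k + (t-1)/t\cdot 4B^2/\SCconst^2 \le p_k + 4B^2/\SCconst^2$. Summing telescopes $p_k$ into an initial-error contribution plus a noise contribution growing linearly in $k$, and dividing back by $(t-1)(t-2)\ge (k-1)(k-2)$ gives a bound of the advertised form $\frac{1}{k-2}\frac{1}{\SCconst^2}(2\SSconst^2 e_0 + B^2)$, valid for $k\ge 3$ where the right-hand side is finite.

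The \textbf{main obstacle} is the constant bookkeeping rather than any conceptual difficulty: one must verify that the prescribed stepsize simultaneously lets both the discounted initial error $2\SSconst^2 e_0/\SCconst^2$ and the noise floor $B^2/\SCconst^2$ fit under a single $\frac{1}{k-2}$ envelope, and must handle the singularity of that envelope at $k\le 2$ by arranging a valid base case for the induction. A secondary subtlety worth stating explicitly is that strong convexity can only be invoked \emph{after} conditioning and using unbiasedness (so that the random $G_k$ is replaced by the true gradient in the cross term); keeping this ordering straight is what makes the otherwise routine recursion go through.
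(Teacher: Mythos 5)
Your proposal is correct and follows essentially the same route as the paper's proof: the same expansion--conditioning--strong-convexity argument yielding $e_{k+1}\le\left(1-2\SCconst\stepsize_k+\SSconst^2\stepsize_k^2\right)e_k+B^2\stepsize_k^2$, the same absorption of the variance term into the contraction (using $2\beta\le t$ with $\beta=\SSconst^2/\SCconst^2$, $t=2\beta+k$) to reach $e_{k+1}\le\left(1-\frac{2}{t}\right)e_k+\frac{4B^2}{\SCconst^2 t^2}$, and your weighted telescoping $p_k=(t-1)(t-2)e_k$ is just a repackaging of the paper's explicit unrolling via the telescoping product $\prod_j\left(1-\frac{2}{2\beta+j}\right)$. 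The one caveat you correctly anticipated as "constant bookkeeping" is real but shared: your telescoped noise term comes out as roughly $\frac{4kB^2}{\SCconst^2(2\beta+k-1)(2\beta+k-2)}\approx\frac{4B^2}{\SCconst^2(k-2)}$ rather than $\frac{B^2}{\SCconst^2(k-2)}$, and the paper's own final display absorbs exactly the same factor without comment (while your handling of the initial-error term, $\frac{(2\beta-1)(2\beta-2)e_0}{(2\beta+k-1)(2\beta+k-2)}\le\frac{2\beta e_0}{k-2}$, actually avoids the implicit condition $k\ge 2\beta+2$ that the paper's step $\frac{4\beta^2}{(k-2)^2}e_0\le\frac{2\beta e_0}{k-2}$ requires), so your proof matches the paper's argument at the paper's own level of rigor.
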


Below we show that for all $\mW$
\begin{align}
\label{eq:boundriskls}
R(\mW) - R(\mW^\ast) 
\leq 
(1/d + \sigma_z^2/n)
\norm[F]{\mW - \mW^\ast}^2.
\end{align}
Thus, an application of Lemma~\ref{sgd:convergencebound} to the RHS yields
\begin{align*}
R(\mW_\N) - R(\mW^\ast) 
\leq 
(1/d + \sigma_z^2/n)
\frac{1}{k-2} \left( 2 \frac{\SSconst^2}{\SCconst^2}  \norm[F]{\mW^\ast}^2
+   \frac{B^2}{\SCconst^2}\right) \\
\leq 
\frac{1/d + \sigma_z^2/n}{ (\sigma_z^2/n)^2 }
\frac{1}{k-2} \left( 2 +   B^2\right),
\end{align*}
which concludes the proof of Theorem~\ref{thm:linear}.

\subsection{Bound on variance of stochastic gradient, proof of equation~\eqref{eq:boundstochasticgd}:}

From equation~\eqref{eq:boundstochasticgd}, we have 
\begin{align*}
\EX{ \norm[F]{G(\mW)}^2 }
&=
\EX{ \norm[F]{ \nabla_\mW \norm[2]{ \mW \vy - \vx }^2
}^2} +  \frac{\sigma_e^2}{n} \EX{ \norm[F]{ [\vx+\vz,\ldots, \vx+\vz] }^2 } \\
&= 
\EX{ \norm[F]{
((\mW - \mI) \vx + \mW \vz) \transp{(\vx + \vz)} }^2 }
+
\EX{ \norm[F]{\ve \transp{(\vx+\vz)} }^2} \\
&= R + \sigma_e^2 (1+\sigma_z^2).
\end{align*}
The term $R$ can be bounded as
\begin{align*}
R 
&= \EX{ \norm[F]{
((\mW - \mW^\ast + \mW^\ast  - \mI) \vx + (\mW - \mW^\ast + \mW^\ast) \vz) \transp{(\vx + \vz)} }^2  } \\
&=
\EX{ \norm[F]{ (\mW - \mW^\ast)(\vx \transp{\vx} + \vx \transp{\vz}+\vz \transp{\vx}+\vz \transp{\vz} )
+ 
(\mW^\ast -\mI)(\vx \transp{\vx}+\vx \transp{\vz}) + \mW^\ast(\vz \transp{\vx} + \vz \transp{\vz}) }^2 }  \\
&\leq
2\EX{ \norm[F]{ (\mW - \mW^\ast)(\vx \transp{\vx} + \vx \transp{\vz}+\vz \transp{\vx}+\vz \transp{\vz} )}^2} \\
&+ 2
\EX{\norm[F]{ 
(\mW^\ast -\mI)(\vx \transp{\vx}+\vx \transp{\vz}) + \mW^\ast(\vz \transp{\vx} + \vz \transp{\vz}) }^2 } \\
&= 2 R_1 + 2 R_2.
\end{align*}
We next bound $R_1$ and $R_2$. 

\paragraph{Bound on $R_1$:}
To bound $R_1$, first note that for $\vx \sim \mc N(0,1/d \mI)$,
\begin{align*}
\EX{\norm[F]{\mA \vx \transp{\vx}}^2 }
=
\mathrm{trace} \EX{ \mA \vx \transp{\vx} \vx \transp{\vx} \transp{\mA}}
\leq 2  \mathrm{trace} \EX{ \mA \vx \transp{\vx} \transp{\mA}}
= 2 \EX{ \norm[2]{\mA \vx}^2 }
= \frac{2}{d} \norm[F]{\mA \mU}^2. 
\end{align*}
Similarly, for $\vx \sim \mc N(0,1/d \mI)$ and $\vz \sim \mc N(0,\sigma_z^2/n \mI)$
\begin{align*}
\EX{\norm[F]{\mA \vx \transp{\vz}}^2 }
=
\mathrm{trace} \EX{ \mA \vx \transp{\vz} \vz \transp{\vx} \transp{\mA}}
= \sigma_z^2  \mathrm{trace} \EX{ \mA \vx \transp{\vx} \transp{\mA}}
= \sigma_z^2 \EX{ \norm[2]{\mA \vx}^2 }
= \frac{\sigma_z^2}{d} \norm[F]{\mA \mU}^2,
\end{align*}
and 
\begin{align*}
\EX{\norm[F]{\mA \vz \transp{\vx}}^2 }
=
\mathrm{trace} \EX{ \mA \vz \transp{\vx} \vx \transp{\vz} \transp{\mA}}
=  \mathrm{trace} \EX{ \mA \vz \transp{\vz} \transp{\mA}}
= \EX{ \norm[2]{\mA \vz}^2 }
= \frac{\sigma_z^2}{n} \norm[F]{\mA}^2,
\end{align*}
Also similarly,
\begin{align*}
\EX{\norm[F]{\mA \vz \transp{\vz}}^2 }
\leq \frac{2\sigma_z^4}{n} \norm[F]{\mA}^2.
\end{align*}
It follows that 
\begin{align*}
R_1 
\leq 
\norm[F]{\mW - \mW^\ast}^2 \left( \frac{2}{d} + \frac{2\sigma_z^2}{d} + \frac{2\sigma_z^4}{n}\right).
\end{align*}

\paragraph{Bound on $R_2$:}
Note that 
\begin{align*}
R_2
&\leq
\norm[F]{(\mW^\ast - \mI)\mU}^2 \frac{2+\sigma_z^2}{d}
+
\norm[F]{\mW^\ast}^2 \frac{ \sigma_z^2 + 2 \sigma_z^4}{n}  \\
&\leq
\left(\sigma_z^2 \frac{d}{n} \right)^2 d \frac{2+\sigma_z^2}{d}
+
d \frac{ \sigma_z^2 + 2 \sigma_z^4}{n}  \\
&\leq
3 \left(\sigma_z^2 \frac{d}{n} \right)^2 
+
d \frac{ \sigma_z^2 + 2 \sigma_z^4}{n}  \\
&\leq
6\frac{d}{n} \sigma_z^2.
\end{align*}
where we used that 
$\norm[F]{(\mW^\ast - \mI)\mU}^2 \leq \left(\sigma_z^2 \frac{d}{n} \right)^2 d $ and $\norm[F]{\mW^\ast}^2 \leq d$, and for the last inequality we used that $\sigma_z^2 \leq 1$, by assumption. 

Application of the bounds on $R_1$ and $R_2$ above concludes the proof of Equation~\eqref{eq:boundstochasticgd}.

\subsection{Proof of equation~\eqref{eq:boundriskls}}

Note that the risk can be written as 
\begin{align}
R(\mW)
=
\EX{\norm[2]{\mW \vy - \vx}^2}
= 
\EX{\norm[2]{\mY \vw - \vx}^2},
\end{align}
where $\vw = \transp{[\transp{\vw}_1, \ldots, \transp{\vw}_n]} \in \reals^{n^2}$ and $\mY \in \reals^{n\times n^2}$ is a block diagonal matrix containing the row-vector $\transp{\vy}$ on the diagonal. 
With this notation,
\begin{align*}
R(\mW)
&=
\EX{\norm[2]{\mY \vw - \vx}^2} \\
&=
\EX{\norm[2]{\mY \vw - \mY\vw^\ast + \mY\vw^\ast - \vx}^2} \\
&=
\EX{\norm[2]{\mY(\vw - \vw^\ast)}^2} 
+ 
2\EX{ \transp{( \mY( \vw - \vw^\ast )} (\mY\vw^\ast - \vx) }
+
\EX{\norm[2]{\mY\vw^\ast - \vx^\ast}^2} \\
&=
\EX{\norm[2]{\mY(\vw - \vw^\ast)}^2} 
+ 
2
\transp{( \vw - \vw^\ast )}
\EX{ \transp{\mY} ( \mY\vw^\ast - \vx ) }
+
R(\mW^\ast) \\
&\mystackrel{i}{=}
\EX{\norm[2]{\mY(\vw - \vw^\ast)}^2} 
+
R(\mW^\ast) \\
&\leq
(1/d + \sigma_z^2/n)
\norm[F]{\mW - \mW^\ast}^2  
+
R(\mW^\ast) 
\end{align*}
where equality i follows from $0 = \nabla R(\vw^\ast) =  \EX{ 2\transp{\mY} ( \mY\vw^\ast - \vx ) }$ for a minimizer $\vw^\ast$, and the last inequality follows from $\vy = \mU \vc + \vz$ where $\vc \sim \mc N(0,1/d \mI)$ and $\vx \sim \mc N(0,\sigma_z^2/2 \mI)$.


\subsection{Proof of Lemma \ref{sgd:convergencebound}}

We start by upper bounding the expectation $e_k = \EX{\norm[2]{\vtheta_k - \vtheta^\ast}^2}$. 
We write $G_k(\vtheta_k) = G(\vtheta_k)$ for the stochastic gradient at iteration $k$ to emphasize that at each iteration, the gradient is drawn independently from the gradient at other iterations. First note that
\begin{align*}
\norm[2]{\vtheta_{k+1} - \vtheta^\ast}^2
&=
\norm[2]{\vtheta_{k} - \stepsize_k G_k(\vtheta_k) - \vtheta^\ast}^2 \\
&= 
\norm[2]{\vtheta_{k} - \vtheta^\ast}^2
- 2 \stepsize_k \innerprod{G_k(\vtheta_k)}{\vtheta_k - \vtheta^\ast}
+ \stepsize_k^2 \norm[2]{G_k(\vtheta_k)}^2.
\end{align*}
The expectation of the middle term above is
\begin{align}
\EX{\innerprod{G_k(\vtheta_k)}{\vtheta_k - \vtheta^\ast}}
=
\EX{\innerprod{ \EX{G_k(\vtheta_k) | \vtheta_k }   }{\vtheta_k - \vtheta^\ast}}
=
\EX{\innerprod{  \nabla \f(\vtheta_k)  }{\vtheta_k - \vtheta^\ast}},
\end{align}
where we used that the random variable $G_k$ is independent of the random variable $G_i, i<k$, since each stochastic gradient is drawn independently, and is therefore independent of the iterate $\vtheta_k$. 
Thus, iterating the expectation allows us to replace the stochastic gradient by the gradient. 

Using the assumption that the stochastic gradient $G$ is $(M,B)$-bounded 
yields
\begin{align}
\label{eq:boundexp}
e_{k+1} 
\leq 
(1+\stepsize_k^2 \SSconst^2) e_k - 
2 \stepsize_k 
\EX{\innerprod{  \nabla \vf(\vtheta_k)  }{\vtheta_k - \vtheta^\ast}}
+ 
\stepsize_k^2 B^2.
\end{align}

With $\nabla \f(\vtheta^\ast) = 0$, we have that 
\[
\innerprod{  \nabla \f(\vtheta_k)  }{\vtheta_k - \vtheta^\ast}
=
\innerprod{  \nabla \f(\vtheta_k) - \nabla \f(\vtheta^\ast)}{\vtheta_k - \vtheta^\ast}
\geq
\SCconst \norm[2]{\vtheta_k - \vtheta^\ast},
\]
where the inequality follows from $\f$ being $\SCconst$-strongly convex. Application of this inequality to~\eqref{eq:boundexp} yields 
\begin{align}
\label{eq:recadr}
e_{k+1} 
\leq 
(1+\stepsize_k^2 \SSconst^2 - 
2 \stepsize_k \SCconst) e_k
+
\stepsize_k^2 B^2.
\end{align}
Since we choose the stepsize as $\stepsize_k = \frac{2}{m} \frac{1}{2\beta + k}$ with $\beta = \frac{\SSconst^2}{\SCconst^2}$ we have $\stepsize_k \leq \frac{\SCconst}{\SSconst^2}$. Then  Equation~\eqref{eq:recadr} yields
\begin{align*}
e_{k+1} 
\leq 
(1 - \stepsize_k \SCconst) e_k
+
\stepsize_k^2 B^2.
\end{align*}
Unrolling the iterations gives
\begin{align*}
e_{k}
&\leq 
(1 - \stepsize_{k-1} \SCconst) e_{k-1}
+
\stepsize_{k-1}^2 B^2 \\
&\leq 
(1 - \stepsize_{k-1} \SCconst) (1 - \stepsize_{k-2} \SCconst)e_{k-2}
+
(1 - \stepsize_{k-1} \SCconst) \stepsize_{k-2}^2 B^2
+
\stepsize_{k-1}^2 B^2 \\
&\leq 
\prod_{i=0}^{k-1}
(1 - \stepsize_{i} \SCconst) e_0
+
B^2 \sum_{i=0}^{k-1}
\stepsize_{i}^2
\prod_{j=i+1}^{k-1} (1-\stepsize_{j} \SCconst) \\
&=
\prod_{i=0}^{k-1}
\left(1 - \frac{2}{2\beta+i} \right) e_0
+
B^2 \sum_{i=0}^{k-1}
\left( \frac{1}{\SCconst} \frac{2}{2\beta+i} \right)^2
\prod_{j=i+1}^{k-1} \left(1 - \frac{2}{2\beta+j} \right).
\end{align*}
Using that $\prod_{j=i}^{k} \left(1 - \frac{2}{j} \right)^2 = \frac{(i-2)(i-1)}{(k-1)k}$, we get
\begin{align*}
e_{k} 
&\leq 
\frac{(2\beta - 2)(2\beta-1)}{ (k-2)(k-1) } e_0
+
\frac{B^2}{\SCconst^2} \sum_{i=0}^{k-1}
\left(\frac{2}{2\beta+i} \right)^2
\frac{(2\beta + i - 1)(2\beta + i)}{(k-2)(k-1)} \\
&\leq 
\frac{4\beta^2}{ (k-2)^2} e_0
+
\frac{B^2}{\SCconst^2} \frac{1}{k-2} \\
&\leq 
\frac{1}{k-2} \left( 2\beta e_0
+
\frac{B^2}{\SCconst^2} \right),
\end{align*}
which concludes the proof.

\section{Additional material for Section~\ref{sec:emp_denoising}: Self-supervised image denoising}


\subsection{Gaussian image denoising}
\label{sec:gaus_den}

The Gaussian image denoising experiments are conducted on the following datasets. From the ImageNet dataset~\cite{russakovskyImageNetLargeScale2015} we reserve 80/230 images for validation/testing, and  create training sets $\setS_N$ of size $N$ ranging from 100 to 300k images and $\setS_i$ $\subset$ $\setS_j$ for $i < j$. For the training set we center crop images to size $128 \times 128$.

Further, all following experiments that rely on the U-net as the network architecture, use a U-net with two blocks in the encoder and decoder part respectively and skip connections between blocks. Each block consists of two convolutional layers with LeakyReLU activation and instance normalization~\citep{ulyanovInstanceNormalizationMissing2017} after every layer, where the number of channels is doubled (halved) after every block in the encoder (decoder). The downsampling in the encoder is implemented as average pooling and the upsampling in the decoder as transposed convolutions. 
As proposed by ~\citet{zhangDenoiserDeepCNN2017} we train a residual denoiser that learns to predict $\vy-\vy^\prime$ instead of directly predicting $\vy^\prime$, which improves performance.

All experiments were conducted on NVIDIA A40, NVIDIA RTX A6000 and NVIDIA Quadro RTX 6000 GPUs. We measure the time in GPU hours until the best epoch according to the validation loss resulting in about 700 GPU hours for the experiments on self-supervised Gaussian denoising presented in Figure~\ref{fig:selfsup_sup_den}.


\begin{figure*}
    \def\markersize{2.0pt}
    \def\markersizetwo{1.6pt}
    \def\marker{*}
    \def\markertwo{square*}
    \def\markerthree{triangle*}
    \def\linewidths{0.8pt}
    \def\ylabelxshift{-0.11}

    \pgfplotsset{
        compat=newest,
        /pgfplots/legend image code/.code={
            \draw[mark repeat=2,mark phase=2] 
            plot coordinates {
                (0.0cm,0cm)
                (0.3cm,0cm)        
                (0.6cm,0cm)         
            };%
        }}
		\centering
		\begin{tikzpicture}
			\begin{groupplot}[
				group style={
					group size=1 by 1,
				},
				x tick label style={font=\normalsize}, 
				y tick label style={font=\normalsize},
				xmode=log,
				width=\plotwidth,
				height=\plotheight,
				grid = both,
				every tick label/.append style={font=\small},
				major grid style = {lightgray!25},
				minor grid style = {lightgray!25},
				]

				\nextgroupplot[legend style={font=\small, at={(0.99,0.02)}, anchor=south east,label style = {font=\small}, draw=black,fill=white,legend cell align=left, rounded corners=2pt, nodes={inner sep=2pt,text depth=0pt}},ylabel ={PSNR (dB)},xlabel = {Training set size $N$},]

				\addplot [color=c3new, line width=\linewidths, mark=\markerthree, mark size=\markersize, forget plot,error bars/.cd,y dir=both,y explicit] table [x=N, y =psnr, col sep=comma] {./txtfiles/Den_N2N_FixNoise_Sup_c128_woDots.txt};
				
				\addplot [color=c3new, line width=\linewidths, mark=\markerthree, mark size=\markersize, forget plot,error bars/.cd,y dir=both,y explicit] table [x=N, y =psnr, col sep=comma] {./txtfiles/Den_N2N_FixNoise_SelfSup25_c128_woDots.txt};

				\addplot [color=c3new, line width=\linewidths, mark=\markerthree, mark size=\markersize,error bars/.cd,y dir=both,y explicit] table [x=N, y =psnr, col sep=comma] {./txtfiles/Den_N2N_FixNoise_SelfSup50_c128_woDots.txt};

                \addplot [color=darkblue, line width=\linewidths, mark=\markerthree, mark size=\markersize, forget plot,error bars/.cd,y dir=both,y explicit] table [x=N, y =psnr, col sep=comma] {./txtfiles/Den_N2N_FixNoise_Sup_bestNet_woDots.txt};
				
				\addplot [color=darkblue, line width=\linewidths, mark=\markerthree, mark size=\markersize, forget plot,error bars/.cd,y dir=both,y explicit] table [x=N, y =psnr, col sep=comma] {./txtfiles/Den_N2N_FixNoise_SelfSup25_bestNet_woDots.txt};

				\addplot [color=darkblue, line width=\linewidths, mark=\markerthree, mark size=\markersize,error bars/.cd,y dir=both,y explicit] table [x=N, y =psnr, col sep=comma] {./txtfiles/Den_N2N_FixNoise_SelfSup50_bestNet_woDots.txt};

                \addlegendentry{Fixed network size, $\sigma_e \in \{0,25,50\}$}; 
				\addlegendentry{Adapted network size $\sigma_e \in \{0,25,50\}$}; 

			\end{groupplot}
		\end{tikzpicture}
		
    \caption{\textbf{Effect of adapting the network size for denoising.} 
    For each training set size the dark blue curves show the best performing network over a range of network sizes (see Figure~\ref{fig:N2N_scaling_laws} for all results). 
    Light blue curves show the performance for a fixed network size.
    Since the performance curves of supervised training ($\sigma_e=0$, top curves) and of self-supervised training ($\sigma_e \in \{25,50\}$, middle and bottom curves) are affected very similarly from fixing the network size, the relative comparison between supervised and self-supervised training remain valid.
    }
    \label{fig:selfsup_sup_den_adapt}
\end{figure*}
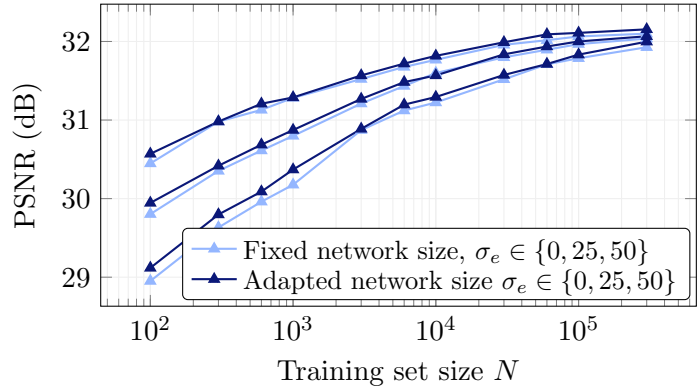

\subsubsection{Noise2noise self-supervised denoising with a U-net of varying size}
\label{sec:n2n_den_unet_vary}

The results in Figure~\ref{fig:selfsup_sup_den} are for a U-net of \textit{fixed network size}, which is the natural setup since we're interested in studying differences of training schemes, and therefore we have to keep the network architecture constant. 

However, for different training set sizes, networks of different sizes perform best; if the training set size is larger, typically a larger network performs best. In the following, we therefore perform a comparison of self-supervised training and supervised training, where we \textit{adapt the network size} along with the training set size, $N$.   

We find that adapting the network size to the training set size improves the performance of models trained in a supervised and a self-supervised fashion similarly, thus  our findings on self-supervised versus supervised training continue to hold even if we adopt the network size as we change the training set size.

To show this, we follow an approach from the scaling law literature, where the goal is to observe the isolated effect of one resource on the model performance by providing unlimited access to all other resources.
In the context of image reconstruction \citet{klugScalingLawsDeep2023}, investigate the performance as a function of the training set size without limiting network size and invested compute, i.e., by only considering the best performing model per training set size over a wide range of network sizes and by training each model until the performance converges on the validation set. Similar work has been done in other fields including computer vision~\cite{zhaiScalingVisionTransformers2022} and natural language processing~\cite{kaplanScalingLawsNLP2020}.

Figure~\ref{fig:N2N_scaling_laws} shows our additional results for adapting the network size for noise2noise self-supervised training with target noise levels $\sigma_e\in\{0,25,50\}$. 
The panels on the right show the performance as a function of the network size for a fixed training set size. The left panels show the performance as a function of the training set size, where only the best model per training set size is considered. 

Figure~\ref{fig:selfsup_sup_den_adapt} compares the three curves with adapted network size from the left panels in Figure~\ref{fig:N2N_scaling_laws} to the performance curves for a fixed network size from Figure~\ref{fig:selfsup_sup_den}.

We chose the fixed network to have a moderate size (7.4M) compared to the smallest (0.1M) and largest networks (46.5M) considered for adapting the network size.   
As a consequence in Figure~\ref{fig:selfsup_sup_den_adapt} the smallest/largest training set sizes gain most performance from adapting to a smaller/larger network size.
However, the gain in performance is small, which is also indicated by the relatively flat curves in the right panels of Figure~\ref{fig:N2N_scaling_laws}, which show the scaling of the reconstruction performance as a function of the network size. 
Further, the three training methods, supervised and self-supervised training with $\sigma_e\in\{25,50\}$, are affected similarly from fixing the network size. Hence, our findings regarding a relative comparison of supervised and self-supervised methods are not affected.

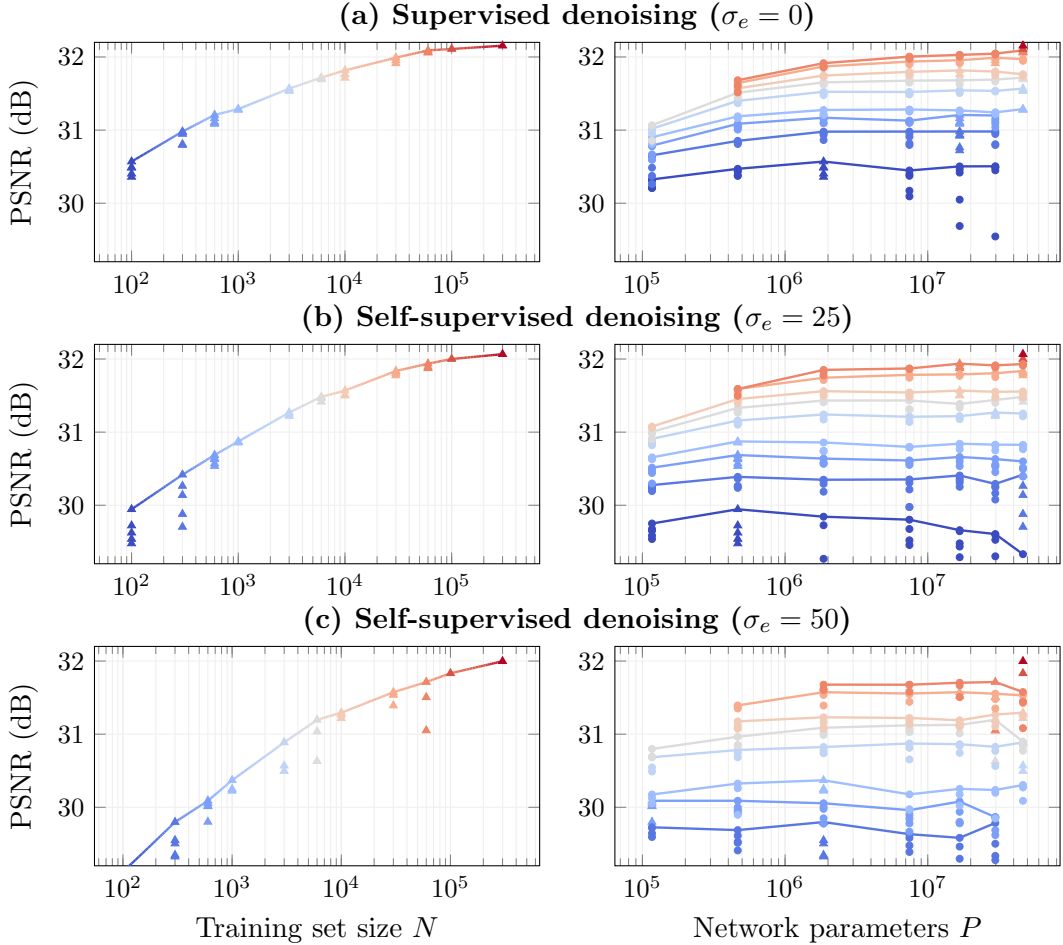
\begin{figure*}[t]
    \def\markersize{1.7pt}
	\def\markersizetwo{1.3pt}
	\def\marker{triangle*}
	\def\markertwo{*}
	\def\linewidths{0.9pt}
    \def\plotfontsize{\small}
    \def\plotfontsizeLabels{\normalsize}
    
    \centering
    \begin{tikzpicture}
    \begin{groupplot}[
        group style={
            group size=2 by 3,
            xlabels at=edge bottom,
            ylabels at=edge left,
            horizontal sep=1.0cm, 
            vertical sep=1.1cm,
            },
        x tick label style={font=\plotfontsize}, 
        y tick label style={font=\plotfontsize},
        xmode=log,
        width=7.5cm,
        height=4.5cm,
        grid = both,
        ymin=29.2, ymax=32.2,
      every tick label/.append style={font=\plotfontsize},
    major grid style = {lightgray!25},
    minor grid style = {lightgray!25},
    ]

    \nextgroupplot[
    ylabel =\plotfontsizeLabels PSNR (dB),]
					
        \addplot [only marks, color=c0new300, mark=\marker, mark size=\markersize, forget plot] table [ x=100x, y=100y, col sep=comma] {./txtfiles/Den_N2N_sup_bestNet_INtest_SLN_dots.txt};
        \addplot [only marks, color=c1new300, mark=\marker, mark size=\markersize, forget plot] table [x=300x, y=300y, col sep=comma] {./txtfiles/Den_N2N_sup_bestNet_INtest_SLN_dots.txt};
        \addplot [only marks, color=c2new300, mark=\marker, mark size=\markersize, forget plot] table [x=600x, y=600y, col sep=comma] {./txtfiles/Den_N2N_sup_bestNet_INtest_SLN_dots.txt};
        \addplot [only marks, color=c3new300, mark=\marker, mark size=\markersize, forget plot] table [x=1000x, y=1000y, col sep=comma] {./txtfiles/Den_N2N_sup_bestNet_INtest_SLN_dots.txt};
        \addplot [only marks, color=c4new300, mark=\marker, mark size=\markersize, forget plot] table [x=3000x, y=3000y, col sep=comma] {./txtfiles/Den_N2N_sup_bestNet_INtest_SLN_dots.txt};
        \addplot [only marks, color=c5new300, mark=\marker, mark size=\markersize, forget plot] table [x=6000x, y=6000y, col sep=comma] {./txtfiles/Den_N2N_sup_bestNet_INtest_SLN_dots.txt};
        \addplot [only marks, color=c6new300, mark=\marker, mark size=\markersize, forget plot] table [x=10000x, y=10000y, col sep=comma] {./txtfiles/Den_N2N_sup_bestNet_INtest_SLN_dots.txt};
        \addplot [only marks, color=c7new300, mark=\marker, mark size=\markersize, forget plot] table [x=30000x, y=30000y, col sep=comma] {./txtfiles/Den_N2N_sup_bestNet_INtest_SLN_dots.txt};
        \addplot [only marks, color=c8new300, mark=\marker, mark size=\markersize, forget plot] table [x=60000x, y=60000y, col sep=comma] {./txtfiles/Den_N2N_sup_bestNet_INtest_SLN_dots.txt};
        \addplot [only marks, color=c9new300, mark=\marker, mark size=\markersize, forget plot] table [x=100000x, y=100000y, col sep=comma] {./txtfiles/Den_N2N_sup_bestNet_INtest_SLN_dots.txt};
        \addplot [only marks, color=c10new300, mark=\marker, mark size=\markersize, forget plot] table [x=300000x, y=300000y, col sep=comma] {./txtfiles/Den_N2N_sup_bestNet_INtest_SLN_dots.txt};
				 
        \addplot [mesh,colormap={}{[1bp]  
            color(0bp)=(c0new300) 
            color(15bp)=(c1new300) 
            color(30bp)=(c2new300) 
            color(40bp)=(c3new300) 
            color(55bp)=(c4new300) 
            color(70bp)=(c5new300) 
            color(75bp)=(c6new300) 
            color(88bp)=(c7new300) 
            color(94bp)=(c8new300) 
            color(98bp)=(c9new300)
            color(100bp)=(c10new300)
        } , no markers, line width=\linewidths, forget plot] table [skip first n=0,x index=0, y index=1, col sep=comma] {./txtfiles/Den_N2N_sup_bestNet_INtest_SLN_line_extended.txt};
    
    \nextgroupplot[]

         \addplot [only marks, color=c0new300, mark=\marker, mark size=\markersize, forget plot] table [ x=Ps, y=t100_, col sep=comma] {./txtfiles/Den_N2N_sup_bestNet_INtest_SLP_dots_best.txt};
        \addplot [only marks, color=c1new300, mark=\marker, mark size=\markersize, forget plot] table [x=Ps, y=t300_, col sep=comma] {./txtfiles/Den_N2N_sup_bestNet_INtest_SLP_dots_best.txt};
        \addplot [only marks, color=c2new300, mark=\marker, mark size=\markersize, forget plot] table [x=Ps, y=t600_, col sep=comma] {./txtfiles/Den_N2N_sup_bestNet_INtest_SLP_dots_best.txt};
        \addplot [only marks, color=c3new300, mark=\marker, mark size=\markersize, forget plot] table [x=Ps, y=t1000_, col sep=comma] {./txtfiles/Den_N2N_sup_bestNet_INtest_SLP_dots_best.txt};
        \addplot [only marks, color=c4new300, mark=\marker, mark size=\markersize, forget plot] table [x=Ps, y=t3000_, col sep=comma] {./txtfiles/Den_N2N_sup_bestNet_INtest_SLP_dots_best.txt};
        \addplot [only marks, color=c5new300, mark=\marker, mark size=\markersize, forget plot] table [x=Ps, y=t6000_, col sep=comma] {./txtfiles/Den_N2N_sup_bestNet_INtest_SLP_dots_best.txt};
        \addplot [only marks, color=c6new300, mark=\marker, mark size=\markersize, forget plot] table [x=Ps, y=t10000_, col sep=comma] {./txtfiles/Den_N2N_sup_bestNet_INtest_SLP_dots_best.txt};
        \addplot [only marks, color=c7new300, mark=\marker, mark size=\markersize, forget plot] table [x=Ps, y=t30000_, col sep=comma] {./txtfiles/Den_N2N_sup_bestNet_INtest_SLP_dots_best.txt};
        \addplot [only marks, color=c8new300, mark=\marker, mark size=\markersize, forget plot] table [x=Ps, y=t60000_, col sep=comma] {./txtfiles/Den_N2N_sup_bestNet_INtest_SLP_dots_best.txt};
        \addplot [only marks, color=c9new300, mark=\marker, mark size=\markersize, forget plot] table [x=Ps, y=t100000_, col sep=comma] {./txtfiles/Den_N2N_sup_bestNet_INtest_SLP_dots_best.txt};
        \addplot [only marks, color=c10new300, mark=\marker, mark size=\markersize, forget plot] table [x=Ps, y=t300000_, col sep=comma] {./txtfiles/Den_N2N_sup_bestNet_INtest_SLP_dots_best.txt};

        \addplot [only marks, color=c0new300, mark=\markertwo, mark size=\markersizetwo, forget plot] table [ x=Ps, y=t100_, col sep=comma] {./txtfiles/Den_N2N_sup_bestNet_INtest_SLP_dots.txt};
        \addplot [only marks, color=c1new300, mark=\markertwo, mark size=\markersizetwo, forget plot] table [x=Ps, y=t300_, col sep=comma] {./txtfiles/Den_N2N_sup_bestNet_INtest_SLP_dots.txt};
        \addplot [only marks, color=c2new300, mark=\markertwo, mark size=\markersizetwo, forget plot] table [x=Ps, y=t600_, col sep=comma] {./txtfiles/Den_N2N_sup_bestNet_INtest_SLP_dots.txt};
        \addplot [only marks, color=c3new300, mark=\markertwo, mark size=\markersizetwo, forget plot] table [x=Ps, y=t1000_, col sep=comma] {./txtfiles/Den_N2N_sup_bestNet_INtest_SLP_dots.txt};
        \addplot [only marks, color=c4new300, mark=\markertwo, mark size=\markersizetwo, forget plot] table [x=Ps, y=t3000_, col sep=comma] {./txtfiles/Den_N2N_sup_bestNet_INtest_SLP_dots.txt};
        \addplot [only marks, color=c5new300, mark=\markertwo, mark size=\markersizetwo, forget plot] table [x=Ps, y=t6000_, col sep=comma] {./txtfiles/Den_N2N_sup_bestNet_INtest_SLP_dots.txt};
        \addplot [only marks, color=c6new300, mark=\markertwo, mark size=\markersizetwo, forget plot] table [x=Ps, y=t10000_, col sep=comma] {./txtfiles/Den_N2N_sup_bestNet_INtest_SLP_dots.txt};
        \addplot [only marks, color=c7new300, mark=\markertwo, mark size=\markersizetwo, forget plot] table [x=Ps, y=t30000_, col sep=comma] {./txtfiles/Den_N2N_sup_bestNet_INtest_SLP_dots.txt};
        \addplot [only marks, color=c8new300, mark=\markertwo, mark size=\markersizetwo, forget plot] table [x=Ps, y=t60000_, col sep=comma] {./txtfiles/Den_N2N_sup_bestNet_INtest_SLP_dots.txt};
        \addplot [only marks, color=c9new300, mark=\markertwo, mark size=\markersizetwo, forget plot] table [x=Ps, y=t100000_, col sep=comma] {./txtfiles/Den_N2N_sup_bestNet_INtest_SLP_dots.txt};
        \addplot [only marks, color=c10new300, mark=\markertwo, mark size=\markersizetwo, forget plot] table [x=Ps, y=t300000_, col sep=comma] {./txtfiles/Den_N2N_sup_bestNet_INtest_SLP_dots.txt};

        \addplot [no markers, color=c0new300, line width=\linewidths, forget plot] table [x=Ps, y=t100_, col sep=comma] {./txtfiles/Den_N2N_sup_bestNet_INtest_SLP_lines.txt};
        \addplot [no markers, color=c1new300, line width=\linewidths, forget plot] table [x=Ps, y=t300_, col sep=comma] {./txtfiles/Den_N2N_sup_bestNet_INtest_SLP_lines.txt};
        \addplot [no markers, color=c2new300, line width=\linewidths, forget plot] table [x=Ps, y=t600_, col sep=comma] {./txtfiles/Den_N2N_sup_bestNet_INtest_SLP_lines.txt};
        \addplot [no markers, color=c3new300, line width=\linewidths, forget plot] table [x=Ps, y=t1000_, col sep=comma] {./txtfiles/Den_N2N_sup_bestNet_INtest_SLP_lines.txt};
        \addplot [no markers, color=c4new300, line width=\linewidths, forget plot] table [x=Ps, y=t3000_, col sep=comma] {./txtfiles/Den_N2N_sup_bestNet_INtest_SLP_lines.txt};
        \addplot [no markers, color=c5new300, line width=\linewidths, forget plot] table [x=Ps, y=t6000_, col sep=comma] {./txtfiles/Den_N2N_sup_bestNet_INtest_SLP_lines.txt};
        \addplot [no markers, color=c6new300, line width=\linewidths, forget plot] table [x=Ps, y=t10000_, col sep=comma] {./txtfiles/Den_N2N_sup_bestNet_INtest_SLP_lines.txt};
        \addplot [no markers, color=c7new300, line width=\linewidths, forget plot] table [x=Ps, y=t30000_, col sep=comma] {./txtfiles/Den_N2N_sup_bestNet_INtest_SLP_lines.txt};
        \addplot [no markers, color=c8new300, line width=\linewidths, forget plot] table [x=Ps, y=t60000_, col sep=comma] {./txtfiles/Den_N2N_sup_bestNet_INtest_SLP_lines.txt};
        \addplot [no markers, color=c9new300, line width=\linewidths, forget plot] table [x=Ps, y=t100000_, col sep=comma] {./txtfiles/Den_N2N_sup_bestNet_INtest_SLP_lines.txt};
        \addplot [no markers, color=c10new300, line width=\linewidths, forget plot] table [x=Ps, y=t300000_, col sep=comma] {./txtfiles/Den_N2N_sup_bestNet_INtest_SLP_lines.txt};

    \nextgroupplot[
    ylabel =\plotfontsizeLabels PSNR (dB),]
					
        \addplot [only marks, color=c0new300, mark=\marker, mark size=\markersize, forget plot] table [ x=100x, y=100y, col sep=comma] {./txtfiles/Den_N2N_SelfSup25_bestNet_INtest_SLN_dots.txt};
        \addplot [only marks, color=c1new300, mark=\marker, mark size=\markersize, forget plot] table [x=300x, y=300y, col sep=comma] {./txtfiles/Den_N2N_SelfSup25_bestNet_INtest_SLN_dots.txt};
        \addplot [only marks, color=c2new300, mark=\marker, mark size=\markersize, forget plot] table [x=600x, y=600y, col sep=comma] {./txtfiles/Den_N2N_SelfSup25_bestNet_INtest_SLN_dots.txt};
        \addplot [only marks, color=c3new300, mark=\marker, mark size=\markersize, forget plot] table [x=1000x, y=1000y, col sep=comma] {./txtfiles/Den_N2N_SelfSup25_bestNet_INtest_SLN_dots.txt};
        \addplot [only marks, color=c4new300, mark=\marker, mark size=\markersize, forget plot] table [x=3000x, y=3000y, col sep=comma] {./txtfiles/Den_N2N_SelfSup25_bestNet_INtest_SLN_dots.txt};
        \addplot [only marks, color=c5new300, mark=\marker, mark size=\markersize, forget plot] table [x=6000x, y=6000y, col sep=comma] {./txtfiles/Den_N2N_SelfSup25_bestNet_INtest_SLN_dots.txt};
        \addplot [only marks, color=c6new300, mark=\marker, mark size=\markersize, forget plot] table [x=10000x, y=10000y, col sep=comma] {./txtfiles/Den_N2N_SelfSup25_bestNet_INtest_SLN_dots.txt};
        \addplot [only marks, color=c7new300, mark=\marker, mark size=\markersize, forget plot] table [x=30000x, y=30000y, col sep=comma] {./txtfiles/Den_N2N_SelfSup25_bestNet_INtest_SLN_dots.txt};
        \addplot [only marks, color=c8new300, mark=\marker, mark size=\markersize, forget plot] table [x=60000x, y=60000y, col sep=comma] {./txtfiles/Den_N2N_SelfSup25_bestNet_INtest_SLN_dots.txt};
        \addplot [only marks, color=c9new300, mark=\marker, mark size=\markersize, forget plot] table [x=100000x, y=100000y, col sep=comma] {./txtfiles/Den_N2N_SelfSup25_bestNet_INtest_SLN_dots.txt};
        \addplot [only marks, color=c10new300, mark=\marker, mark size=\markersize, forget plot] table [x=300000x, y=300000y, col sep=comma] {./txtfiles/Den_N2N_SelfSup25_bestNet_INtest_SLN_dots.txt};
				 
        \addplot [mesh,colormap={}{[1bp]  
            color(0bp)=(c0new300) 
            color(15bp)=(c1new300) 
            color(30bp)=(c2new300) 
            color(40bp)=(c3new300) 
            color(55bp)=(c4new300) 
            color(70bp)=(c5new300) 
            color(75bp)=(c6new300) 
            color(88bp)=(c7new300) 
            color(94bp)=(c8new300) 
            color(98bp)=(c9new300)
            color(100bp)=(c10new300)
        } , no markers, line width=\linewidths, forget plot] table [skip first n=0,x index=0, y index=1, col sep=comma] {./txtfiles/Den_N2N_SelfSup25_bestNet_INtest_SLN_line_extended.txt};
    
    \nextgroupplot[]

         \addplot [only marks, color=c0new300, mark=\marker, mark size=\markersize, forget plot] table [ x=Ps, y=t100_, col sep=comma] {./txtfiles/Den_N2N_SelfSup25_bestNet_INtest_SLP_dots_best.txt};
        \addplot [only marks, color=c1new300, mark=\marker, mark size=\markersize, forget plot] table [x=Ps, y=t300_, col sep=comma] {./txtfiles/Den_N2N_SelfSup25_bestNet_INtest_SLP_dots_best.txt};
        \addplot [only marks, color=c2new300, mark=\marker, mark size=\markersize, forget plot] table [x=Ps, y=t600_, col sep=comma] {./txtfiles/Den_N2N_SelfSup25_bestNet_INtest_SLP_dots_best.txt};
        \addplot [only marks, color=c3new300, mark=\marker, mark size=\markersize, forget plot] table [x=Ps, y=t1000_, col sep=comma] {./txtfiles/Den_N2N_SelfSup25_bestNet_INtest_SLP_dots_best.txt};
        \addplot [only marks, color=c4new300, mark=\marker, mark size=\markersize, forget plot] table [x=Ps, y=t3000_, col sep=comma] {./txtfiles/Den_N2N_SelfSup25_bestNet_INtest_SLP_dots_best.txt};
        \addplot [only marks, color=c5new300, mark=\marker, mark size=\markersize, forget plot] table [x=Ps, y=t6000_, col sep=comma] {./txtfiles/Den_N2N_SelfSup25_bestNet_INtest_SLP_dots_best.txt};
        \addplot [only marks, color=c6new300, mark=\marker, mark size=\markersize, forget plot] table [x=Ps, y=t10000_, col sep=comma] {./txtfiles/Den_N2N_SelfSup25_bestNet_INtest_SLP_dots_best.txt};
        \addplot [only marks, color=c7new300, mark=\marker, mark size=\markersize, forget plot] table [x=Ps, y=t30000_, col sep=comma] {./txtfiles/Den_N2N_SelfSup25_bestNet_INtest_SLP_dots_best.txt};
        \addplot [only marks, color=c8new300, mark=\marker, mark size=\markersize, forget plot] table [x=Ps, y=t60000_, col sep=comma] {./txtfiles/Den_N2N_SelfSup25_bestNet_INtest_SLP_dots_best.txt};
        \addplot [only marks, color=c9new300, mark=\marker, mark size=\markersize, forget plot] table [x=Ps, y=t100000_, col sep=comma] {./txtfiles/Den_N2N_SelfSup25_bestNet_INtest_SLP_dots_best.txt};
        \addplot [only marks, color=c10new300, mark=\marker, mark size=\markersize, forget plot] table [x=Ps, y=t300000_, col sep=comma] {./txtfiles/Den_N2N_SelfSup25_bestNet_INtest_SLP_dots_best.txt};

        \addplot [only marks, color=c0new300, mark=\markertwo, mark size=\markersizetwo, forget plot] table [ x=Ps, y=t100_, col sep=comma] {./txtfiles/Den_N2N_SelfSup25_bestNet_INtest_SLP_dots.txt};
        \addplot [only marks, color=c1new300, mark=\markertwo, mark size=\markersizetwo, forget plot] table [x=Ps, y=t300_, col sep=comma] {./txtfiles/Den_N2N_SelfSup25_bestNet_INtest_SLP_dots.txt};
        \addplot [only marks, color=c2new300, mark=\markertwo, mark size=\markersizetwo, forget plot] table [x=Ps, y=t600_, col sep=comma] {./txtfiles/Den_N2N_SelfSup25_bestNet_INtest_SLP_dots.txt};
        \addplot [only marks, color=c3new300, mark=\markertwo, mark size=\markersizetwo, forget plot] table [x=Ps, y=t1000_, col sep=comma] {./txtfiles/Den_N2N_SelfSup25_bestNet_INtest_SLP_dots.txt};
        \addplot [only marks, color=c4new300, mark=\markertwo, mark size=\markersizetwo, forget plot] table [x=Ps, y=t3000_, col sep=comma] {./txtfiles/Den_N2N_SelfSup25_bestNet_INtest_SLP_dots.txt};
        \addplot [only marks, color=c5new300, mark=\markertwo, mark size=\markersizetwo, forget plot] table [x=Ps, y=t6000_, col sep=comma] {./txtfiles/Den_N2N_SelfSup25_bestNet_INtest_SLP_dots.txt};
        \addplot [only marks, color=c6new300, mark=\markertwo, mark size=\markersizetwo, forget plot] table [x=Ps, y=t10000_, col sep=comma] {./txtfiles/Den_N2N_SelfSup25_bestNet_INtest_SLP_dots.txt};
        \addplot [only marks, color=c7new300, mark=\markertwo, mark size=\markersizetwo, forget plot] table [x=Ps, y=t30000_, col sep=comma] {./txtfiles/Den_N2N_SelfSup25_bestNet_INtest_SLP_dots.txt};
        \addplot [only marks, color=c8new300, mark=\markertwo, mark size=\markersizetwo, forget plot] table [x=Ps, y=t60000_, col sep=comma] {./txtfiles/Den_N2N_SelfSup25_bestNet_INtest_SLP_dots.txt};
        \addplot [only marks, color=c9new300, mark=\markertwo, mark size=\markersizetwo, forget plot] table [x=Ps, y=t100000_, col sep=comma] {./txtfiles/Den_N2N_SelfSup25_bestNet_INtest_SLP_dots.txt};
        \addplot [only marks, color=c10new300, mark=\markertwo, mark size=\markersizetwo, forget plot] table [x=Ps, y=t300000_, col sep=comma] {./txtfiles/Den_N2N_SelfSup25_bestNet_INtest_SLP_dots.txt};

        \addplot [no markers, color=c0new300, line width=\linewidths, forget plot] table [x=Ps, y=t100_, col sep=comma] {./txtfiles/Den_N2N_SelfSup25_bestNet_INtest_SLP_lines.txt};
        \addplot [no markers, color=c1new300, line width=\linewidths, forget plot] table [x=Ps, y=t300_, col sep=comma] {./txtfiles/Den_N2N_SelfSup25_bestNet_INtest_SLP_lines.txt};
        \addplot [no markers, color=c2new300, line width=\linewidths, forget plot] table [x=Ps, y=t600_, col sep=comma] {./txtfiles/Den_N2N_SelfSup25_bestNet_INtest_SLP_lines.txt};
        \addplot [no markers, color=c3new300, line width=\linewidths, forget plot] table [x=Ps, y=t1000_, col sep=comma] {./txtfiles/Den_N2N_SelfSup25_bestNet_INtest_SLP_lines.txt};
        \addplot [no markers, color=c4new300, line width=\linewidths, forget plot] table [x=Ps, y=t3000_, col sep=comma] {./txtfiles/Den_N2N_SelfSup25_bestNet_INtest_SLP_lines.txt};
        \addplot [no markers, color=c5new300, line width=\linewidths, forget plot] table [x=Ps, y=t6000_, col sep=comma] {./txtfiles/Den_N2N_SelfSup25_bestNet_INtest_SLP_lines.txt};
        \addplot [no markers, color=c6new300, line width=\linewidths, forget plot] table [x=Ps, y=t10000_, col sep=comma] {./txtfiles/Den_N2N_SelfSup25_bestNet_INtest_SLP_lines.txt};
        \addplot [no markers, color=c7new300, line width=\linewidths, forget plot] table [x=Ps, y=t30000_, col sep=comma] {./txtfiles/Den_N2N_SelfSup25_bestNet_INtest_SLP_lines.txt};
        \addplot [no markers, color=c8new300, line width=\linewidths, forget plot] table [x=Ps, y=t60000_, col sep=comma] {./txtfiles/Den_N2N_SelfSup25_bestNet_INtest_SLP_lines.txt};
        \addplot [no markers, color=c9new300, line width=\linewidths, forget plot] table [x=Ps, y=t100000_, col sep=comma] {./txtfiles/Den_N2N_SelfSup25_bestNet_INtest_SLP_lines.txt};
        \addplot [no markers, color=c10new300, line width=\linewidths, forget plot] table [x=Ps, y=t300000_, col sep=comma] {./txtfiles/Den_N2N_SelfSup25_bestNet_INtest_SLP_lines.txt};

    \nextgroupplot[
    ylabel =\plotfontsizeLabels PSNR (dB),xlabel = {\plotfontsizeLabels Training set size $N$},]
					
        \addplot [only marks, color=c0new300, mark=\marker, mark size=\markersize, forget plot] table [ x=100x, y=100y, col sep=comma] {./txtfiles/Den_N2N_SelfSup50_bestNet_INtest_SLN_dots.txt};
        \addplot [only marks, color=c1new300, mark=\marker, mark size=\markersize, forget plot] table [x=300x, y=300y, col sep=comma] {./txtfiles/Den_N2N_SelfSup50_bestNet_INtest_SLN_dots.txt};
        \addplot [only marks, color=c2new300, mark=\marker, mark size=\markersize, forget plot] table [x=600x, y=600y, col sep=comma] {./txtfiles/Den_N2N_SelfSup50_bestNet_INtest_SLN_dots.txt};
        \addplot [only marks, color=c3new300, mark=\marker, mark size=\markersize, forget plot] table [x=1000x, y=1000y, col sep=comma] {./txtfiles/Den_N2N_SelfSup50_bestNet_INtest_SLN_dots.txt};
        \addplot [only marks, color=c4new300, mark=\marker, mark size=\markersize, forget plot] table [x=3000x, y=3000y, col sep=comma] {./txtfiles/Den_N2N_SelfSup50_bestNet_INtest_SLN_dots.txt};
        \addplot [only marks, color=c5new300, mark=\marker, mark size=\markersize, forget plot] table [x=6000x, y=6000y, col sep=comma] {./txtfiles/Den_N2N_SelfSup50_bestNet_INtest_SLN_dots.txt};
        \addplot [only marks, color=c6new300, mark=\marker, mark size=\markersize, forget plot] table [x=10000x, y=10000y, col sep=comma] {./txtfiles/Den_N2N_SelfSup50_bestNet_INtest_SLN_dots.txt};
        \addplot [only marks, color=c7new300, mark=\marker, mark size=\markersize, forget plot] table [x=30000x, y=30000y, col sep=comma] {./txtfiles/Den_N2N_SelfSup50_bestNet_INtest_SLN_dots.txt};
        \addplot [only marks, color=c8new300, mark=\marker, mark size=\markersize, forget plot] table [x=60000x, y=60000y, col sep=comma] {./txtfiles/Den_N2N_SelfSup50_bestNet_INtest_SLN_dots.txt};
        \addplot [only marks, color=c9new300, mark=\marker, mark size=\markersize, forget plot] table [x=100000x, y=100000y, col sep=comma] {./txtfiles/Den_N2N_SelfSup50_bestNet_INtest_SLN_dots.txt};
        \addplot [only marks, color=c10new300, mark=\marker, mark size=\markersize, forget plot] table [x=300000x, y=300000y, col sep=comma] {./txtfiles/Den_N2N_SelfSup50_bestNet_INtest_SLN_dots.txt};
				 
        \addplot [mesh,colormap={}{[1bp]  
            color(0bp)=(c0new300) 
            color(15bp)=(c1new300) 
            color(30bp)=(c2new300) 
            color(40bp)=(c3new300) 
            color(55bp)=(c4new300) 
            color(70bp)=(c5new300) 
            color(75bp)=(c6new300) 
            color(88bp)=(c7new300) 
            color(94bp)=(c8new300) 
            color(98bp)=(c9new300)
            color(100bp)=(c10new300)
        } , no markers, line width=\linewidths, forget plot] table [skip first n=0,x index=0, y index=1, col sep=comma] {./txtfiles/Den_N2N_SelfSup50_bestNet_INtest_SLN_line_extended.txt};
    
    \nextgroupplot[
    xlabel = {\plotfontsizeLabels Network parameters $P$},]

         \addplot [only marks, color=c0new300, mark=\marker, mark size=\markersize, forget plot] table [ x=Ps, y=t100_, col sep=comma] {./txtfiles/Den_N2N_SelfSup50_bestNet_INtest_SLP_dots_best.txt};
        \addplot [only marks, color=c1new300, mark=\marker, mark size=\markersize, forget plot] table [x=Ps, y=t300_, col sep=comma] {./txtfiles/Den_N2N_SelfSup50_bestNet_INtest_SLP_dots_best.txt};
        \addplot [only marks, color=c2new300, mark=\marker, mark size=\markersize, forget plot] table [x=Ps, y=t600_, col sep=comma] {./txtfiles/Den_N2N_SelfSup50_bestNet_INtest_SLP_dots_best.txt};
        \addplot [only marks, color=c3new300, mark=\marker, mark size=\markersize, forget plot] table [x=Ps, y=t1000_, col sep=comma] {./txtfiles/Den_N2N_SelfSup50_bestNet_INtest_SLP_dots_best.txt};
        \addplot [only marks, color=c4new300, mark=\marker, mark size=\markersize, forget plot] table [x=Ps, y=t3000_, col sep=comma] {./txtfiles/Den_N2N_SelfSup50_bestNet_INtest_SLP_dots_best.txt};
        \addplot [only marks, color=c5new300, mark=\marker, mark size=\markersize, forget plot] table [x=Ps, y=t6000_, col sep=comma] {./txtfiles/Den_N2N_SelfSup50_bestNet_INtest_SLP_dots_best.txt};
        \addplot [only marks, color=c6new300, mark=\marker, mark size=\markersize, forget plot] table [x=Ps, y=t10000_, col sep=comma] {./txtfiles/Den_N2N_SelfSup50_bestNet_INtest_SLP_dots_best.txt};
        \addplot [only marks, color=c7new300, mark=\marker, mark size=\markersize, forget plot] table [x=Ps, y=t30000_, col sep=comma] {./txtfiles/Den_N2N_SelfSup50_bestNet_INtest_SLP_dots_best.txt};
        \addplot [only marks, color=c8new300, mark=\marker, mark size=\markersize, forget plot] table [x=Ps, y=t60000_, col sep=comma] {./txtfiles/Den_N2N_SelfSup50_bestNet_INtest_SLP_dots_best.txt};
        \addplot [only marks, color=c9new300, mark=\marker, mark size=\markersize, forget plot] table [x=Ps, y=t100000_, col sep=comma] {./txtfiles/Den_N2N_SelfSup50_bestNet_INtest_SLP_dots_best.txt};
        \addplot [only marks, color=c10new300, mark=\marker, mark size=\markersize, forget plot] table [x=Ps, y=t300000_, col sep=comma] {./txtfiles/Den_N2N_SelfSup50_bestNet_INtest_SLP_dots_best.txt};

        \addplot [only marks, color=c0new300, mark=\markertwo, mark size=\markersizetwo, forget plot] table [ x=Ps, y=t100_, col sep=comma] {./txtfiles/Den_N2N_SelfSup50_bestNet_INtest_SLP_dots.txt};
        \addplot [only marks, color=c1new300, mark=\markertwo, mark size=\markersizetwo, forget plot] table [x=Ps, y=t300_, col sep=comma] {./txtfiles/Den_N2N_SelfSup50_bestNet_INtest_SLP_dots.txt};
        \addplot [only marks, color=c2new300, mark=\markertwo, mark size=\markersizetwo, forget plot] table [x=Ps, y=t600_, col sep=comma] {./txtfiles/Den_N2N_SelfSup50_bestNet_INtest_SLP_dots.txt};
        \addplot [only marks, color=c3new300, mark=\markertwo, mark size=\markersizetwo, forget plot] table [x=Ps, y=t1000_, col sep=comma] {./txtfiles/Den_N2N_SelfSup50_bestNet_INtest_SLP_dots.txt};
        \addplot [only marks, color=c4new300, mark=\markertwo, mark size=\markersizetwo, forget plot] table [x=Ps, y=t3000_, col sep=comma] {./txtfiles/Den_N2N_SelfSup50_bestNet_INtest_SLP_dots.txt};
        \addplot [only marks, color=c5new300, mark=\markertwo, mark size=\markersizetwo, forget plot] table [x=Ps, y=t6000_, col sep=comma] {./txtfiles/Den_N2N_SelfSup50_bestNet_INtest_SLP_dots.txt};
        \addplot [only marks, color=c6new300, mark=\markertwo, mark size=\markersizetwo, forget plot] table [x=Ps, y=t10000_, col sep=comma] {./txtfiles/Den_N2N_SelfSup50_bestNet_INtest_SLP_dots.txt};
        \addplot [only marks, color=c7new300, mark=\markertwo, mark size=\markersizetwo, forget plot] table [x=Ps, y=t30000_, col sep=comma] {./txtfiles/Den_N2N_SelfSup50_bestNet_INtest_SLP_dots.txt};
        \addplot [only marks, color=c8new300, mark=\markertwo, mark size=\markersizetwo, forget plot] table [x=Ps, y=t60000_, col sep=comma] {./txtfiles/Den_N2N_SelfSup50_bestNet_INtest_SLP_dots.txt};
        \addplot [only marks, color=c9new300, mark=\markertwo, mark size=\markersizetwo, forget plot] table [x=Ps, y=t100000_, col sep=comma] {./txtfiles/Den_N2N_SelfSup50_bestNet_INtest_SLP_dots.txt};
        \addplot [only marks, color=c10new300, mark=\markertwo, mark size=\markersizetwo, forget plot] table [x=Ps, y=t300000_, col sep=comma] {./txtfiles/Den_N2N_SelfSup50_bestNet_INtest_SLP_dots.txt};

        \addplot [no markers, color=c0new300, line width=\linewidths, forget plot] table [x=Ps, y=t100_, col sep=comma] {./txtfiles/Den_N2N_SelfSup50_bestNet_INtest_SLP_lines.txt};
        \addplot [no markers, color=c1new300, line width=\linewidths, forget plot] table [x=Ps, y=t300_, col sep=comma] {./txtfiles/Den_N2N_SelfSup50_bestNet_INtest_SLP_lines.txt};
        \addplot [no markers, color=c2new300, line width=\linewidths, forget plot] table [x=Ps, y=t600_, col sep=comma] {./txtfiles/Den_N2N_SelfSup50_bestNet_INtest_SLP_lines.txt};
        \addplot [no markers, color=c3new300, line width=\linewidths, forget plot] table [x=Ps, y=t1000_, col sep=comma] {./txtfiles/Den_N2N_SelfSup50_bestNet_INtest_SLP_lines.txt};
        \addplot [no markers, color=c4new300, line width=\linewidths, forget plot] table [x=Ps, y=t3000_, col sep=comma] {./txtfiles/Den_N2N_SelfSup50_bestNet_INtest_SLP_lines.txt};
        \addplot [no markers, color=c5new300, line width=\linewidths, forget plot] table [x=Ps, y=t6000_, col sep=comma] {./txtfiles/Den_N2N_SelfSup50_bestNet_INtest_SLP_lines.txt};
        \addplot [no markers, color=c6new300, line width=\linewidths, forget plot] table [x=Ps, y=t10000_, col sep=comma] {./txtfiles/Den_N2N_SelfSup50_bestNet_INtest_SLP_lines.txt};
        \addplot [no markers, color=c7new300, line width=\linewidths, forget plot] table [x=Ps, y=t30000_, col sep=comma] {./txtfiles/Den_N2N_SelfSup50_bestNet_INtest_SLP_lines.txt};
        \addplot [no markers, color=c8new300, line width=\linewidths, forget plot] table [x=Ps, y=t60000_, col sep=comma] {./txtfiles/Den_N2N_SelfSup50_bestNet_INtest_SLP_lines.txt};
        \addplot [no markers, color=c9new300, line width=\linewidths, forget plot] table [x=Ps, y=t100000_, col sep=comma] {./txtfiles/Den_N2N_SelfSup50_bestNet_INtest_SLP_lines.txt};
        \addplot [no markers, color=c10new300, line width=\linewidths, forget plot] table [x=Ps, y=t300000_, col sep=comma] {./txtfiles/Den_N2N_SelfSup50_bestNet_INtest_SLP_lines.txt};
					
    \end{groupplot}
    \node (title) at ($(group c1r1.center)!0.5!(group c2r1.center)+(0,1.8cm)$) {\textbf{(a) Supervised denoising ($\sigma_e=0$)}};
    \node (title) at ($(group c1r2.center)!0.5!(group c2r2.center)+(0,1.8cm)$) {\textbf{(b) Self-supervised denoising ($\sigma_e=25$)}};
    \node (title) at ($(group c1r3.center)!0.5!(group c2r3.center)+(0,1.8cm)$) {\textbf{(c) Self-supervised denoising ($\sigma_e=50$)}};
    \end{tikzpicture}
		
    \caption{{\bf Complete results for adapting the network size for denoising.} Each curve on the right is obtained by fixing the training set size and sweeping across different network sizes. Then, the curve on the left is created by taking the best performing network for each training set size. Colors in the plots on the left and right correspond to the same training set size. 
    }
    \label{fig:N2N_scaling_laws}

\end{figure*}
\paragraph{Training details for varying the network size.} 
The left column in Figure~\ref{fig:N2N_scaling_laws} is a summary of all models that we trained for Gaussian denoising with a U-net under different levels of noise on the training targets $\sigma_e \in \{0,25,50\}$. For different training set sizes $ N \in \{0.1, 0.3, 0.6, 1,3,6,10,30,60,100,300\}$ thousand images we varied the number of channels in the first layer of the U-net in $\{16, 32, 64, 128, 192,$ $256, 320\}$ corresponding to $ P = \{0.1, 0.5, 1.9, 7.4, 16.7, 30.0, 46.5\}$ million network parameters. 

For small training set sizes the curves in Figure~\ref{fig:N2N_scaling_laws} use the best performance out of up to 5 training runs with different random network initialization, indicated by the additional points underneath the different curves. In Figure~\ref{fig:selfsup_sup_den_adapt} only the best runs are reported for brevity. As the variation among runs decreases as the training set size increases, a single run is performed for large training set sizes, and further runs are only added in case of obvious outliers. 

We tested two different batch sizes 1 and 10 for training set sizes up to 60k images and found that while models trained with small training set sizes significantly benefit from batch size 1, models trained with larger training set sizes are indifferent to the choice of the batch size. Hence, we set the batch size to 1 for training set sizes $N\leq 6$k and to 10 otherwise.

All models are trained with the Adam optimizer~\cite{kingmaAdam2014} with $\beta_1=0.9$ and $\beta_2=0.999$. As we train on different combinations of network size, batch size and training set size we have to adjust the initial learning rate accordingly. 

We found that the following automated scheme for learning rate annealing reliably finds an initial learning rate that lies within the range of suitable learning rates. 
Starting from a small learning rate $1.25 \times 10^{-6}$ we double the learning rate after every epoch as long as the validation error improves. If training with a learning rate for 3 epochs does not yield an improvement, the annealing is terminated. 
We then continue the training with the model checkpoint corresponding to a learning rate 4-times smaller than the current learning rate. 

Once the learning rate is set, we apply an automated learning rate decay. We halve the learning rate once the validation PSNR did not improve for eight consecutive epochs. If the learning rate was decayed twice in a row without an improvement, we terminate the training. Then, the model with the best validation PSNR is picked for evaluation on the test set. Picking the model based on the validation error was necessary since we faced overfitting during training for $\sigma_e\in\{25,50\}$, as shown in the example in Figure~\ref{fig:val_self_sup_vs_sup}. The overfitting is to be expected as we fix the noise per training example over all epochs on both noisy inputs $\vy$ and noisy targets $\vy'$.

To perform early stopping we compute a self-supervised validation PSNR between measurements $\vy$ and $\vy'$ of the images $\vx$ in the validation set. 
It is well known that the self-supervised validation PSNR is proportional to the actual PSNR, see~\cite{duboisEvaluatingSelfSupervisedLearning2023} for a discussion. 
As the self-supervised training error approximates the supervised one, so does the self-supervised validation error. Figure~\ref{fig:val_self_sup_vs_sup} shows an example for the correlation between self-supervised and supervised validation error.

\begin{figure*}
    \centering
    \begin{tikzpicture}
    \def\linewidths{0.8pt}
    
    \pgfplotsset{set layers}
        \begin{axis}[
        scale only axis,
        xmin=0,xmax=110,
        grid=major,
        axis y line*=left, 
        xlabel=Epochs,
        ylabel=Sup. validation PSNR,
        width=7cm,
		height=4cm,
        ]
        \addplot [color=darkblue, no markers, line width=\linewidths,] table [x=epoch, y =psnr, col sep=comma] {./txtfiles/Den_N2N_ValLossAbStud_SelfSup25_t10000_c320_supLoss.txt}; \label{plot_one}
        
        \end{axis}
        
        \begin{axis}[
        scale only axis,
        xmin=0,xmax=110,
        axis y line*=right,
        axis x line=none,
        ylabel=Self-sup. validation PSNR,
        legend style={font=\small, at={(0.99,0.02)}, anchor=south east, draw=black,fill=white,legend cell align=left, rounded corners=2pt, nodes={inner sep=2pt,text depth=0pt}},
        width=7cm,
		height=4cm,
        ]
        \addplot [color=c3new, no markers, line width=\linewidths,] table [x=epoch, y =psnr, col sep=comma] {./txtfiles/Den_N2N_ValLossAbStud_SelfSup25_t10000_c320_selfsupLoss.txt};

        \addlegendimage{/pgfplots/refstyle=plot_one}\addlegendentry{Supervised}
        \addlegendentry{Self-supervised}; 
        \end{axis}
    \end{tikzpicture}
    \caption{\textbf{Overfitting and supervised vs. self-supervised validation PSNR for denoising.} In this example, supervised and self-supervised validation PSNR of a network trained in a noise2noise self-supervised manner (with $\sigma_e=25$) improves until epoch 20. At epoch 20, performance declines due to overfitting to the fixed noise realizations on the training inputs and targets. As supervised and self-supervised PSNR behave qualitatively the same we can use the self-supervised PSNR to pick the best model over all epochs. This experiment uses a training set of size 100k and a U-net with 320 channels in the first layer.}
    \label{fig:val_self_sup_vs_sup}
\end{figure*}
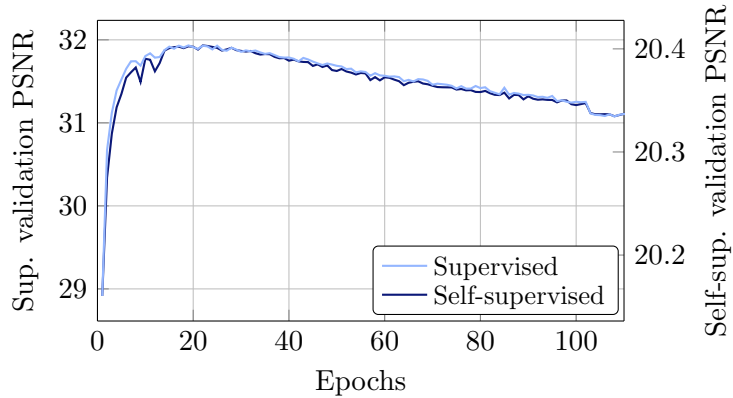

\subsubsection{Noise2noise self-supervised denoising with a U-net of fixed size}
\label{sec:n2n_den_unet_fixed}

The results in Figure~\ref{fig:selfsup_sup_den} for noise2noise self-supervised denoising were obtained with a fixed network size with 128 channels in the first layer resulting in 7.4M parameters.
The training procedure for fixed size networks is similar to training with networks of varying size described in Appendix \ref{sec:n2n_den_unet_vary}. 

For $N \leq 6000$ images, we reused the networks trained for the adaptive network scenario from Figure~\ref{fig:N2N_scaling_laws} since we already have 128-channeled networks for them. The training procedure for those networks is discussed in Appendix \ref{sec:n2n_den_unet_vary}. 

For $N \geq 10.000$ images, we used a batch size of 1 
and as the network size is fix, we also fix the initial learning rate to $6.4 \times 10^{-4}$ and then applied learning rate decay as described in Appendix~\ref{sec:n2n_den_unet_vary}. 
For the largest training set size $N=300$k we trained once with initial learning rate $6.4 \times 10^{-4}$ and once with $1.6 \times 10^{-4}$ and picked the run that performed better.

As in Appendix~\ref{sec:n2n_den_unet_vary} we picked the best out of up to three training runs from different random initializations to account for the variance between runs for small and medium sized training sets.

\subsubsection{Noisier2noise self-supervised denoising with a U-net}
\label{sec:noisier2noise}

The performance curve for noisier2noise self-supervised denoising in Figure~\ref{fig:selfsup_sup_den} is obtained with a U-net of fixed size with 128 channels in the first layer and 7.4M network parameters in order to match the setup for noise2noise in Appendix~\ref{sec:n2n_den_unet_fixed}.

In noisier2noise, we have only one noisy measurement per clean image, which is used as the training target. In the simplest version presented by \citet{moranNoisier2NoiseLearningDenoise2020} the training input is created by injecting additional noise of the same distribution and variance to the noisy image.  
That means we add Gaussian noise with noise level $\sigma = 25$ to the images in the training set and train the network to predict the original noisy training image from its noisier version. 
As we have control over the noise that we inject during training, it is re-sampled in every epoch to improve performance.

Similar to training with the noise2noise loss with fixed network size described in Appendix~\ref{sec:n2n_den_unet_fixed} we use a batch size of 1, the Adam optimizer with $\beta_1 = 0.9$ and $\beta_2 = 0.999$ and an initial learning rate of $6.4 \times 10^{-4}$. To be consistent with Appendix~\ref{sec:n2n_den_unet_fixed} we also tried an initial learning of $1.6 \times 10^{-4}$ for $N=300$k, which performed slightly better.

We apply the same scheme for learning rate decay as described in Appendix~\ref{sec:n2n_den_unet_fixed}, where the self-supervised validation loss is computed between the noisy training target and the network output. We did not observe overfitting in the noisier2noise setup since the additional noise on the noisier training inputs is re-sampled in every epoch.

As in Appendix~\ref{sec:n2n_den_unet_fixed} we picked the best out of up to three training runs from different random initializations to account for the variance between runs for small and medium sized training sets.

\subsubsection{Neighbor2neighbor self-supervised denoising with a U-net}
\label{sec:neighbor2neighbor}

To obtain the results for neighbor2neighbor in Figure~\ref{fig:selfsup_sup_den} we trained the same U-net with 128 channels in its first layer, corresponding to $7.4$M parameters as in the experiments for noisier2noise and noise2noise. As before, we set the batch size to 1 and use the Adam optimizer with $\beta_1 = 0.9$ and $\beta_2 = 0.999$.

In neighbor2neighbor training the given noisy example is subsampled into two images that are used as training input and target. We follow the approach described in  \citet{huangNeighbor2NeighborSelfSupervisedDenoising2021} and we refer to their description for more details on the subsampling, the loss function and the regularization used during training. 

As proposed by \citet{huangNeighbor2NeighborSelfSupervisedDenoising2021} we fix the number of training epochs to 100 for all training set sizes and halved the learning rate once in every 20 epochs. 
However, with the total number of epochs fixed the convergence of the network as the training set size changes might depend on the choice of the initial learning rate.
Hence, instead of fixing the initial learning rate as for noise2noise and noisier2noise in Appendix~\ref{sec:noisier2noise} and~\ref{sec:n2n_den_unet_fixed}, we ablate for a good choice of initial learning rate in the range of $10^{-5}$ to $10^{-2}$ for every training set size.
Contrary to noise2noise and noisier2noise, for the neighbor2neighbor loss computing a self-supervised validation loss based only on a noisy validation set did not correlate well with the supervised validation loss. 
Hence, we assumed access to a version of our validation set that contains ground truth images to evaluate the results of the learning rate ablation study and to pick the best performing model over the training epochs. 
The curve in Figure~\ref{fig:selfsup_sup_den} contains only the performance of the run with the best learning rate per training set size.

Same as in our noise2noise and noisier2noise experiments we picked the best out of up to three training runs from different random initializations to account for the variance between runs for small and medium sized training sets. 

\begin{figure*}[]

    \definecolor{darkgray176}{RGB}{176,176,176}
    \definecolor{green}{RGB}{0,128,0}
    \definecolor{lightgray204}{RGB}{204,204,204}

		\centering
		\begin{tikzpicture}
			\begin{groupplot}[
				group style={
					group size=1 by 1,
					horizontal sep=0.0cm,
				},
				x tick label style={font=\normalsize}, 
				y tick label style={font=\normalsize},
				width=9cm,
				height=\plotheight,
				every tick label/.append style={font=\small},
				]
                
                \nextgroupplot[legend style={font=\scriptsize, at={(0.99,0.02)}, anchor=south east,label style = {font=\small}, draw=black,fill=white,legend cell align=left, nodes={inner sep=2pt,text depth=0pt}},ylabel ={PSNR (dB)},xlabel = {Training set size $N$},ylabel style={yshift=-3pt},grid = both, major grid style = {lightgray!25}, minor grid style = {lightgray!25},
                xmin=40, xmax=130000, ymin=29.5, ymax=32.7,xmode=log]

				\addplot [color=\supcolor, line width=\linewidths, mark=\markerthree, mark size=\markersize,error bars/.cd,y dir=both,y explicit] table [x=N, y =psnr, col sep=comma] {./txtfiles/SwinIR/N2NswinIR_sup_test_woDots.txt};

                \addplot [only marks, color=\supcolor, mark=\markerthree, mark size=\markersize, forget plot] table [x=N, y =psnr, col sep=comma] {./txtfiles/SwinIR/N2NswinIR_sup_test_allDots.txt};

				\addplot [color=\selfsupcolorTwo, line width=\linewidths, mark=\markerthree, mark size=\markersize,error bars/.cd,y dir=both,y explicit] table [x=N, y =psnr, col sep=comma] {./txtfiles/SwinIR/N2NswinIR_selfsup25_test_woDots.txt};

                \addplot [only marks, color=\selfsupcolorTwo, mark=\markerthree, mark size=\markersize, forget plot] table [x=N, y =psnr, col sep=comma] {./txtfiles/SwinIR/N2NswinIR_selfsup25_test_allDots.txt};

                \addplot [color=\neighborneighborcolor, line width=\linewidths, mark=\markerthree, mark size=\markersize,error bars/.cd,y dir=both,y explicit] table [x=N, y =psnr, col sep=comma] {./txtfiles/Den_N2N_FixNoise_Sup_c128_woDots.txt};
				
				\addplot [only marks, color=\neighborneighborcolor, mark=\markerthree, mark size=\markersize, forget plot] table [x=N, y =psnr, col sep=comma] {./txtfiles/Den_N2N_FixNoise_Sup_c128_allDots.txt};

				\addplot [color=\noisiernoisecolor, line width=\linewidths, mark=\markerthree, mark size=\markersize,error bars/.cd,y dir=both,y explicit] table [x=N, y =psnr, col sep=comma] {./txtfiles/Den_N2N_FixNoise_SelfSup25_c128_woDots.txt};
				
				\addplot [only marks, color=\noisiernoisecolor, mark=\markerthree, mark size=\markersize, forget plot] table [x=N, y =psnr, col sep=comma] {./txtfiles/Den_N2N_FixNoise_SelfSup25_c128_allDots.txt};
    
                \addlegendentry{SwinIR supervised $\sigma_e$=$0$}; 
				\addlegendentry{SwinIR noise2noise $\sigma_e$=$25$}; 
				\addlegendentry{U-net supervised $\sigma_e$=$0$}; 
				\addlegendentry{U-net noise2noise $\sigma_e$=$25$}; 
            
			\end{groupplot}
		\end{tikzpicture}
		
    \caption{\textbf{Gaussian image denoising with a SwinIR vs. U-net}. While the SwinIR outperforms the U-net in terms of absolute performance, the performance of both models trained in a noise2noise manner approaches the performance when trained in a supervised manner as the number of training images increases. 
    }
    \label{fig:den_swinIR}
\end{figure*}
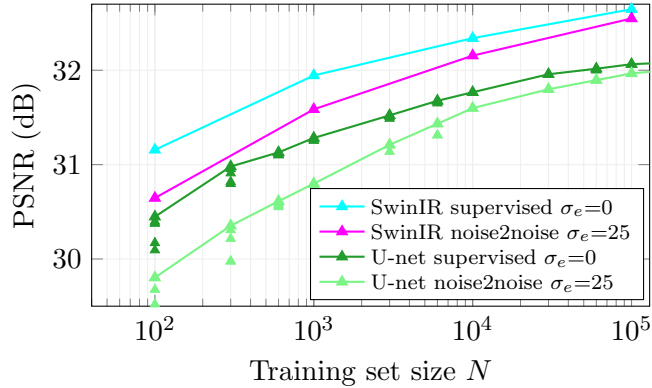

\subsubsection{Noise2noise self-supervised denoising with a SwinIR}
\label{sec:den_swinIR}

In Section~\ref{sec:emp_denoising} Figure~\ref{fig:selfsup_sup_den} we empirically determined the sample complexity of noise2noise self-supervised and supervised Gaussian denoising with a U-net and showed that as the training set size gets large a model trained in a self-supervised manner approaches the performance of a model trained in a supervised manner. 
However, we expect that our results for comparing supervised and self-supervised training schemes translate qualitatively to other network architectures.

In this section we confirm this by re-running our experiments with the SwinIR~\cite{liangSwinIRImageRestoration2021} architecture, a recent state-of-the-art model for image denoising based on the Swin Transformer~\cite{liuSwinTransformer2021}. 

Figure~\ref{fig:den_swinIR} shows the performance of training a SwinIR on the training sets of size $N\in\{0.1,1,10,100\}$ thousand image patches with a supervised and a noise2noise self-supervised loss with noise level on the training targets $\sigma_e=25$.
As expected, the SwinIR outperforms the U-net (curves are taken from Figure~\ref{fig:selfsup_sup_den}) on all training set sizes. However the relative performance gap between the models trained in a supervised and self-supervised way is similar and at 100k training examples the gap is reduced to 0.098/0.097dB for the SwinIR/U-net.

We train the default SwinIR for color image denoising from \citet{liangSwinIRImageRestoration2021} with 11.5M network parameters. The networks are trained with the MSE loss, the Adam optimizer and learning rates $\{2e-4, 1e-4, 8e-5, 8e-5\}$ for training set sizes $N\in\{0.1,1,10,100\}$ thousand training examples respectively. Depending on the training set size, the networks trained with supervised/self-supervised loss are trained for $\{100, 87, 78, 40\}/\{100, 87, 50, 26\}$ epochs, where we need less epochs for self-supervised training due to overfitting similar to what was observed for noise2noise training with the U-net in Figure~\ref{fig:val_self_sup_vs_sup}. The initial learning rate is decayed by a factor of 2 at 60\%, 75\% and 90\% of the total number of training epochs.

\subsection{Real-world camera noise denoising}
\label{sec:den_SIDD}

In this section we provide additional details for the experiments on real-world camera noise denoising with the SID~\cite{abdelhamedHighQualityDenoisingDataset2018} dataset presented in Figure~\ref{fig:den_SIDD}.

We train the same U-net as described in Appendix~\ref{sec:gaus_den}, with 56 channels in the first layer resulting in 1.4M network parameters. 

Each training set consists of non-overlapping patches of size $128 \times 128$ drawn randomly from all available patches. From the 160 scene instances in the medium SIDD training set we use all scenes except scenes \texttt{0199} and \texttt{0200} as the remaining scenes already constitute to 125k patches and the largest training set we consider contains only 100k patches.

For self-supervised training input and target are cropped from the two noisy images given per scene in the medium SID dataset. For supervised training the noisy target is replaced with the ground truth image estimated from all noisy images of this scene as described in~\citet{abdelhamedHighQualityDenoisingDataset2018}.
For the different runs with a fixed training set size displayed in Figure~\ref{fig:den_SIDD}, training patches and network initializations are drawn independently between runs.

Training and testing is performed in the raw-RGB space, i.e., on single-channel images where every square of $2\times 2$ pixels contains one R,G,G and B value. The order of the 4 values varies over the dataset depending on the Bayer pattern of the various cameras used. Given the Bayer pattern of an image we rearrange the squares of size $2\times 2$ pixels to follow the order RGGB. Finally the raw images are packed into a 4-channel representation corresponding to their RGGB values before processed by the network.

All networks are trained with the Adam optimizer with an initial learning rate of 0.00032, which is halved at a fixed set of epochs depending on the training set size.

\section{Additional material for Section~\ref{sec:emp_MRI}: Self-supervised compressive sensing}
\label{sec:exp_details_mri}


\subsection{Compressive sensing for natural images}
\label{sec:details_cs_nat}

For the experiments in Section~\ref{sec:warmup_mri} we train the same U-net as described in Appendix~\ref{sec:gaus_den}, but with 3 blocks in the encoder/decoder and with 24 channels in the first layer resulting in about 1M network parameters, and unlike in Appendix~\ref{sec:gaus_den} the network directly predicts the target measurement $\vy'$ instead of the residual $\vy-\vy'$ during training.
The input to the network consists of a complex-valued coarse reconstruction $\inv{\mF}\vy$ in the image domain, 
which we normalize to zero-mean and standard deviation one before handed to the U-net. 
Further, the real and complex part are forwarded to the U-net as two separate input channels. The output of the network is denormalized before transformed to the frequency domain in which the (masked in case of self-supervised training) $\ell_2$ training loss is computed.
The test performance is evaluated between the absolute value of the network output and the ground truth image.

The networks are trained using the Adam optimizer~\cite{kingmaAdam2014} with an initial learning rate of $10^{-3}$. The networks are evaluated on the validation set in every second epoch and the learning rate is decayed by a factor of 10 if there is no improvement for certain number of consecutive validations. This patience parameter is gradually decreased from 20 for the smallest to 6 for the largest training set size. The training is terminated after training for a few epochs with the minimal learning rate $10^{-5}$. Only the best model according to the validation loss is evaluated on the test set, which we then report in Figure~\ref{fig:selfsup_sup_cs_imagenet}.The validation/test set consists of 80/300 patches. The batch size is set to 1.

All experiments were conducted on a NVIDIA A40 GPU. We measure the time in GPU hours until the best epoch according to the
validation loss resulting in about 500 GPU hours for the experiments in Figure~\ref{fig:selfsup_sup_cs_imagenet}.

\subsection{Compressive sensing accelerated MRI}
\label{sec:details_cs_mri}

The data from the multi-coil fastMRI brain dataset consists of images of different contrasts. 
In our experiments we use the subset of about 55k AXT2-weighted images corresponding to a single contrast. We use 50k of those images to design training sets $\setS_N$ of varying size $N$  with $\setS_i \subset \setS_j$ for $i < j$, 300 for validation and the remaining 4700 for testing. The images for validation and testing are taken from different subjects than for training.

The test performance for a given measurement is evaluated between the magnitude of the complex valued network reconstruction and the ground truth image. 
Note that the estimated sensitivity maps are strictly zero outside the region in which the object that we aim to reconstruct is located.
Since we compute the training loss in the k-space after applying the sensitivity maps to the network output, the network output does not receive any supervision in regions, where the sensitivity maps are zero. 
Hence, in order to compute meaningful test scores on the magnitude of the network output we apply binary masking to remove artifacts in the background of the reconstructed images. The estimated sensitivity maps are normalized in a way such that a binary mask $\mB\in\reals^{n\times n}$ can be obtained as $\mB = \sum_{j=1}^{C} \mS_j^\ast \mS_j$.

\subsubsection{Compressive sensing accelerated MRI with a U-net}

For the results in Figure~\ref{fig:selfsup_sup_mri} we train the same U-net as described in Appendix~\ref{sec:details_cs_nat}, but with 4 blocks in the encoder/decoder and with 64 channels in the first layer resulting in about 31M network parameters. We use the same input/output normalization as in Appendix~\ref{sec:details_cs_nat}.

The networks are trained with the RMSprop optimizer as it is the default in the fastMRI repository~\cite{zbontarFastMRIOpenDataset2018} and with an initial learning rate of $10^{-3}$. The networks are evaluated on the validation set after every epoch and the learning rate is decayed by a factor of 10 if no improvement is observed for 10 consecutive epochs. The training is terminated after training 10 epochs with the minimal learning rate of $10^{-6}$.
The best model according to the validation loss is evaluated on the test set.

All experiments were conducted on a NVIDIA A40 GPU. We measure the time in GPU hours until the best epoch according to the
validation loss resulting in about 600 GPU hours for the experiments presented in Figure~\ref{fig:selfsup_sup_mri}.

\subsubsection{Compressive sensing accelerated MRI with a VarNet}
\label{sec:varnet_mri}

In this section we provide additional results on supervised versus self-supervised compressive sensing MRI, where we replace the U-net used in Figure~\ref{fig:selfsup_sup_mri} with the state-of-the-art end-to-end VarNet~\cite{sriramEndtoEndVariationalNetworks2020}.
As our analysis pertains to comparing different training schemes, we expect our results to translate qualitatively to other types of network architectures.

As Figure~\ref{fig:selfsup_sup_mri_varnet} shows the VarNet outperforms the U-net significantly. However, similar to the U-net the relative performance gap between models trained in a noise2noise self-supervised way and trained in a supervised way is small already for small training set sizes. 

For training the VarNets we follow the implementation from the official fastMRI repository~\cite{zbontarFastMRIOpenDataset2018}\footnote{\url{https://github.com/facebookresearch/fastMRI}}, where the VarNet consists of a cascade of U-nets with intermediate data consistency steps and an additional U-net that estimates the sensitivity maps used during the data consistency steps and to compute the input to the first U-net.

We use a VarNet with 6 cascades, where each U-net consists of 4 blocks in the encoder and decoder and 14 channels in the top layer resulting in a total of about 9.4M network parameters. 
The VarNet outputs an estimate $\hat \vy_{j,\text{full}}$ of the fully sampled k-space for $j=1,\ldots,C$ coils. 
We us the pre-computed sensitivity maps $\mS_j$ estimated with ESPIRiT~\cite{ueckerESPIRiTEigenvalueApproach2014} instead of the sensitivity maps estimated by the network to compute the final reconstructed image as 
$$\hat\vx=\left | \sum_{j=1}^{C} \mS_j^\ast \inv{\mF} \hat \vy_{j,\text{full}} \right |.$$ 
This is analogous to how we compute the reference ground truth images; See Section~\ref{sec:emp_results_mri}.

Depending on the training set size $N\in\{50, 100, 500, 1000, 5000, 10000, 50000\}$ the model is trained for $N_{\text{ep}} \in \{150, 130, 125, 120, 90, 75, 55\}$ epochs.
All networks are trained with the RMSprop optimizer with an initial learning rate of $10^{-3}$, which is decayed by a factor of 10 once for the last 15 and again for the last 5 training epochs.

\begin{figure*}

    \definecolor{darkgray176}{RGB}{176,176,176}
    \definecolor{green}{RGB}{0,128,0}
    \definecolor{lightgray204}{RGB}{204,204,204}

		\centering
		\begin{tikzpicture}
			\begin{groupplot}[
				group style={
					group size=1 by 1,
					horizontal sep=0.0cm,
				},
				x tick label style={font=\normalsize}, 
				y tick label style={font=\normalsize},
				height=\plotheight,
				every tick label/.append style={font=\small},
				]
                
                \nextgroupplot[legend style={font=\scriptsize, at={(0.99,0.02)}, anchor=south east,label style = {font=\small}, draw=black,fill=white,legend cell align=left, nodes={inner sep=2pt,text depth=0pt}},ylabel ={SSIM},xlabel = {Training set size $N$},ylabel style={yshift=-3pt},grid = both, major grid style = {lightgray!25}, minor grid style = {lightgray!25},
                xmin=40, xmax=60000, ymin=0.82, ymax=0.945,ymode=log, ytick={0.81,0.84,0.87,0.90,0.93}, yticklabels={$0.81$,$0.84$,$0.87$,$0.90$,$0.93$},width=\plotwidth,xmode=log]
                
				\addplot [color=\supcolor, line width=\linewidths, mark=\markerthree, mark size=\markersize,error bars/.cd,y dir=both,y explicit] table [x=N, y =SSIM, col sep=comma] {./txtfiles/mri/N2Nmri_sup_ssim_varnet_woDots.txt};

                \addplot [only marks, color=\supcolor, mark=\markerthree, mark size=\markersize, forget plot] table [x=N, y =SSIM, col sep=comma] {./txtfiles/mri/N2Nmri_sup_ssim_varnet_allDots.txt};

				\addplot [color=\selfsupcolorTwo, line width=\linewidths, mark=\markerthree, mark size=\markersize,error bars/.cd,y dir=both,y explicit] table [x=N, y =SSIM, col sep=comma] {./txtfiles/mri/N2Nmri_selfsup_ssim_varnet_woDots.txt};

                \addplot [only marks, color=\selfsupcolorTwo, mark=\markerthree, mark size=\markersize, forget plot] table [x=N, y =SSIM, col sep=comma] {./txtfiles/mri/N2Nmri_selfsup_ssim_varnet_allDots.txt};

				\addplot [color=\neighborneighborcolor, line width=\linewidths, mark=\markerthree, mark size=\markersize,error bars/.cd,y dir=both,y explicit] table [x=N, y =ssim, col sep=comma] {./txtfiles/N2Nmri_sup_ssim_RandInput_woDots.txt};

                \addplot [only marks, color=\neighborneighborcolor, mark=\markerthree, mark size=\markersize, forget plot] table [x=N, y =ssim, col sep=comma] {./txtfiles/N2Nmri_sup_ssim_RandInput_allDots.txt};

				\addplot [color=\noisiernoisecolor, line width=\linewidths, mark=\markerthree, mark size=\markersize,error bars/.cd,y dir=both,y explicit] table [x=N, y =ssim, col sep=comma] {./txtfiles/N2Nmri_selfsup_ssim_RandInput_woDots.txt};

                \addplot [only marks, color=\noisiernoisecolor, mark=\markerthree, mark size=\markersize, forget plot] table [x=N, y =ssim, col sep=comma] {./txtfiles/N2Nmri_selfsup_ssim_RandInput_allDots.txt};

    
                \addlegendentry{VarNet supervised $\mu$=1}; 
                \addlegendentry{VarNet self-sup. $\mu$=0.33}; 
                \addlegendentry{U-net supervised $\mu$=1}; 
                \addlegendentry{U-net self-sup. $\mu$=0.33}; 
				
			\end{groupplot}
		\end{tikzpicture}
		
    \caption{\textbf{Compressive sensing MRI with a VarNet vs. U-net.} 
    While the VarNet outperforms the U-net in absolute numbers, for both types of network architecture the performance gap between supervised and self-supervised trained models vanishes already at small training set sizes. 
    }
    \label{fig:selfsup_sup_mri_varnet}
\end{figure*}
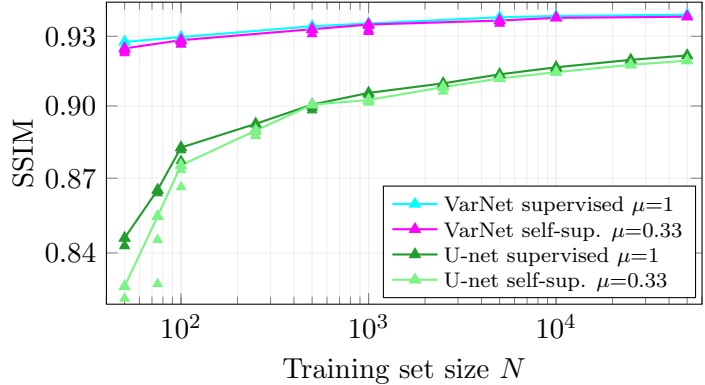

\section{Details for computing estimates of the variance of the stochastic gradient}
\label{sec:details_stoc_grad}

In this section we provide additional details on the histograms of the normalized variance of the stochastic gradients, i.e., normalized estimates of the MSE $\norm[2]{\nabla \lossfuncss( f_\vtheta(\vy_i) , \vy'_i  ) - \nabla R(\vtheta)}^2$ after one epoch of training presented in Figures~\ref{fig:selfsup_sup_den},~\ref{fig:den_SIDD},~\ref{fig:selfsup_sup_cs_imagenet} and~\ref{fig:selfsup_sup_mri}.

We consider the gradients with respect to the network weights $\vtheta$ after training one epoch with the supervised loss. Note that the gradient distribution changes at each epoch. 

We compute stochastic gradients $\nabla \lossfuncss( f_\vtheta(\vy_i) , \vy'_i  )$ for $i=1,\ldots,N$ with $N=10000$ images. The gradient of the risk $\nabla R(\vtheta)$ is estimated with the empirical gradient of the supervised loss, i.e.,
\begin{align*}
    \nabla \hat R(\vtheta) =  \frac{1}{N} \sum_{i=1}^N \nabla \lossfunc( f_\vtheta(\vy_i) , \vx_i  ).
\end{align*}
Finally, the normalized variance of one the $i$-th stochastic gradient is computed as 
\begin{align*}
   \frac{\norm[2]{\nabla \lossfuncss( f_\vtheta(\vy_i) , \vy'_i  ) - \nabla \hat R(\vtheta)}^2}{\norm[2]{\nabla \hat R(\vtheta)}^2 }.
\end{align*}

\end{document}